\documentclass[sigconf,nonacm]{acmart}

\usepackage{amsmath,amsthm}
\usepackage{amssymb}
\usepackage{mathtools}
\usepackage{stmaryrd}          % \llbracket, \rrbracket (semantic brackets)
\usepackage{pifont}             % \ding{51} checkmark, \ding{55} cross
\usepackage{algorithm}
\usepackage{algpseudocode}
\usepackage{listings}
\usepackage{xcolor}
\usepackage{booktabs}
\usepackage{multirow}
\usepackage{subcaption}
\usepackage{hyperref}
\usepackage{cleveref}
\usepackage{balance}
\usepackage{graphicx}
\usepackage{enumitem}           % enumerate with label= option
\usepackage{tikz}               % tikzpicture for diagrams

\newtheorem{theorem}{Theorem}[section]
\newtheorem{lemma}[theorem]{Lemma}

\newtheorem{proposition}[theorem]{Proposition}
\theoremstyle{definition}
\newtheorem{definition}[theorem]{Definition}

\theoremstyle{remark}
\newtheorem{remark}[theorem]{Remark}

\newcommand{\cmark}{\ding{51}}                   % checkmark
\newcommand{\xmark}{\ding{55}}                   % cross mark
\renewcommand{\leadsto}{\rightsquigarrow}        % reachability arrow
\DeclareMathOperator*{\argmax}{arg\,max}         % argmax operator
\newcommand{\SQ}{\sqsubseteq}                   % lattice ordering
\newcommand{\ADG}{\mathcal{G}}                   % Agent Dependency Graph
\newcommand{\CAP}{\mathsf{Cap}}                  % Capability function
\newcommand{\SkillSet}{\mathcal{S}}              % Set of skills
\title{Formal Analysis and Supply Chain Security\\for Agentic AI Skills}

\author{Varun Pratap Bhardwaj}
\email{varun.pratap.bhardwaj@gmail.com}
\affiliation{%
  \institution{Independent Research}
  \position{Solution Architect}
  \country{India}
}

\begin{abstract}
The rapid proliferation of agentic AI skill ecosystems---exemplified by
OpenClaw (228{,}000 GitHub stars as of February~2026) and Anthropic
Agent Skills (75{,}600 stars at the same date)---has introduced a
critical supply chain attack surface.
The ClawHavoc campaign (January--February~2026) published hundreds of
malicious skills to the OpenClaw marketplace, while
MalTool synthesised 1{,}300 standalone malicious tools and 5{,}727 tools
with embedded malicious behaviour that evade conventional detection. In
response, a dozen reactive security tools emerged, yet the ones we
surveyed rely on heuristic methods that provide no formal guarantees.

We present \textbf{SkillFortify}, a formal analysis framework for
agent skill supply chains, with six contributions:
(1)~the \emph{DY-Skill} attacker model, a Dolev--Yao adaptation to
the five-phase skill lifecycle with a maximality proof;
(2)~a sound static analysis framework grounded in abstract interpretation;
(3)~a capability model with a static confinement proof, and a runtime
enforcement design;
(4)~an Agent Dependency Graph with SAT-based resolution and lockfile
semantics;
(5)~a trust score algebra with formal monotonicity; and
(6)~\textsc{SkillFortifyBench}, a 540-skill benchmark.
SkillFortify achieves 96.15\% F1 with
100\%~precision and a 0\%~false positive rate on this 540-skill
benchmark, while SAT-based resolution handles 1{,}000-node graphs in a
median 27\,ms.
We also report a negative result: information flow analysis detects no
skill that pattern matching alone does not, so on this corpus its
contribution to coverage is zero.
\end{abstract}

\keywords{%
  supply chain security,
  agent skills,
  formal analysis,
  capability-based security,
  abstract interpretation,
  Dolev--Yao model,
  AI safety%
}

\begin{document}

\maketitle

% Sections
% === BEGIN sections/01-introduction.tex ===
% =============================================================================
% Section 1: Introduction
% =============================================================================

\section{Introduction}
\label{sec:introduction}

In late January~2026, the researcher \emph{depthfirst} disclosed
CVE-2026-25253~\cite{cve-2026-25253}, a one-click remote code execution
vulnerability in OpenClaw reached through authentication-token
exfiltration---among the first Common Vulnerabilities and Exposures
identifiers assigned to an agentic AI system. Within days, the \emph{ClawHavoc} campaign~\cite{clawhavoc2026}
published hundreds of malicious skills to the OpenClaw
marketplace, deploying the AMOS credential stealer to developer
workstations.  Koi Security identified 341 malicious skills among the
2{,}857 then listed, 335 of them from this single campaign, and reported
824 by mid-February as the marketplace grew beyond 10{,}700;
Antiy~CERT counted 1{,}184 malicious packages in the historical
repository.  Concurrently, MalTool
synthesised 1{,}300 standalone malicious tools and 5{,}727 tools with
embedded malicious behaviour targeting LLM-based
agents~\cite{maltool2026}, demonstrating that VirusTotal fails to
detect the majority of agent-targeted malware.  A large-scale empirical
analysis of 42{,}447~agent skills found that 26.1\%~exhibit at least one
security vulnerability~\cite{agentskillswild2026}.

These incidents are not isolated.  They reflect a structural vulnerability
in the emerging agent skill ecosystem: \emph{developers install and execute
third-party skills with implicit trust and no formal analysis}.  The
agent skill supply chain---spanning authorship, registry publication,
developer installation, runtime execution, and state
persistence---reproduces the trust pathologies of traditional package
ecosystems (npm, PyPI, crates.io) but with amplified risk, because agent
skills execute with broad system privileges in the context of a powerful
language model.

\subsection{The Defense Gap}
\label{sec:defense-gap}

The security community responded rapidly.  Within thirty days of the
ClawHavoc disclosure we counted a dozen reactive security tools on
GitHub, several of them catalogued in the campaign
systematization~\cite{clawhavoc2026}.  The most prominent include:
\begin{itemize}
  \item \textbf{Snyk agent-scan} (${\sim}$1{,}500~stars): Acquired
    Invariant Labs and applies LLM-as-judge combined with heuristic
    rules to scan agent configurations~\cite{snyk2026}.
  \item \textbf{Cisco skill-scanner} (994~stars): Employs YARA pattern
    matching and signature-based detection for known malicious skill
    patterns~\cite{ciscoscanner2026}.
  \item \textbf{ToolShield}~\cite{toolshield2026}: A heuristic defense
    achieving a 30\%~reduction in attack success rate through behavioral
    heuristics.
\end{itemize}

\noindent Despite their utility, \emph{every} existing tool shares a
fundamental limitation: they are heuristic.  They detect known patterns but
cannot prove the absence of malicious behavior.  Cisco's own documentation
for their skill-scanner states explicitly: ``no findings does not mean no
risk''~\cite{ciscoscanner2026}.  The closest academic work toward formal
guarantees is a four-page ICSE-NIER vision paper proposing STPA-based
modeling of tool safety~\cite{verifiabletooluse2026}, which provides
neither an implementation nor empirical evaluation.

This gap is the central problem we address: \emph{the agent skill
ecosystem has attack frameworks but no defense frameworks with formal
guarantees.}

\subsection{Our Approach}
\label{sec:approach}

We present \textbf{SkillFortify}, a formal analysis framework for agent
skill supply chains.  To our knowledge it is the first to combine a
formal threat model, capability-based static analysis, dependency
resolution, and a public benchmark for this domain in one system; the
individual techniques are established, and Section~\ref{sec:related-work}
places each with respect to prior work.
SkillFortify provides static analysis with a stated guarantee---not
heuristic scanning---that skills cannot exceed their declared
capabilities within the modelled domain.  Where
existing tools answer \emph{``did we detect a known bad pattern?''}, SkillFortify
answers \emph{``can this skill access resources beyond what it
declares?''} and provides a machine-checkable certificate when the answer
is~no.

Our approach draws on four decades of formal methods research, adapting
established foundations to the novel domain of agent skill security:
the Dolev--Yao attacker model~\cite{doleyyao1983} for supply chain
threat modeling, abstract interpretation~\cite{cousot1977} for sound
static analysis, capability-based security~\cite{dennis1966,miller2006}
for sandboxing, and SAT-based dependency
resolution~\cite{tucker2007opium} for supply chain management.

\subsection{Contributions}
\label{sec:contributions}

This paper makes six contributions:

\begin{enumerate}
  \item \textbf{DY-Skill attacker model} (\Cref{sec:threat-model}).
    We introduce a novel adaptation of the Dolev--Yao
    formalism~\cite{doleyyao1983} to the five-phase agent skill supply
    chain.  The DY-Skill attacker can intercept, inject, synthesize,
    decompose, and replay skill messages.  We prove that this attacker
    is \emph{maximal}: any symbolic attacker on the supply chain can be
    simulated by a DY-Skill trace (\emph{Theorem~\ref{thm:dy-skill-maximality}}).

  \item \textbf{Static analysis with soundness guarantees}
    (\Cref{sec:static-analysis}).  We develop a three-phase static
    analysis framework grounded in abstract interpretation over a
    four-element capability lattice
    $(\{\mathsf{NONE}, \mathsf{READ}, \mathsf{WRITE}, \mathsf{ADMIN}\}, \SQ)$.
    We prove that if our analysis reports a skill as safe, then its
    concrete execution cannot access resources beyond its declared
    capabilities (\emph{Theorem~\ref{thm:analysis-soundness}}).

  \item \textbf{Capability-based sandboxing with confinement proof}
    (\Cref{sec:sandboxing}).  We formalize a capability model for agent
    skills based on the object-capability
    discipline~\cite{miller2006,maffeis2010}.  We prove that in a
    capability-safe execution environment, no skill can acquire authority
    exceeding the transitive closure of its declared capability
    set---the \emph{no authority amplification} property
    (\emph{Theorem~\ref{thm:capability-confinement}}).

  \item \textbf{Agent Dependency Graph with SAT-based resolution}
    (\Cref{sec:dependency-graph}).  We define the Agent Dependency Graph
    $\ADG = (\SkillSet, V, D, C, \CAP)$ extending classical package
    dependency models~\cite{mancinelli2006,tucker2007opium} with
    per-skill capability constraints.  We encode the resolution problem
    as a SAT instance with security bounds and prove that any satisfying
    assignment corresponds to a \emph{secure installation} with a
    deterministic lockfile (\emph{Theorem~\ref{thm:resolution-soundness}}).

  \item \textbf{Trust score algebra with formal propagation}
    (\Cref{sec:trust-algebra}).  We introduce a multi-signal trust
    scoring system with provenance, behavioral, community, and
    historical dimensions.  Trust propagates multiplicatively through
    dependency chains and decays exponentially for unmaintained skills.
    We prove the \emph{monotonicity} property: additional positive
    evidence never reduces a skill's trust score
    (\emph{Theorem~\ref{thm:trust-monotonicity}}).

  \item \textbf{SkillFortifyBench and empirical evaluation}
    (\Cref{sec:evaluation}).  We construct SkillFortifyBench, a 540-skill
    benchmark comprising 270~malicious and 270~benign skills drawn from
    MalTool~\cite{maltool2026}, ClawHavoc~\cite{clawhavoc2026}, and
    curated clean sources.  SkillFortify achieves an F1 score of 96.15\% with
    100\%~precision and 0\%~false positive rate on benign skills.
    SAT-based dependency resolution handles 1{,}000-node graphs in under
    100\,ms.  We evaluate on \textsc{SkillFortifyBench} (540~skills)
    and position our results against reported metrics from existing
    tools, demonstrating that combined pattern matching and
    information flow analysis detects attack patterns invisible to
    pure pattern matching alone.
\end{enumerate}

\subsection{Paper Organization}
\label{sec:organization}

\Cref{sec:background} surveys the agent skill ecosystem, attack
landscape, existing defenses, and formal methods foundations.
\Cref{sec:threat-model} presents the DY-Skill attacker model and
threat taxonomy.  \Cref{sec:static-analysis} develops the static
analysis framework.  \Cref{sec:sandboxing} formalizes
capability-based sandboxing.  \Cref{sec:dependency-graph} defines
the Agent Dependency Graph and lockfile semantics.
\Cref{sec:trust-algebra} introduces the trust score algebra.
\Cref{sec:implementation} describes the SkillFortify implementation.
\Cref{sec:evaluation} presents the empirical evaluation.
\Cref{sec:related-work} positions our contributions against related
work.
\Cref{sec:discussion} discusses limitations and future directions,
and \Cref{sec:conclusion} concludes.  Full proofs
appear in \Cref{sec:appendix-proofs}, benchmark details in
\Cref{sec:appendix-benchmark}, and the lockfile schema in
\Cref{sec:appendix-lockfile}.

% === END sections/01-introduction.tex ===

% === BEGIN sections/02-background.tex ===
% =============================================================================
% Section 2: Background
% =============================================================================

\section{Background}
\label{sec:background}

This section surveys the agent skill ecosystem (\S\ref{sec:bg-ecosystems}),
the attack landscape (\S\ref{sec:bg-attacks}), existing defense
tools (\S\ref{sec:bg-defenses}), traditional supply chain security
foundations (\S\ref{sec:bg-traditional}), and the formal methods
we build upon (\S\ref{sec:bg-formal}).
A dedicated comparison with related work appears in
\Cref{sec:related-work}.

% ---------------------------------------------------------------------------
% 2.1 Agent Skill Ecosystems
% ---------------------------------------------------------------------------
\subsection{Agent Skill Ecosystems}
\label{sec:bg-ecosystems}

The year 2025--2026 witnessed explosive growth in agent skill platforms.
An \emph{agent skill} is a third-party extension---typically a code
module, configuration file, or API endpoint---that augments an LLM-based
agent with domain-specific capabilities such as web search, file
manipulation, database querying, or interaction with external services.

Three ecosystems dominate the landscape:

\paragraph{OpenClaw.}
With 228{,}000~GitHub stars as of February~2026, OpenClaw is the largest
open-source agent framework.  Its marketplace operates on an
\emph{install-and-run} trust model: developers discover skills through a
web interface or CLI, install them with a single command, and skills
execute with the agent's full system privileges.  No formal review,
signing, or capability declaration is required for publication.

\paragraph{Anthropic Agent Skills.}
Anthropic's official skill repository (75{,}600~stars) provides curated
skills for Claude Code and Claude Desktop.  Skills are defined as
YAML/Markdown files in \texttt{.claude/skills/} directories and may
include shell commands, code blocks, and instructions.  While Anthropic
maintains a moderation pipeline, no formal capability model constrains
skill behavior at runtime.

\paragraph{Model Context Protocol (MCP).}
The Model Context Protocol~\cite{mcp2025} standardizes how agents
interact with external data sources and tools through JSON-RPC servers.
MCP servers declare tool schemas (input/output types) but not
\emph{capability requirements}---a server that claims to ``search files''
may also execute arbitrary shell commands, and the protocol provides
no mechanism to detect or prevent this.

\medskip
\noindent
All three ecosystems share a common structural weakness: the absence of
a \emph{formal capability model} that constrains what a skill can do to
what it declares.  This is the gap our work fills.

% ---------------------------------------------------------------------------
% 2.2 Attack Landscape
% ---------------------------------------------------------------------------
\subsection{Attack Landscape}
\label{sec:bg-attacks}

Five incidents and studies define the attack landscape as of
February~2026.
Early work on LLM platform plugin security~\cite{llmplatformsecurity2023}
identified many of the attack vectors that have since materialized at
scale, and recent systematic analyses of the Model Context
Protocol~\cite{mcpsecurity2025} confirm that these vulnerabilities
persist in modern agent architectures:

\paragraph{CVE-2026-25253 (fixed in 2026.1.29, January~30, 2026).}
Among the first CVEs assigned to an agentic AI system.  OpenClaw's
control interface accepted a \texttt{gatewayUrl} from a query string and
opened a WebSocket connection to it without confirmation, transmitting
the stored authentication token.  An attacker who induced a single click
recovered that token and reached remote code execution through the
victim's own gateway, including on loopback-bound instances, since the
browser initiated the outbound connection.  CVSS~3.1 8.8, CWE-669;
credited to \emph{depthfirst}~\cite{cve-2026-25253}.

This vulnerability is not itself a supply chain attack: it is an
origin-validation failure in the runtime's control plane.  We cite it
because it establishes that the capabilities a skill runtime holds---an
authenticated gateway able to change tool policy and execute
commands---are worth attacking, which is the precondition for the
supply chain attacks that followed.

\paragraph{ClawHavoc Campaign (Late January--Early February~2026).}
A coordinated supply chain attack documented in the ``SoK: Agentic
Skills in the Wild'' systematization~\cite{clawhavoc2026}.  Attackers
published hundreds of malicious skills to the OpenClaw marketplace,
deploying the AMOS credential stealer.  The paper identifies seven
distinct skill design patterns exploited by the campaign and categorizes
the full attack lifecycle.

\paragraph{MalTool (February~12, 2026).}
\citet{maltool2026} used a coding LLM to synthesise 1{,}300 standalone
malicious tools and 5{,}727 real-world tools with embedded malicious
behaviours targeting LLM-based agents.  Their benchmark
demonstrated that VirusTotal---the de facto standard for malware
detection---fails to identify the majority of agent-targeted malware,
underscoring the inadequacy of signature-based approaches for this
domain.  Complementary MCP-specific attack research---MCP-ITP~\cite{mcpitp2026}
(automated implicit tool poisoning) and MCPTox~\cite{mcptox2025}
(benchmarking tool poisoning on real-world MCP servers)---demonstrates
that protocol-level vulnerabilities compound the skill-level threat.

\paragraph{Agent Skills in the Wild (January~2026).}
A large-scale empirical study scanned 42{,}447~agent skills across
multiple registries~\cite{agentskillswild2026}.  Key findings:
26.1\%~of skills exhibit at least one security vulnerability,
14~distinct vulnerability patterns were identified, and their
SkillScan tool achieves 86.7\%~precision---leaving a 13.3\%~gap
that formal methods can address.

\paragraph{Malicious Agent Skills in the Wild (February~2026).}
A large-scale registry scan covering 98{,}380~skills across multiple
agent platforms identified 157~confirmed malicious
entries~\cite{maliciousagentskills2026}, independently corroborating
the vulnerability rates reported by Park et
al.~\cite{agentskillswild2026} on a substantially larger corpus.

\paragraph{Multi-agent data leakage.}
OMNI-LEAK~\cite{omnileak2026} demonstrates that multi-agent
architectures create implicit information channels through which
sensitive data can leak across trust boundaries---even when individual
agents appear well-isolated.  This motivates our information flow
analysis component (\Cref{sec:analysis}).

\paragraph{Agent reliability.}
Lee~\cite{capableunreliable2026} provide empirical evidence
that LLM-based agents exhibit stochastic path deviation even on
well-defined tasks, confirming that the combination of unreliable
agents and unverified skills creates compounding risk.

\paragraph{Malice in Agentland (October~2025).}
A study demonstrating that poisoning just 2\%~of an agent's execution
trace is sufficient to achieve 80\%~attack success in
multi-agent systems~\cite{maliceagentland2025}.  This validates the
threat model of a supply chain attacker who need only compromise a
single skill in a dependency chain to affect an entire agent pipeline.

% ---------------------------------------------------------------------------
% 2.3 Existing Defenses
% ---------------------------------------------------------------------------
\subsection{Existing Defenses}
\label{sec:bg-defenses}

All existing defenses are heuristic.  We survey the most prominent.

\paragraph{Snyk agent-scan ($\sim$1{,}500 stars).}
The most well-funded entrant, following Snyk's acquisition of Invariant
Labs in early 2026~\cite{snyk2026}.  Applies LLM-as-judge combined with
hand-crafted rules to evaluate agent configurations and skill manifests.
Strengths: deep CI/CD integration, developer-friendly UX.
Limitation: heuristic rules cannot prove absence of malicious behavior;
LLM judges are themselves susceptible to adversarial
inputs~\cite{maltool2026}.

\paragraph{Cisco skill-scanner (994 stars).}
Employs YARA pattern matching and regex-based signature detection for
known malicious skill patterns~\cite{ciscoscanner2026}.  The
documentation explicitly states: ``no findings does not mean no risk.''
While effective for known patterns, the approach cannot generalize to
novel attack vectors.

\paragraph{ToolShield~\cite{toolshield2026}.}
A heuristic defense from the Dawn Song and Bo Li groups achieving a
30\%~reduction in attack success rate.  Uses behavioral heuristics
derived from the MalTool benchmark.  Like all heuristic approaches,
it cannot provide formal guarantees about the completeness of its
detection.

\paragraph{MCPShield~\cite{mcpshield2026}.}
Implements \texttt{mcp.lock.json} with SHA-512 content hashes for
tamper detection of MCP server configurations.  While MCPShield's
lockfile provides \emph{integrity verification} (detecting
unauthorized modifications), it does not perform behavioral analysis,
dependency resolution, or capability verification.
\textsc{SkillFortify}'s lockfile (\Cref{sec:appendix-lockfile}) extends
beyond hash-based tamper detection: each locked entry records the
SAT-based resolution result with version constraints \emph{and}
security bounds, the computed trust score, and the formal analysis
status---making the lockfile a formally-proven satisfying assignment,
not merely an integrity manifest.

\paragraph{Towards Verifiably Safe Tool Use~\cite{verifiabletooluse2026}.}
The closest academic work to our approach: a four-page ICSE-NIER~'26
vision paper proposing STPA-based formal modeling of tool safety
properties.  While the paper correctly identifies the need for formal
methods, it provides neither a working implementation, a threat model,
nor empirical evaluation.  Our work realizes the vision this paper
articulates, extending it with a complete formal framework, five
theorems, a working tool, and a 540-skill benchmark.

\paragraph{Summary.}
\Cref{tab:defense-comparison} compares existing defenses.  SkillFortify is
the first to provide formal guarantees---soundness, confinement, and
resolution correctness---while also implementing a practical tool.

\begin{table}[t]
  \centering
  \caption{Comparison of agent skill security tools.  SkillFortify is the first
    to provide formal guarantees (soundness, confinement, resolution
    correctness).}
  \label{tab:defense-comparison}
  \small
  \begin{tabular}{lccccc}
    \toprule
    \textbf{Tool} & \textbf{Formal} & \textbf{Dep.} & \textbf{Lock-} & \textbf{Trust} & \textbf{SBoM} \\
                   & \textbf{Model}  & \textbf{Graph} & \textbf{file} & \textbf{Score} & \\
    \midrule
    Snyk agent-scan      & \ding{55} & \ding{55} & \ding{55} & \ding{55} & \ding{55} \\
    Cisco skill-scanner  & \ding{55} & \ding{55} & \ding{55} & \ding{55} & \ding{55} \\
    ToolShield           & \ding{55} & \ding{55} & \ding{55} & \ding{55} & \ding{55} \\
    ICSE-NIER '26        & Partial   & \ding{55} & \ding{55} & \ding{55} & \ding{55} \\
    mcp-audit            & \ding{55} & \ding{55} & Partial   & \ding{55} & Partial \\
    MCPShield            & \ding{55} & \ding{55} & Partial   & \ding{55} & \ding{55} \\
    \midrule
    \textbf{SkillFortify}      & \ding{51} & \ding{51} & \ding{51} & \ding{51} & \ding{51} \\
    \bottomrule
  \end{tabular}
\end{table}

% ---------------------------------------------------------------------------
% 2.4 Traditional Supply Chain Security
% ---------------------------------------------------------------------------
\subsection{Traditional Supply Chain Security}
\label{sec:bg-traditional}

Agent skill security inherits lessons---and gaps---from four decades of
software supply chain research.

\paragraph{Package manager security.}
Duan et al.~\cite{duan2021ndss} systematized supply chain attacks
on npm, PyPI, and RubyGems, identifying 339~malicious packages through
metadata, static, and dynamic analysis.  Ohm et al.~\cite{ohm2020}
catalogued ``backstabber'' packages, and Ladisa et
al.~\cite{ladisa2023taxonomy} proposed a comprehensive taxonomy of
open-source supply chain attacks.  Gibb et al.~\cite{gibb2026package}
introduced a \emph{package calculus} providing a unified formalism
across ecosystem boundaries.  We extend this line of work to agent
skills, where the threat surface is amplified by LLM integration
and the absence of established security norms.

\paragraph{SLSA Framework.}
Supply-chain Levels for Software Artifacts~\cite{slsa2023} defines
four graduated trust levels (L1--L4) based on build provenance and
integrity guarantees.  Our trust score algebra
(\Cref{sec:trust-algebra}) adapts the SLSA graduation model to
agent skills, mapping L1 (basic provenance) through L4 (formal
verification) to concrete, measurable criteria.

\paragraph{Sigstore and code signing.}
Sigstore~\cite{sigstore2022} provides keyless code signing for
open-source packages.  While we do not implement signing in this
work (deferred to V2), our trust score model explicitly incorporates
signing status as a provenance signal, and our lockfile format
includes fields for cryptographic attestations.

\paragraph{CycloneDX and SBoM.}
The CycloneDX specification~\cite{cyclonedx2023} standardizes Software
Bills of Materials, recently extended to AI components
(AI-BOM)~\cite{cyclonedx-aibom}.  SkillFortify generates Agent Skill Bills
of Materials (ASBoM) in CycloneDX format, enabling integration with
existing enterprise compliance pipelines.

\paragraph{NIST and regulatory frameworks.}
The NIST AI Risk Management Framework~\cite{nistai2023} and the
EU AI Act~\cite{euaiact2024} both address third-party AI component
risks.  The EU AI Act Article~17 requires providers of high-risk AI
systems to implement supply chain risk management.  SkillFortify's formal
verification, lockfile semantics, and ASBoM output provide concrete
mechanisms for regulatory compliance.

% ---------------------------------------------------------------------------
% 2.5 Formal Methods Foundations
% ---------------------------------------------------------------------------
\subsection{Formal Methods Foundations}
\label{sec:bg-formal}

Our framework synthesizes five established formal methods traditions.

\paragraph{Dolev--Yao model.}
Dolev and Yao~\cite{doleyyao1983} introduced the standard symbolic
attacker model for security protocol analysis.  The attacker controls
the network and can intercept, inject, synthesize (construct new
messages from known components), and decompose (extract components
from messages) messages.  Cervesato~\cite{cervesato2001} proved that
the DY intruder is the \emph{most powerful} attacker in the symbolic
model---any attack achievable by any symbolic adversary can be
simulated by a DY trace.  We adapt this model to the agent skill supply
chain (\Cref{sec:threat-model}), introducing the DY-Skill attacker
with a fifth operation (replay) and a ``perfect sandbox assumption''
analogous to the original perfect cryptography assumption.

\paragraph{Abstract interpretation.}
Cousot and Cousot~\cite{cousot1977} introduced abstract interpretation
as a theory of sound approximation of program semantics.  A concrete
domain $(C, \subseteq)$ is connected to an abstract domain
$(A, \SQ)$ through a Galois connection
$(\alpha, \gamma)$, and abstract fixpoint computation provides a
sound over-approximation of concrete behavior.  The Astr\'{e}e
analyzer~\cite{astree2005} demonstrated that abstract interpretation
can prove absence of runtime errors in safety-critical avionics
software.  We apply this framework to agent skill analysis
(\Cref{sec:static-analysis}), using a four-element capability
lattice as our abstract domain.

\paragraph{Capability security.}
Dennis and Van Horn~\cite{dennis1966} introduced capabilities as
unforgeable references coupling resource designations with access
rights.  Miller~\cite{miller2006} extended this to the
object-capability (ocap) model, establishing the Principle of Least
Authority (POLA).  Maffeis, Mitchell, and Taly~\cite{maffeis2010}
proved the capability safety theorem: in a capability-safe language,
the reachability graph of object references is the \emph{only}
mechanism for authority propagation, guaranteeing that untrusted
code cannot exceed the authority explicitly delegated to it.  Our
sandboxing model (\Cref{sec:sandboxing}) instantiates these results
for agent skills.

\paragraph{SAT-based dependency resolution.}
Mancinelli et al.~\cite{mancinelli2006} proved that package
dependency resolution is NP-complete.  Tucker et
al.~\cite{tucker2007opium} introduced OPIUM, which encodes
dependency resolution as a SAT problem amenable to efficient solving
via CDCL (Conflict-Driven Clause Learning).  Di Cosmo and
Vouillon~\cite{dicosmo2011} verified graph transformations for
co-installability analysis in Coq.  We extend this line of work
(\Cref{sec:dependency-graph}) by adding per-skill capability
constraints to the SAT encoding, producing dependency resolutions
that satisfy both version constraints \emph{and} security bounds.

\paragraph{Trust management.}
Blaze, Feigenbaum, and Lacy~\cite{blaze1996policymaker} introduced
PolicyMaker, establishing the assertion monotonicity property: adding
credentials never reduces compliance.  The KeyNote
system~\cite{keynote1999} extended this with multi-valued compliance
and delegation filters.  Our trust score algebra
(\Cref{sec:trust-algebra}) draws on these foundations, implementing
monotonic trust propagation through dependency chains with
exponential decay for unmaintained skills.

% === END sections/02-background.tex ===

% === BEGIN sections/03-threat-model.tex ===
% ===========================================================================
% Section 3: Formal Threat Model for Agent Skill Supply Chains
% ===========================================================================

\section{Formal Threat Model}
\label{sec:threat-model}

We develop a formal threat model for the agent skill supply chain by
(1)~defining the supply chain as a structured tuple of interacting
entities, (2)~enumerating attack classes with phase-applicability
mappings, and (3)~introducing the \emph{DY-Skill} attacker
model---a novel adaptation of the Dolev--Yao symbolic
attacker~\cite{doleyyao1983} to the agent skill domain.

% ---------------------------------------------------------------------------
\subsection{Agent Skill Supply Chain}
\label{sec:supply-chain}

An \emph{agent skill supply chain} is a tuple
$\mathcal{SC} = (\mathcal{A}, \mathcal{R}, \mathcal{D}, \mathcal{E})$
whose components are defined as follows.

\begin{definition}[Agent Skill Supply Chain]
\label{def:supply-chain}
Let
\begin{itemize}
  \item $\mathcal{A}$ be a finite set of \emph{skill authors} who produce
        skill packages,
  \item $\mathcal{R}$ be a finite set of \emph{registries} that store and
        distribute skill packages,
  \item $\mathcal{D}$ be a finite set of \emph{developers} (or agents) that
        install and configure skills, and
  \item $\mathcal{E}$ be a finite set of \emph{execution environments} in
        which skills run.
\end{itemize}
A skill traverses $\mathcal{SC}$ through a five-phase lifecycle
\[
  \pi = \langle\textsc{Install},\; \textsc{Load},\; \textsc{Configure},\;
  \textsc{Execute},\; \textsc{Persist}\rangle,
\]
formally an ordered sequence of phases
$\pi_1 \prec \pi_2 \prec \cdots \prec \pi_5$.
\end{definition}

Each phase exposes a distinct attack surface:

\smallskip
\noindent\textbf{Phase 1: \textsc{Install}.}
A skill package is fetched from a registry $r \in \mathcal{R}$ and
placed in the local environment.  The developer trusts the registry
to serve the requested skill with the correct name and version.
\emph{Attack surfaces}: name confusion (typosquatting, namespace
squatting), dependency confusion, registry compromise.

\smallskip
\noindent\textbf{Phase 2: \textsc{Load}.}
The skill manifest is parsed, dependencies resolved, and code modules
imported into the agent's runtime.  Skill descriptions and metadata
are presented to the agent's language model for tool selection.
\emph{Attack surfaces}: prompt injection via metadata, code
injection in initialization hooks.

\smallskip
\noindent\textbf{Phase 3: \textsc{Configure}.}
The skill is parameterized for a specific agent and environment.
Configuration may include API keys, endpoint URLs, and access
scopes.  \emph{Attack surfaces}: over-permissioning, configuration
template injection, capability escalation.

\smallskip
\noindent\textbf{Phase 4: \textsc{Execute}.}
The skill runs within the agent's execution context, processing
user requests and producing tool outputs.  This is the phase with
the broadest authority: skills may invoke sub-tools, read files,
and make network requests.  \emph{Attack surfaces}: data
exfiltration, privilege escalation, return-value prompt injection.

\smallskip
\noindent\textbf{Phase 5: \textsc{Persist}.}
The skill writes state, logs, or artifacts to durable storage.
Persistence operations may leak data to attacker-readable
locations.  \emph{Attack surfaces}: data exfiltration through
logs/shared storage, state poisoning.

\medskip
A key observation is that attacks at phase~$\pi_i$ can propagate
forward to phases $\pi_{i+1}, \ldots, \pi_5$.  For example,
a typosquatting attack at \textsc{Install} enables arbitrary code
execution at \textsc{Execute}.  We call this the \emph{forward
propagation property} of supply chain attacks.

% ---------------------------------------------------------------------------
\subsection{Attack Taxonomy}
\label{sec:attack-taxonomy}

We identify six formal attack classes derived from empirical analysis
of the ClawHavoc campaign~\cite{clawhavoc2026} (341--1,184 malicious
skills depending on the reporting date and scope), the MalTool benchmark~\cite{maltool2026} (7,027 synthesised
malicious tools), and the ``Agent Skills in the Wild''
survey~\cite{agentskillswild2026} (42,447 skills).

\begin{definition}[Attack Classes]
\label{def:attack-classes}
Let $\mathcal{C} = \{ c_1, \ldots, c_6 \}$ be the set of attack classes,
and let $\Phi \colon \mathcal{C} \to 2^{\{\pi_1,\ldots,\pi_5\}}$
map each class to its applicable supply chain phases:
\begin{enumerate}
  \item \textbf{Data Exfiltration} ($c_1$):
        Unauthorized extraction of sensitive data (API keys,
        conversation history, environment variables) to an
        attacker-controlled endpoint.
        $\Phi(c_1) = \{\textsc{Execute}, \textsc{Persist}\}$.

  \item \textbf{Privilege Escalation} ($c_2$):
        Acquiring access rights beyond the skill's declared
        capability set, exploiting misconfigured permissions or
        runtime privilege boundaries.
        $\Phi(c_2) = \{\textsc{Configure}, \textsc{Execute}\}$.

  \item \textbf{Prompt Injection} ($c_3$):
        Adversarial content embedded in skill metadata,
        configuration templates, or tool return values that
        manipulates the agent's language model.
        $\Phi(c_3) = \{\textsc{Load}, \textsc{Configure},
        \textsc{Execute}\}$.

  \item \textbf{Dependency Confusion} ($c_4$):
        Publishing a public skill with the same name as a
        private internal skill, causing the resolver to fetch
        the malicious version.
        $\Phi(c_4) = \{\textsc{Install}\}$.

  \item \textbf{Typosquatting} ($c_5$):
        Registering a skill name that is a near-homograph of a
        popular skill (e.g., \texttt{weahter-api} vs.\
        \texttt{weather-api}).
        $\Phi(c_5) = \{\textsc{Install}\}$.

  \item \textbf{Namespace Squatting} ($c_6$):
        Preemptively claiming a namespace likely to be used by a
        legitimate organization (e.g., \texttt{@google/search}).
        $\Phi(c_6) = \{\textsc{Install}\}$.
\end{enumerate}
\end{definition}

The \emph{total attack surface} is defined as the set of all
(phase, class) pairs where an attack can manifest:
\begin{equation}
  \mathcal{S}_{\text{attack}} =
    \bigl\{(\pi_j, c_i) \mid \pi_j \in \Phi(c_i)\bigr\}.
  \label{eq:attack-surface}
\end{equation}
With our taxonomy, $|\mathcal{S}_{\text{attack}}| = 10$ distinct
attack surfaces.

\subsubsection{Threat Actors}

We distinguish four categories of adversaries, ordered by
increasing access and decreasing detectability:

\begin{definition}[Threat Actor Categories]
\label{def:threat-actors}
A \emph{threat actor} $\mathit{TA}$ is characterized by its
access vector and attack capability:
\begin{enumerate}
  \item \textbf{Malicious Author} ($\mathit{TA}_1$):
        Creates and publishes trojanized skills.  Full control
        over skill content.  Primary vector in ClawHavoc~\cite{clawhavoc2026}.
  \item \textbf{Compromised Registry} ($\mathit{TA}_2$):
        The attacker gains administrative control of a registry
        $r \in \mathcal{R}$ (e.g., through credential theft).
        Can modify any skill stored in $r$.
  \item \textbf{Supply Chain Attacker} ($\mathit{TA}_3$):
        Poisons a transitive dependency rather than the top-level
        skill.  Analogous to the \texttt{event-stream} and
        \texttt{ua-parser} incidents in the npm ecosystem.
  \item \textbf{Insider Threat} ($\mathit{TA}_4$):
        An authorized user (developer or maintainer) who
        introduces malicious changes.  Hardest to detect
        because access is legitimate.
\end{enumerate}
\end{definition}

\noindent Each threat actor can launch a subset of attack classes.
$\mathit{TA}_1$ and $\mathit{TA}_2$ can launch all six classes.
$\mathit{TA}_3$ is restricted to $\{c_1, c_2, c_3, c_4\}$ (supply
chain attacks do not employ typosquatting or namespace squatting
directly).  $\mathit{TA}_4$ can launch $\{c_1, c_2, c_3\}$ (insider
access bypasses install-time defenses).

% ---------------------------------------------------------------------------
\subsection{The DY-Skill Attacker Model}
\label{sec:dy-skill}

We adapt the Dolev--Yao attacker model~\cite{doleyyao1983} to the
agent skill supply chain.  In the classical Dolev--Yao setting, an
attacker controls the \emph{network} and can intercept, inject,
modify, and drop protocol messages.  In DY-Skill,
the ``network'' is the supply chain $\mathcal{SC}$, and ``messages''
are \emph{skill packages}.

\begin{definition}[Skill Message]
\label{def:skill-message}
A \emph{skill message} is a tuple
$m = (\mathit{name}, \mathit{ver}, \mathit{payload},
\mathit{caps})$ where:
\begin{itemize}
  \item $\mathit{name} \in \Sigma^*$ is the skill's registered name,
  \item $\mathit{ver} \in \mathbb{V}$ is a semantic version identifier,
  \item $\mathit{payload} \in \{0,1\}^*$ is the skill's code and
        configuration, and
  \item $\mathit{caps} \subseteq \mathcal{C\!A\!P}$ is the set of
        declared capabilities (Section~\ref{sec:capability-lattice}).
\end{itemize}
\end{definition}

\begin{definition}[DY-Skill Attacker]
\label{def:dy-skill-attacker}%
\label{def:dy-skill}% alias for cross-references from Section~8
A \emph{DY-Skill attacker} $\mathsf{Adv}$ operates on a supply
chain $\mathcal{SC}$ and maintains a \emph{knowledge set}
$K \subseteq \mathcal{M}$ (where $\mathcal{M}$ is the set of all
skill messages) that is closed under six operations:

\begin{enumerate}
  \item \textbf{Intercept.}
        For any message $m$ in transit through $\mathcal{SC}$:
        $\mathsf{intercept}(m)$ adds $m$ to $K$ and returns $m$.
        Formally, $K' = K \cup \{m\}$.

  \item \textbf{Inject.}
        For any message $m$ constructible from $K$ and any
        registry $r \in \mathcal{R}$:
        $\mathsf{inject}(m, r)$ adds $m$ to $r$ and to $K$.
        Formally, $r' = r \cup \{m\}$, $K' = K \cup \{m\}$.

  \item \textbf{Modify.}
        For any message $m$ in transit through $\mathcal{SC}$
        and any message $m'$ constructible from $K$:
        $\mathsf{modify}(m, m')$ replaces $m$ with $m' \neq m$.
        Formally, $\mathcal{SC}' = (\mathcal{SC} \setminus \{m\}) \cup \{m'\}$,
        $K' = K \cup \{m, m'\}$.

  \item \textbf{Drop.}
        For any message $m$ in transit through $\mathcal{SC}$:
        $\mathsf{drop}(m)$ suppresses $m$.
        Formally, $\mathcal{SC}' = \mathcal{SC} \setminus \{m\}$,
        $K' = K \cup \{m\}$.

  \item \textbf{Forge Skills.}
        $\mathsf{Adv}$ can create skill definitions $s'$ with
        arbitrary content, including manifests with arbitrary
        capability declarations $\mathit{caps}_{\text{declared}}$,
        executable code implementing capabilities
        $\mathit{caps}_{\text{actual}} \supseteq
        \mathit{caps}_{\text{declared}}$, and metadata mimicking
        legitimate skills (name squatting, version manipulation).
        $K' = K \cup \{s'\}$.

  \item \textbf{Compromise Registries.}
        $\mathsf{Adv}$ can modify skill entries in a registry
        $r \in \mathcal{R}$, subject to the constraint that
        $\mathsf{Adv}$ cannot forge valid cryptographic signatures
        under keys it does not possess.
        Formally, for $m \in r$,
        $r' = (r \setminus \{m\}) \cup \{m'\}$ where $m'$ is
        constructible from $K$.
\end{enumerate}
\end{definition}

Operations~(i)--(iv) are direct adaptations of the classical
Dolev--Yao channel capabilities~\cite{doleyyao1983}.
Operations~(v) and~(vi) are novel extensions specific to the
agent skill supply chain, capturing the additional attack surface
created by skill authoring and registry infrastructure.

The knowledge set $K$ satisfies three invariants:

\begin{enumerate}[label=(\alph*)]
  \item \emph{Monotonicity}: $K(t) \subseteq K(t+1)$ for all
        time steps $t$.  Knowledge never decreases.
  \item \emph{Interception closure}: For any observable
        message $m$ in $\mathcal{SC}$, $\mathsf{intercept}(m)$
        adds $m$ to $K$.
  \item \emph{Forge closure}: For any polynomial-time
        computable skill definition $s'$, $\mathsf{Adv}$ can
        produce $s'$ using operations~(v) and~(vi) and add
        it to $K$.
\end{enumerate}

\smallskip
\noindent\textbf{Perfect Sandbox Assumption.}
Analogous to the perfect cryptography assumption in classical
Dolev--Yao, we assume that a correctly implemented capability
sandbox cannot be broken without an \emph{escape capability}.
Formally: if a skill $s$ is executed in a sandbox with
capability set $\mathit{Cap}_D(s)$, then $s$ cannot access any
resource $r \notin \mathit{Cap}_D(s)$ unless $s$ possesses a
capability $c$ such that $c.\mathit{resource} = r$.  We relax
this assumption in Section~\ref{sec:sandboxing} when we discuss
real-world sandbox implementations.

% ---------------------------------------------------------------------------
\subsection{Theorem~1: DY-Skill Maximality}
\label{sec:dy-skill-maximality}

We now state and sketch the proof of our first main result: the
DY-Skill attacker is maximally powerful in the symbolic model.

\begin{theorem}[DY-Skill Maximality]
\label{thm:dy-skill-maximality}
For any symbolic attacker $\mathcal{A}'$ operating on a supply chain
$\mathcal{SC}$ under the perfect sandbox assumption, there exists a
DY-Skill attacker $\mathsf{Adv}$ and a trace
$\tau' = (\sigma_0, a_1, \sigma_1, \ldots, a_n, \sigma_n)$ such
that for every observable event $o$ in $\mathcal{A}'$'s execution,
there is a corresponding event $o'$ in $\tau'$ with
$\mathit{effect}(o) = \mathit{effect}(o')$.
\end{theorem}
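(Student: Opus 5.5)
The plan is a step-by-step simulation argument by induction on the length of $\mathcal{A}'$'s execution. This mirrors Cervesato's proof that the classical Dolev--Yao intruder subsumes every symbolic adversary~\cite{cervesato2001}.

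First I fix what a ``symbolic attacker'' on $\mathcal{SC}$ may do. Under the perfect sandbox assumption, every observable event of $\mathcal{A}'$ is one of three things:
\begin{itemize}
\item an interaction with a message in transit through $\mathcal{SC}$ (observing, suppressing or replacing it);
\item a write to a registry $r \in \mathcal{R}$;
\item the production of a skill message from symbolic terms the attacker already holds.
\end{itemize}
Effects on execution environments $\mathcal{E}$ happen only through skills that are actually installed and run. By the perfect sandbox assumption, such effects are fully determined by the message $m=(\mathit{name},\mathit{ver},\mathit{payload},\mathit{caps})$ that reaches the developer. So $\mathit{effect}(o)$ can be taken to be the resulting change to the supply-chain state $\sigma$, meaning the contents of registries, the messages in transit, and the attacker's knowledge set. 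Two events have equal effect if they induce the same state transition.

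Next comes a classification lemma mapping each primitive action of $\mathcal{A}'$ to a DY-Skill operation from Definition~\ref{def:dy-skill-attacker}:
\begin{itemize}
\item passive observation maps to $\mathsf{intercept}$;
\item suppression maps to $\mathsf{drop}$;
\item replacement in transit maps to $\mathsf{modify}$;
\item publishing a new package maps to $\mathsf{inject}$;
\item altering stored entries maps to compromising registries, operation~(vi);
\item constructing any new skill definition maps to forging skills, operation~(v).
\end{itemize}
Replay, where an earlier message is re-sent, is covered by $\mathsf{inject}$ or $\mathsf{modify}$ with an $m' \in K$. This works because interception closure (b) places every observed message in $K$, and monotonicity (a) keeps it there.

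The induction maintains the invariant $K_{\mathcal{A}'}(t) \subseteq K_{\mathsf{Adv}}(t)$ together with equality of the public state $\sigma_t$. In the step, $\mathcal{A}'$ performs action $a_{t+1}$. I choose the corresponding DY-Skill action $a'_{t+1}$ via the lemma and check that its effect coincides, which gives $\sigma'_{t+1}=\sigma_{t+1}$. The invariant then propagates: any term $\mathcal{A}'$ can derive is derivable by $\mathsf{Adv}$, by forge closure (c).

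The main obstacle is this knowledge-containment step for constructed messages. I must show that anything $\mathcal{A}'$ emits is ``constructible from $K$'' in the DY-Skill sense. Here the restriction to symbolic (and hence polynomial-time, signature-respecting) attackers is essential.
\begin{itemize}
\item Unsigned content has no restriction, since forge closure (c) yields any polynomial-time computable skill definition.
\item A message carrying a valid signature under a key $\mathcal{A}'$ does not hold must, by the perfect-sandbox/perfect-cryptography analogue, have been observed earlier. It is therefore in $K_{\mathcal{A}'} \subseteq K_{\mathsf{Adv}}$, and $\mathsf{Adv}$ can replay it.
\end{itemize}
I would handle this by a case split on whether the emitted message carries an attestation $\mathsf{Adv}$ cannot produce.

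Concatenating the matched actions gives the trace $\tau'=(\sigma_0,a'_1,\sigma_1,\ldots,a'_n,\sigma_n)$ with the required event-by-event correspondence. Events of $\mathcal{A}'$ that have no state effect can be matched by a trivial $\mathsf{intercept}$.
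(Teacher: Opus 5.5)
Your proposal is correct and follows essentially the same route as the paper. Both arguments are a Cervesato-style simulation by induction on $\mathcal{A}'$'s execution: each action is mapped to one of the six DY-Skill operations, and $K_{\mathcal{A}'}(t) \subseteq K_{\mathsf{Adv}}(t)$ is maintained through monotonicity and the closure invariants. Your replay and signed-message cases make explicit what the appendix covers with the ``identical cryptographic constraint'' and the PPT-computation case.

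Two small fixes are needed. First, instantiate $\mathsf{Adv}$ with $\mathcal{A}'$'s initial knowledge, including any keys, so that the invariant holds at $t=0$. Second, measure $\mathit{effect}$ on the public state only (registries and messages in transit, as the appendix's $\mathit{View}_{\mathcal{R}}$ does), since you maintain containment rather than equality of the knowledge sets.
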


\begin{proof}[Proof sketch]
We construct $\mathsf{Adv}$ by simulation, following the
structure of Cervesato's maximality proof~\cite{cervesato2001}.

\emph{Base case.}
At $t = 0$, both $\mathcal{A}'$ and $\mathsf{Adv}$ have the same
initial knowledge: the set of publicly available skills in
$\mathcal{R}$.  $\mathsf{Adv}$ intercepts all initial messages.

\emph{Inductive step.}
Suppose at time $t$, $\mathsf{Adv}$'s knowledge $K(t) \supseteq
K'(t)$ (the knowledge of $\mathcal{A}'$).  When $\mathcal{A}'$
performs an action $a_{t+1}$, we show $\mathsf{Adv}$ can produce
the same observable effect:
\begin{itemize}
  \item If $a_{t+1}$ reads a message from $\mathcal{SC}$:
        $\mathsf{Adv}$ uses $\mathsf{intercept}$.
  \item If $a_{t+1}$ places a message into a registry:
        $\mathsf{Adv}$ uses $\mathsf{inject}$.
  \item If $a_{t+1}$ replaces a message in transit with a
        modified version:
        $\mathsf{Adv}$ uses $\mathsf{modify}$.
  \item If $a_{t+1}$ suppresses a message in transit:
        $\mathsf{Adv}$ uses $\mathsf{drop}$.
  \item If $a_{t+1}$ creates a new malicious skill definition:
        $\mathsf{Adv}$ uses $\mathsf{forge\text{-}skills}$.
  \item If $a_{t+1}$ modifies an existing registry entry:
        $\mathsf{Adv}$ uses
        $\mathsf{compromise\text{-}registries}$.
\end{itemize}
By induction, the resulting trace~$\tau'$ produces every observable
effect of $\mathcal{A}'$.  $K(t) \supseteq K'(t)$ is maintained
by the monotonicity and closure properties of $K$.

The full proof, including the formal construction of the
simulation relation and handling of composite operations, appears
in Appendix~\ref{app:proof-dy-maximality}.
\end{proof}

\smallskip
\noindent\textbf{Practical significance.}
Theorem~\ref{thm:dy-skill-maximality} implies that any defense
strategy that is secure against the DY-Skill attacker is secure
against \emph{all} symbolic attackers on the supply chain.  This
justifies using the DY-Skill model as the standard adversary for
our static analysis (Section~\ref{sec:static-analysis}) and
sandboxing (Section~\ref{sec:sandboxing}) frameworks.  In
particular, if \textsc{SkillFortify}'s analysis proves that a skill is safe
under the DY-Skill model, no weaker attacker can compromise it.

% === END sections/03-threat-model.tex ===

% === BEGIN sections/04-static-analysis.tex ===
% ===========================================================================
% Section 4: Static Analysis Framework
% ===========================================================================

\section{Static Analysis Framework}
\label{sec:static-analysis}%
\label{sec:analysis}% alias for cross-references from Section~7

We present a static analysis framework grounded in abstract
interpretation~\cite{cousot1977}.  The framework operates in three
phases: (1)~capability inference via abstract interpretation over a
capability lattice, (2)~dangerous pattern detection using a catalog
derived from real-world incidents, and (3)~capability violation
checking that compares inferred capabilities against declarations.
The central guarantee is \emph{soundness}: if the analysis reports
no violations, the concrete execution cannot exceed declared
capabilities.

% ---------------------------------------------------------------------------
\subsection{Capability Lattice}
\label{sec:capability-lattice}

We begin by defining the algebraic structure that underpins
capability reasoning.

\begin{definition}[Access Level Lattice]
\label{def:access-level-lattice}
Let
\[
  \mathbb{L}_{\mathit{cap}} = \bigl(\{\mathsf{NONE}, \mathsf{READ},
  \mathsf{WRITE}, \mathsf{ADMIN}\},\; \sqsubseteq\bigr)
\]
be a totally ordered lattice (chain) with
\[
  \mathsf{NONE} \sqsubseteq \mathsf{READ} \sqsubseteq
  \mathsf{WRITE} \sqsubseteq \mathsf{ADMIN}.
\]
The lattice operations are:
\begin{itemize}
  \item \emph{Join} (least upper bound):
        $a \sqcup b = \max(a, b)$.
  \item \emph{Meet} (greatest lower bound):
        $a \sqcap b = \min(a, b)$.
  \item \emph{Bottom}: $\bot = \mathsf{NONE}$
        (identity for $\sqcup$, absorbing for $\sqcap$).
  \item \emph{Top}: $\top = \mathsf{ADMIN}$
        (identity for $\sqcap$, absorbing for $\sqcup$).
\end{itemize}
\end{definition}

The four access levels model a standard escalation path.
$\mathsf{NONE}$~denotes no access.  $\mathsf{READ}$~permits
observation but not modification.  $\mathsf{WRITE}$~permits
observation and mutation.  $\mathsf{ADMIN}$~adds the authority to
grant or revoke access for other principals, following the
capability escalation model of Dennis and Van~Horn~\cite{dennis1966}.

\begin{definition}[Resource Universe]
\label{def:resource-universe}
The \emph{resource universe} is the finite set
\[
  \begin{aligned}
    \mathcal{C} = \bigl\{\;
      &\texttt{filesystem},\; \texttt{network},\;
       \texttt{environment},\; \texttt{shell}, \\
      &\texttt{skill\_invoke},\; \texttt{clipboard},\;
       \texttt{browser},\; \texttt{database} \;\bigr\},
  \end{aligned}
\]
with $|\mathcal{C}| = 8$.  These resource types were identified by
cross-referencing the empirical findings of
\citet{agentskillswild2026} (42,447 skills) and
\citet{maltool2026} (7,027 synthesised malicious tools).
\end{definition}

\begin{definition}[Capability and Capability Set]
\label{def:capability}%
\label{def:capability-set}% alias for cross-references from Section~6
A \emph{capability} is a pair $c = (r, \ell)$ where
$r \in \mathcal{C}$ is a resource type and
$\ell \in \mathbb{L}_{\mathit{cap}}$ is an access level.

Capability $c_1 = (r_1, \ell_1)$ \emph{subsumes}
$c_2 = (r_2, \ell_2)$, written $c_2 \sqsubseteq c_1$, iff
$r_1 = r_2$ and $\ell_2 \sqsubseteq \ell_1$.

A \emph{capability set} $\mathit{Cap}(s)$ for a skill $s$ is a
function $\mathit{Cap}(s) \colon \mathcal{C} \to
\mathbb{L}_{\mathit{cap}}$ mapping each resource to an access
level.  Equivalently, $\mathit{Cap}(s)$ is an element of the
product lattice $\mathbb{L}_{\mathit{cap}}^{|\mathcal{C}|}$.
\end{definition}

\noindent The product lattice ordering is pointwise:
$\mathit{Cap}_1 \sqsubseteq \mathit{Cap}_2$ iff
$\mathit{Cap}_1(r) \sqsubseteq \mathit{Cap}_2(r)$ for all
$r \in \mathcal{C}$.  This structure is a complete lattice with
$\bot = (\mathsf{NONE}, \ldots, \mathsf{NONE})$ and
$\top = (\mathsf{ADMIN}, \ldots, \mathsf{ADMIN})$.

\noindent\textbf{Lattice law verification.}
The lattice laws (idempotency, commutativity, associativity,
absorption) hold by inspection since $\sqcup = \max$ and
$\sqcap = \min$ on the totally ordered set
$\{\mathsf{NONE}, \mathsf{READ}, \mathsf{WRITE}, \mathsf{ADMIN}\}$
satisfy these properties for any total order.
Product lattice completeness follows from Birkhoff~\cite{birkhoff1940}:
the product of complete lattices is complete.

% ---------------------------------------------------------------------------
\subsection{Abstract Skill Semantics}
\label{sec:abstract-semantics}

We define the concrete and abstract semantics of a skill and
establish a Galois connection between them.

\begin{definition}[Concrete Skill Semantics]
\label{def:concrete-semantics}
Let $\Sigma$ be the set of concrete program states (memory contents,
file handles, network sockets, environment bindings).
The \emph{concrete semantics} of a skill $s$ is a function
$\llbracket s \rrbracket \colon \wp(\Sigma) \to \wp(\Sigma)$
that maps a set of initial states to the set of reachable final
states.
\end{definition}

\begin{definition}[Abstract Skill Semantics]
\label{def:abstract-semantics}
The \emph{abstract semantics} of $s$ is a function
$\llbracket s \rrbracket^\sharp \colon
\mathbb{L}_{\mathit{cap}}^{|\mathcal{C}|} \to
\mathbb{L}_{\mathit{cap}}^{|\mathcal{C}|}$
that over-approximates the concrete semantics in the capability
domain.
\end{definition}

We relate the two via a Galois connection, following
\citet{cousot1977}.

\begin{definition}[Galois Connection]
\label{def:galois-connection}
Let $(\wp(\Sigma), \subseteq)$ be the concrete domain and
$(\mathbb{L}_{\mathit{cap}}^{|\mathcal{C}|}, \sqsubseteq)$ be
the abstract domain.  The pair of functions
$(\alpha, \gamma)$ forms a Galois connection:
\[
  \alpha \colon \wp(\Sigma) \to
    \mathbb{L}_{\mathit{cap}}^{|\mathcal{C}|},
  \qquad
  \gamma \colon
    \mathbb{L}_{\mathit{cap}}^{|\mathcal{C}|} \to \wp(\Sigma),
\]
satisfying for all $S \in \wp(\Sigma)$ and all
$a \in \mathbb{L}_{\mathit{cap}}^{|\mathcal{C}|}$:
\begin{equation}
  \alpha(S) \sqsubseteq a
  \;\;\iff\;\;
  S \subseteq \gamma(a).
  \label{eq:galois-connection}
\end{equation}
\end{definition}

\noindent The abstraction function $\alpha$ extracts the capability
footprint: for each resource $r \in \mathcal{C}$, $\alpha(S)(r)$
is the least $\ell \in \mathbb{L}_{\mathit{cap}}$ such that every
state in $S$ accesses $r$ at level $\ell$ or below.  Concretization
$\gamma(a)$ returns all states whose resource accesses are bounded
by $a$.

\begin{proposition}[Soundness of Abstraction]
\label{prop:abstraction-soundness}
If $\alpha \circ \llbracket s \rrbracket \sqsubseteq
\llbracket s \rrbracket^\sharp \circ \alpha$, then the
abstract fixpoint over-approximates the concrete fixpoint:
\begin{equation}
  \alpha\bigl(\mathsf{lfp}\;\llbracket s \rrbracket\bigr)
  \sqsubseteq
  \mathsf{lfp}\;\llbracket s \rrbracket^\sharp.
  \label{eq:abstraction-sound}
\end{equation}
\end{proposition}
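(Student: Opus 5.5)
This is the standard fixpoint-transfer argument of Cousot and Cousot. To make the least fixpoints meaningful, I will assume $\llbracket s \rrbracket$ is monotone on the complete lattice $(\wp(\Sigma),\subseteq)$ and $\llbracket s \rrbracket^\sharp$ is monotone on $\mathbb{L}_{\mathit{cap}}^{|\mathcal{C}|}$. The latter is a complete lattice by the Birkhoff remark above. Tarski's theorem then gives both fixpoints. Write $F = \llbracket s \rrbracket$, $F^\sharp = \llbracket s \rrbracket^\sharp$, and $a^\ast = \mathsf{lfp}\,F^\sharp$.

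The plan avoids iteration sequences and uses only the Galois connection~\eqref{eq:galois-connection} and Tarski's characterization $\mathsf{lfp}\,F = \bigcap\{S \mid F(S) \subseteq S\}$. The target $\alpha(\mathsf{lfp}\,F) \sqsubseteq a^\ast$ is equivalent, by~\eqref{eq:galois-connection}, to $\mathsf{lfp}\,F \subseteq \gamma(a^\ast)$. It therefore suffices to show that $\gamma(a^\ast)$ is a post-fixpoint of $F$, i.e.\ $F(\gamma(a^\ast)) \subseteq \gamma(a^\ast)$.

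The key step runs as follows. Applying~\eqref{eq:galois-connection} once more, $F(\gamma(a^\ast)) \subseteq \gamma(a^\ast)$ is equivalent to $\alpha(F(\gamma(a^\ast))) \sqsubseteq a^\ast$. Evaluating the hypothesis $\alpha \circ F \sqsubseteq F^\sharp \circ \alpha$ at $\gamma(a^\ast)$ gives
\[
\alpha(F(\gamma(a^\ast))) \sqsubseteq F^\sharp(\alpha(\gamma(a^\ast))).
\]
The reductive property $\alpha\circ\gamma \sqsubseteq \mathrm{id}$ follows from~\eqref{eq:galois-connection} with $S=\gamma(a)$. Combining it with monotonicity of $F^\sharp$ yields $F^\sharp(\alpha(\gamma(a^\ast))) \sqsubseteq F^\sharp(a^\ast) = a^\ast$. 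Chaining these inequalities gives the required post-fixpoint property, hence $\mathsf{lfp}\,F \subseteq \gamma(a^\ast)$, hence~\eqref{eq:abstraction-sound}.

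As a cross-check, one could instead prove by transfinite induction on Kleene iterates $F^\kappa(\emptyset)$ that $\alpha(F^\kappa(\emptyset)) \sqsubseteq (F^\sharp)^\kappa(\bot)$. Limit stages would use the fact that $\alpha$, as a lower adjoint, preserves joins. The Tarski route above is shorter and needs no continuity, so it is the one I would write.

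The main obstacle is not the calculation but the hypotheses. The statement does not say that $F^\sharp$ is monotone, yet monotonicity is used essentially in the key step. Monotonicity of $F$ is needed for $\mathsf{lfp}\,F$ to exist at all. The pointwise order between functions in the hypothesis must also be read as $\alpha(F(S)) \sqsubseteq F^\sharp(\alpha(S))$ for all $S$.

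I would therefore state monotonicity of both semantics explicitly as a standing assumption of Definitions~\ref{def:concrete-semantics} and~\ref{def:abstract-semantics}. This is justified for $F$ because it is the collecting semantics, defined as a direct image, and for $F^\sharp$ because it is built from joins on the chain. I would also verify that the informal description of $\alpha$ and $\gamma$, as the least bounding access level per resource and the set of bounded states, really satisfies~\eqref{eq:galois-connection}. That check reduces to $\alpha$ being a pointwise supremum of per-state footprints.
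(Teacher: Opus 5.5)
Your argument is correct, and it supplies more than the paper does. The paper gives no proof of this proposition; it only remarks that it is ``a direct instance of the fundamental theorem of abstract interpretation'' of Cousot and Cousot. Where the paper does reason about such fixpoints (the loop case in the appendix proof of Theorem~\ref{thm:analysis-soundness}, and the discussion in Section~\ref{sec:scope-guarantees}), it uses the iteration picture. There the abstract ascending chain stabilises because $\mathbb{L}_{\mathit{cap}}^{|\mathcal{C}|}$ has finite height, and over-approximation is argued by induction on the iteration count. You instead use Tarski's post-fixpoint characterisation together with the adjunction, showing that $\gamma(\mathsf{lfp}\,\llbracket s\rrbracket^\sharp)$ is a post-fixpoint of $\llbracket s\rrbracket$. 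This buys independence from continuity, finite height and ordinal iteration. That matters because $\wp(\Sigma)$ has infinite height, so the iteration route on the concrete side needs either continuity or transfinite iterates plus join-preservation of $\alpha$, exactly as your cross-check observes. Your route also shows the conclusion holds for every post-fixpoint of $\llbracket s\rrbracket^\sharp$, not just the least. The iteration route, for its part, is the one that connects to how the analyzer actually computes $\mathit{Cap}_I(s)$. You are right that monotonicity of both functions, and the pointwise reading of $\alpha\circ\llbracket s\rrbracket \sqsubseteq \llbracket s\rrbracket^\sharp\circ\alpha$, are absent from the statement and only implicit in the citation.

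One caution about your justification of monotonicity of $\llbracket s\rrbracket$. If it really is the direct-image map of Definition~\ref{def:concrete-semantics}, then $\llbracket s\rrbracket(\emptyset)=\emptyset$. Hence $\mathsf{lfp}\,\llbracket s\rrbracket=\emptyset$ and $\alpha(\emptyset)$ is the bottom element, so the proposition as literally stated holds vacuously. The instance with content is the loop case. It compares $\mathsf{lfp}\,\lambda X.\,X_0\cup\llbracket s_{\text{body}}\rrbracket(X)$ with $\mathsf{lfp}\,\lambda a.\,a_0\sqcup\llbracket s_{\text{body}}\rrbracket^\sharp(a)$, where $\alpha(X_0)\sqsubseteq a_0$. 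Your proof applies to it unchanged. The required local inequality follows from $\alpha\circ\llbracket s_{\text{body}}\rrbracket \sqsubseteq \llbracket s_{\text{body}}\rrbracket^\sharp\circ\alpha$ together with the fact that $\alpha$ preserves joins.
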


\noindent This is a direct instance of the fundamental theorem of
abstract interpretation~\cite{cousot1977}.  Concretely, it means:
if the abstract analysis concludes that $s$ accesses resource $r$
at level $\ell$, then every concrete execution of $s$ accesses $r$
at level $\ell$ or below.  False positives (the abstract analysis
may overestimate access) are possible; false negatives (the analysis
misses an access) are not.

% ---------------------------------------------------------------------------
\subsection{Three-Phase Analysis}
\label{sec:three-phase}

The \textsc{SkillFortify} static analyzer operates in three sequential phases.

\medskip
\noindent\textbf{Phase 1: Capability Inference (Abstract
Interpretation).}
Given a parsed skill $s$ with content fields (instructions,
shell commands, URLs, environment references, code blocks), the
analyzer computes the inferred capability set
$\mathit{Cap}_I(s) \in \mathbb{L}_{\mathit{cap}}^{|\mathcal{C}|}$
by applying $\llbracket s \rrbracket^\sharp$.

Concretely, the inference is a conservative over-approximation:
\begin{itemize}
  \item URL references $\Rightarrow$ \texttt{network} $\geq$
        $\mathsf{READ}$; HTTP write methods (POST, PUT, PATCH,
        DELETE) $\Rightarrow$ \texttt{network} $\geq$
        $\mathsf{WRITE}$.
  \item Shell command invocations $\Rightarrow$ \texttt{shell}
        $\geq$ $\mathsf{WRITE}$.
  \item Environment variable references $\Rightarrow$
        \texttt{environment} $\geq$ $\mathsf{READ}$.
  \item File-operation keywords $\Rightarrow$
        \texttt{filesystem} $\geq$ $\mathsf{READ}$ or
        $\mathsf{WRITE}$ depending on the operation.
\end{itemize}
Each rule is a \emph{transfer function} in the abstract domain:
it maps pattern presence to a lower bound on the capability
requirement.

\begin{lemma}[Transfer Function Soundness]
\label{lem:transfer-soundness}
Each transfer function in Phase~1 satisfies the Galois connection
condition: for every set of concrete states $S$ matching the
detected pattern, $\alpha(S) \sqsubseteq \tau^\sharp(\alpha(S))$,
where $\tau^\sharp$ is the abstract transfer function.
Table~\ref{tab:galois-verification} verifies this for each
resource type.
\end{lemma}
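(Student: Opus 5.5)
The plan is to prove Lemma~\ref{lem:transfer-soundness} rule by rule, treating each Phase~1 rule as a monotone, extensive map on the product lattice $\mathbb{L}_{\mathit{cap}}^{|\mathcal{C}|}$. Concretely, a rule detecting a pattern $p$ that implies resource $r$ at level $\ell_p$ has abstract transfer function
\[
  \tau^\sharp_p(a) = a \sqcup \mathbf{e}_{r,\ell_p},
\]
where $\mathbf{e}_{r,\ell_p}$ is the vector that is $\ell_p$ at $r$ and $\mathsf{NONE}$ elsewhere. The required inequality $\alpha(S) \sqsubseteq \tau^\sharp(\alpha(S))$ then follows immediately from the join law $x \sqsubseteq x \sqcup y$, applied pointwise in the product lattice (Definition~\ref{def:access-level-lattice}). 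This gives extensivity uniformly for every rule. The combined Phase~1 function, being a finite composition of such joins, is extensive and monotone as well.

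Extensivity alone is weak: it is the condition as literally stated, but not the property the analysis actually needs. So the second step establishes the stronger local-soundness condition that feeds Proposition~\ref{prop:abstraction-soundness}, namely
\[
  \alpha(\llbracket p \rrbracket S) \sqsubseteq \tau^\sharp_p(\alpha(S)).
\]
This says the concrete effect of executing a matched construct never raises access to $r$ above $\ell_p$, and never touches any resource other than $r$.

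I would verify this case by case, organized by resource type, and this is where the content of Table~\ref{tab:galois-verification} enters. The cases are:
\begin{itemize}
  \item \texttt{network}: URL dereference only observes, giving $\mathsf{READ}$; write methods mutate remote state, giving $\mathsf{WRITE}$.
  \item \texttt{environment}: variable lookup reads bindings.
  \item \texttt{filesystem}: read and write keywords map to the corresponding levels.
  \item \texttt{shell}: invocations map to $\mathsf{WRITE}$.
\end{itemize}
In each case I check two things. First, the concrete footprint of the primitive, as given by the definition of $\alpha$ (the least level bounding all accesses), lies below $\ell_p$ at $r$. Second, the primitive leaves every other coordinate unchanged. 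Resources without a rule (\texttt{clipboard}, \texttt{browser}, \texttt{database}, \texttt{skill\_invoke}) are handled by the identity transfer function, which is trivially sound only if no construct touches them undetected. This gap has to be stated explicitly as an assumption of the modelled domain.

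The main obstacle is the \texttt{shell} case, together with the implicit claim that pattern presence over-approximates behaviour. A shell command can in principle reach any resource, so bounding it by $\mathsf{WRITE}$ on \texttt{shell} alone is not sound in the concrete semantics. My first attempt would be to invoke the Perfect Sandbox Assumption and treat \texttt{shell} as the only capability through which those effects are authorized, so that the sandbox attributes them to the \texttt{shell} coordinate. If that fails, I would strengthen the shell transfer function to raise every coordinate to $\top$. That version is trivially sound, at the cost of precision.

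Separately, I would make explicit the completeness hypothesis that every concrete access arises from some syntactically detected pattern. Without it, no per-rule argument can exclude false negatives, and the lemma holds only relative to that hypothesis.
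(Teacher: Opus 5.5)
Your proof of the lemma as stated is correct, but it takes a different route from the paper.

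The paper argues row by row (Table~\ref{tab:galois-verification}, and Part~1 of the proof in Appendix~\ref{app:proof-thm2}). It reads each rule as assigning a fixed level $\ell^\sharp$ on the target coordinate. It then justifies $\alpha_r(S) \sqsubseteq \ell^\sharp$ by a semantic claim: $\ell^\sharp$ is the highest level any concrete operation in the pattern's category could need. Equivalently, via the Galois connection, $S \subseteq \gamma(\tau^\sharp)$.

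You instead read each rule as $a \mapsto a \sqcup \mathbf{e}_{r,\ell_p}$. This matches the paper's own description of its transfer functions as presence-based, monotone and extensive (Section~\ref{sec:scope-guarantees}). The stated inequality then follows from $x \sqsubseteq x \sqcup y$ for every rule at once, with no semantic premise. That is shorter and uniform. It also exposes what the paper's phrasing hides: on this reading the stated condition is mere extensivity and says nothing about concrete behaviour. The real content lives in $\alpha(\llbracket p \rrbracket S) \sqsubseteq \tau^\sharp_p(\alpha(S))$, which is the hypothesis Proposition~\ref{prop:abstraction-soundness} actually needs.

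The paper's per-row check is essentially the first half of your step~2, restricted to the target coordinate. It carries the semantic content your step~1 lacks. Because it projects onto a single coordinate, though, it never confronts cross-resource effects. That is why your frame condition runs into the shell case and the paper's argument does not.

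Two corrections to step~2, which goes beyond the lemma:
\begin{itemize}
\item \texttt{skill\_invoke} and \texttt{database} do have rules: the sub-skill and database rows of Table~\ref{tab:galois-verification}. Appendix~\ref{app:proof-thm2} also gives keyword rules for \texttt{clipboard} and \texttt{browser}. These resources therefore need the same case check, not an identity transfer function. Your step~1 already covers them, since they are joins too.
\item The Perfect Sandbox route does not yield the frame condition for shell. That assumption constrains what a sandboxed skill may access at run time, and v1 provides no such enforcement layer. By contrast, $\alpha$ records the resources a state actually touches. A matched shell command that reads a file and uploads it over HTTP still moves the \texttt{filesystem} and \texttt{network} coordinates. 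Only your $\top$ fallback is sound, and it replaces the paper's shell rule, so you would then be proving soundness of a modified analysis.
\end{itemize}
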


\begin{table}[h]
\centering
\caption{Galois connection verification for each transfer function.
For each pattern type, we show the inferred capability (abstract),
the set of concrete operations it covers, and the containment
relationship $\alpha(\text{concrete}) \sqsubseteq \text{abstract}$,
which holds for every row.}
\label{tab:galois-verification}
\small
\begin{tabular}{lll}
\toprule
\textbf{Pattern} & \textbf{Abstract $\tau^\sharp$} &
  \textbf{Concrete $\gamma(\tau^\sharp)$} \\
\midrule
URL reference
  & $(\texttt{net}, \mathsf{READ})$
  & States with HTTP GET access \\
HTTP write verb
  & $(\texttt{net}, \mathsf{WRITE})$
  & States with HTTP POST/PUT/\ldots \\
Shell command
  & $(\texttt{shell}, \mathsf{WRITE})$
  & States executing shell \\
Env var reference
  & $(\texttt{env}, \mathsf{READ})$
  & States reading env vars \\
File read keyword
  & $(\texttt{fs}, \mathsf{READ})$
  & States reading files \\
File write keyword
  & $(\texttt{fs}, \mathsf{WRITE})$
  & States writing files \\
Sub-skill invoke
  & $(\texttt{skill}, \mathsf{WRITE})$
  & States invoking skills \\
DB access keyword
  & $(\texttt{db}, \mathsf{READ/WRITE})$
  & States accessing database \\
\bottomrule
\end{tabular}
\end{table}

\noindent In each row, the concretization of the abstract result
($\gamma(\tau^\sharp)$) is a superset of the set of concrete states
matching the pattern.  This is because the abstract transfer
function assigns the \emph{maximum} access level that any concrete
operation in the category could require, which is the defining
property of a sound over-approximation under the Galois connection
(Definition~\ref{def:galois-connection}).

\medskip
\noindent\textbf{Phase 2: Dangerous Pattern Detection.}
The analyzer matches skill content against a catalog of
$|\mathcal{P}|$ threat patterns derived from ClawHavoc,
MalTool, and ``Agent Skills in the Wild.''  Each pattern
$p_j \in \mathcal{P}$ is a triple
$(r_j, \sigma_j, c_j)$ where $r_j$ is a compiled regular
expression, $\sigma_j \in \{\mathsf{LOW}, \mathsf{MEDIUM},
\mathsf{HIGH}, \mathsf{CRITICAL}\}$ is a severity, and $c_j$
is an attack class from Definition~\ref{def:attack-classes}.

Pattern examples include:
\begin{itemize}
  \item \texttt{curl} piped to shell
        ($\sigma = \mathsf{CRITICAL}$, $c = c_2$),
  \item Base64 decode piped to shell
        ($\sigma = \mathsf{CRITICAL}$, $c = c_2$),
  \item Netcat listener ($\sigma = \mathsf{CRITICAL}$, $c = c_1$),
  \item Dynamic code evaluation
        ($\sigma = \mathsf{HIGH}$, $c = c_2$).
\end{itemize}

Additionally, a cross-channel \emph{information flow} check
detects the combination of base64 encoding and external network
access---a pattern empirically associated with data
exfiltration~\cite{maltool2026}.

\medskip
\noindent\textbf{Phase 3: Capability Violation Check.}
If the skill declares a capability set $\mathit{Cap}_D(s)$,
the analyzer checks:
\begin{equation}
  \mathit{Cap}_I(s) \sqsubseteq \mathit{Cap}_D(s).
  \label{eq:cap-violation}
\end{equation}
Any resource $r$ where
$\mathit{Cap}_I(s)(r) \not\sqsubseteq \mathit{Cap}_D(s)(r)$
constitutes a \emph{capability violation}---the skill requires
more authority than it claims.

The set of violations is:
\begin{equation}
  \mathit{Viol}(s) = \bigl\{ r \in \mathcal{C} \mid
    \mathit{Cap}_I(s)(r) \not\sqsubseteq
    \mathit{Cap}_D(s)(r) \bigr\}.
  \label{eq:violations}
\end{equation}

% ---------------------------------------------------------------------------
\subsection{Theorem~2: Analysis Soundness}
\label{sec:analysis-soundness}

The following theorem is the central guarantee of \textsc{SkillFortify}'s
static analysis: it establishes that the analysis is
\emph{sound}---if it certifies a skill as compliant, the skill
truly cannot exceed its declared capabilities.

\begin{theorem}[Analysis Soundness]
\label{thm:analysis-soundness}
Let $s$ be a skill with declared capability set
$\mathit{Cap}_D(s)$.  If the abstract analysis reports no
violations---i.e., $\mathit{Viol}(s) = \emptyset$---then for every
concrete execution trace $\tau$ of $s$ and every resource access
$(r, \ell)$ performed in $\tau$:
\begin{equation}
  \ell \sqsubseteq \mathit{Cap}_D(s)(r).
  \label{eq:soundness}
\end{equation}
That is, the concrete execution never accesses any resource
beyond the declared capability level.
\end{theorem}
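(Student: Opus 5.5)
The plan is to chain three facts: the abstract fixpoint over-approximates every concrete access (Proposition~\ref{prop:abstraction-soundness}), the Phase~1 result is that fixpoint (or above it), and an empty $\mathit{Viol}(s)$ means that result lies pointwise below $\mathit{Cap}_D(s)$. Transitivity of $\sqsubseteq$ on the chain $\mathbb{L}_{\mathit{cap}}$ then gives \eqref{eq:soundness}.

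\textbf{Step 1: move from traces to state sets.} Fix a concrete execution trace $\tau$ of $s$ and an access $(r,\ell)$ performed in $\tau$. Let $S_\tau$ be the set of states visited by $\tau$. Since $\tau$ is an execution of $s$, we have $S_\tau \subseteq \mathsf{lfp}\,\llbracket s \rrbracket$, the collecting semantics. By the stated meaning of $\alpha$, $\alpha(S)(r)$ is the least level bounding every access to $r$ in $S$. Therefore $\ell \sqsubseteq \alpha(S_\tau)(r)$. Because $\alpha$ is the lower adjoint of the Galois connection \eqref{eq:galois-connection}, it is monotone, so $\alpha(S_\tau) \sqsubseteq \alpha(\mathsf{lfp}\,\llbracket s\rrbracket)$.

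\textbf{Step 2: invoke abstraction soundness.} I would apply Proposition~\ref{prop:abstraction-soundness}. Its hypothesis $\alpha\circ\llbracket s\rrbracket \sqsubseteq \llbracket s\rrbracket^\sharp\circ\alpha$ has to be discharged from Lemma~\ref{lem:transfer-soundness}. The key observation is that $\llbracket s\rrbracket^\sharp$ is the pointwise join of the per-pattern transfer functions, and each of them is extensive and sound by Table~\ref{tab:galois-verification}. A join of sound, monotone transfer functions is again sound, because $\sqcup$ is monotone in each argument on the product lattice. From the proposition we then get $\alpha(\mathsf{lfp}\,\llbracket s\rrbracket) \sqsubseteq \mathsf{lfp}\,\llbracket s\rrbracket^\sharp$. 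Phase~1 computes $\mathit{Cap}_I(s)$ as this fixpoint, or as a post-fixpoint above it; either way $\mathsf{lfp}\,\llbracket s\rrbracket^\sharp \sqsubseteq \mathit{Cap}_I(s)$. Termination is immediate because the lattice has finite height $3\cdot|\mathcal{C}| = 24$.

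\textbf{Step 3: use the violation check.} If $\mathit{Viol}(s)=\emptyset$, then by \eqref{eq:violations} every $r$ satisfies $\mathit{Cap}_I(s)(r) \sqsubseteq \mathit{Cap}_D(s)(r)$, which is exactly \eqref{eq:cap-violation}. Chaining this with Steps~1 and~2 gives $\ell \sqsubseteq \alpha(S_\tau)(r) \sqsubseteq \mathit{Cap}_I(s)(r) \sqsubseteq \mathit{Cap}_D(s)(r)$.

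\textbf{Main obstacle.} The difficult step is the discharge in Step~2. Lemma~\ref{lem:transfer-soundness} only covers states \emph{matching a detected pattern}. The theorem, however, quantifies over all concrete traces, so I must also rule out resource accesses that trigger no pattern. My approach is to make this an explicit modelling assumption, a \emph{coverage} hypothesis: every concrete primitive accessing $r$ at level $\ell$ arises from some syntactic construct that a Phase~1 rule maps to a level $\sqsupseteq \ell$. Under that hypothesis, Step~2 goes through, and the theorem holds relative to the modelled domain. If coverage cannot be assumed, I would restrict the statement to executions whose operations lie in the pattern vocabulary, matching the ``within the modelled domain'' caveat from the introduction.
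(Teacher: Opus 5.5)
Your argument follows the same top-level chain as the paper. $\mathit{Cap}_I(s)$ over-approximates every concrete access, $\mathit{Viol}(s)=\emptyset$ gives $\mathit{Cap}_I(s)\sqsubseteq\mathit{Cap}_D(s)$, and transitivity on the chain finishes. The difference lies in how you discharge $\alpha\circ\llbracket s\rrbracket \sqsubseteq \llbracket s\rrbracket^\sharp\circ\alpha$.

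The appendix proof (Appendix~\ref{app:proof-thm2}) first checks each per-resource transfer function. It then lifts the result by structural induction over sequencing, conditionals (join) and loops (least fixpoint). You instead treat $\llbracket s\rrbracket^\sharp$ as a single join of presence-based transfer functions. That is how Section~\ref{sec:scope-guarantees} describes what the tool actually computes.

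Your route is shorter and matches the implementation. The paper's is the textbook route and would still work for transfer functions not of the form $a\mapsto a\sqcup c_p$. If you keep your route, the justification should not be monotonicity of $\sqcup$. You need two facts instead: $\alpha$, being a lower adjoint, preserves joins; and each transfer function has the form $a\mapsto a\sqcup c_p$. Together these make composing the transfer functions along a sequence coincide with joining them.

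Your explicit coverage hypothesis is the most useful part of the proposal. The paper's proof relies on it just as much but leaves it tacit. Part~1 of the appendix verifies the bound only for concrete states matching a detected pattern, so an access that triggers no pattern is never covered. The exclusion surfaces only in the scope remarks of Sections~\ref{sec:analysis-soundness} and~\ref{sec:scope-guarantees}. For a pattern-based inference, the statement cannot hold over all concrete traces without such a hypothesis, so stating it is the right move.

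Two small repairs remain:
\begin{itemize}
\item $\llbracket s\rrbracket$ is an input--output map with $\llbracket s\rrbracket(\emptyset)=\emptyset$, so its least fixpoint is $\emptyset$, not the set of states $\tau$ visits. Phrase Step~1 over the collecting semantics $\mathsf{lfp}\,\lambda X.\,\mathit{Init}\cup\mathit{post}_s(X)$, which is what you mean. The paper is equally loose at this point.
\item Your height $3\cdot|\mathcal{C}|=24$ is correct. The appendix's $4^8=65{,}536$ is the lattice's cardinality, not its height, though finiteness is all either argument uses.
\end{itemize}
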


\begin{proof}[Proof sketch]
The argument proceeds in two steps.

\emph{Step 1: Abstraction soundness.}
By the Galois connection (Definition~\ref{def:galois-connection})
and Proposition~\ref{prop:abstraction-soundness}, the inferred
capability set $\mathit{Cap}_I(s)$ over-approximates the actual
resource accesses of $s$:
\[
  \forall\, \tau,\; \forall\, (r, \ell) \in \tau
  \colon \ell \sqsubseteq \mathit{Cap}_I(s)(r).
\]

\emph{Step 2: Violation check.}
If $\mathit{Viol}(s) = \emptyset$, then
$\mathit{Cap}_I(s) \sqsubseteq \mathit{Cap}_D(s)$, i.e.,
$\mathit{Cap}_I(s)(r) \sqsubseteq \mathit{Cap}_D(s)(r)$
for all $r \in \mathcal{C}$.

\emph{Combining.}
By transitivity of $\sqsubseteq$ on $\mathbb{L}_{\mathit{cap}}$:
\[
  \ell \sqsubseteq \mathit{Cap}_I(s)(r)
  \sqsubseteq \mathit{Cap}_D(s)(r)
  \qquad \text{for all } (r, \ell) \in \tau.
\]
The full proof, including the construction of the Galois connection
and the fixpoint argument, appears in
Appendix~\ref{app:proof-analysis-soundness}.
\end{proof}

\smallskip
\noindent\textbf{Practical significance.}
Theorem~\ref{thm:analysis-soundness} offers something existing
heuristic scanners do not.  Cisco's
\texttt{skill-scanner}~\cite{ciscoscanner2026} explicitly warns
that ``no findings does not mean no risk.''  We can say more, though
strictly less than that warning's converse: \emph{if
\textsc{SkillFortify} reports no violations for the analyzed
properties, there are no violations of those properties in any
concrete execution}.

The qualification carries the weight.  The guarantee is relative to the
abstract domain, and ranges over the capability behaviours that domain
models; it is not a claim that no malicious behaviour is present.
Sound over-approximation of a fixed property is the same technique that
the Astr\'{e}e analyzer~\cite{astree2005} applies to Airbus A380 flight
software, but Astr\'{e}e's property---absence of runtime errors---is
decidable from the program text, whereas ours depends on a catalogue of
modelled behaviours that is necessarily partial.
Section~\ref{sec:limitations} reports the false negatives that follow,
and we do not claim the zero-false-negative status Astr\'{e}e attains
for its property.

\noindent\textbf{Limitations.}
The analysis is sound but not complete: false positives
(flagging a safe skill as potentially dangerous) are possible
because the abstract domain over-approximates.  Additionally,
the current analysis covers statically observable patterns; skills
that dynamically construct network URLs or shell commands at
runtime may require dynamic analysis (Section~\ref{sec:discussion}).

% === END sections/04-static-analysis.tex ===

% === BEGIN sections/05-sandboxing.tex ===
% ===========================================================================
% Section 5: Capability-Based Sandboxing
% ===========================================================================

\section{Capability-Based Sandboxing}
\label{sec:sandboxing}

The static analysis of Section~\ref{sec:static-analysis} verifies
skill behavior \emph{before} execution.  In this section, we
complement static verification with a runtime enforcement mechanism
based on the capability model of Dennis and
Van~Horn~\cite{dennis1966} and its object-capability (ocap)
extension by Miller~\cite{miller2006}.  The central guarantee is
\emph{capability confinement}: a sandboxed skill cannot access any
resource beyond its declared capability set, even if the skill
attempts to escalate privileges at runtime.

% ---------------------------------------------------------------------------
\subsection{Capability Model for Agent Skills}
\label{sec:capability-model}

We ground the sandbox in the object-capability model, where
capabilities are unforgeable references that couple resource
designations with access rights.

\begin{definition}[Skill Capability]
\label{def:skill-capability}
A \emph{skill capability} is a pair $c = (r, \ell)$ where
$r \in \mathcal{C}$ (Definition~\ref{def:resource-universe}) is a
resource type and $\ell \in \mathbb{L}_{\mathit{cap}}$
(Definition~\ref{def:access-level-lattice}) is an access level.
A capability is \emph{unforgeable}: a skill can only obtain a
capability through the four mechanisms of Dennis and
Van~Horn~\cite{dennis1966}:
\begin{enumerate}
  \item \textbf{Initial conditions}---held at sandbox creation,
  \item \textbf{Parenthood}---the creator obtains the only reference,
  \item \textbf{Endowment}---the creator grants a subset of its
        own capabilities, and
  \item \textbf{Introduction}---a capability holder introduces
        two other principals via a shared reference.
\end{enumerate}
\end{definition}

Capabilities in this model are the \emph{only} source of authority.
There is no ambient authority---a skill cannot access a resource
simply because it ``exists'' in the environment.  This is the
defining property of a \emph{capability-safe} system~\cite{maffeis2010}.

\begin{definition}[Declared Capability Set]
\label{def:declared-capability-set}
For a skill $s$, the \emph{declared capability set}
$\mathit{Cap}_D(s) \subseteq \mathcal{C} \times
\mathbb{L}_{\mathit{cap}}$ is the set of capabilities that $s$
declares in its manifest.  This set is specified by the skill
author and verified by \textsc{SkillFortify}'s static analysis
(Section~\ref{sec:three-phase}) before the skill is installed.
\end{definition}

\begin{definition}[Required Capability]
\label{def:required-capability}
For an action $a$ performed by a skill during execution, the
\emph{required capability} $\mathit{req}(a) = (r, \ell)$ is
the minimum capability needed to perform $a$.  For example,
reading a file requires $(\texttt{filesystem}, \mathsf{READ})$;
executing a shell command requires
$(\texttt{shell}, \mathsf{WRITE})$.
\end{definition}

The sandbox mediates every action $a$ by checking:
\begin{equation}
  \exists\, (r', \ell') \in \mathit{Cap}_D(s) \colon
  r' = \mathit{req}(a).r \;\wedge\;
  \mathit{req}(a).\ell \sqsubseteq \ell'.
  \label{eq:sandbox-check}
\end{equation}
If Equation~\eqref{eq:sandbox-check} fails, the action is denied
and a security finding is raised.

% ---------------------------------------------------------------------------
\subsection{Capability Attenuation}
\label{sec:capability-attenuation}

Agent skills frequently delegate work to sub-skills (e.g., a
``research assistant'' skill invokes a ``web search'' skill and
a ``file writer'' skill).  The capability model must ensure that
delegation does not amplify authority.

\begin{definition}[Capability Attenuation]
\label{def:capability-attenuation}
When a parent skill $s_p$ delegates to a child skill $s_c$:
\begin{equation}
  \mathit{Cap}_D(s_c) \subseteq \mathit{Cap}_D(s_p).
  \label{eq:attenuation}
\end{equation}
That is, the child can receive at most the capabilities the
parent already possesses.  This implements the \emph{Principle of
Least Authority} (POLA)~\cite{miller2006}: each skill should
receive exactly the capabilities it needs and no more.
\end{definition}

\noindent Formally, for each resource $r \in \mathcal{C}$:
\[
  \mathit{Cap}_D(s_c)(r) \sqsubseteq \mathit{Cap}_D(s_p)(r).
\]

\noindent This is the \emph{no authority amplification} property of
capability systems~\cite{millermyths2003}: no skill can grant a
capability it does not possess.  Combined with unforgeability,
this bounds the transitive authority of any delegation chain.

\begin{proposition}[Transitive Attenuation]
\label{prop:transitive-attenuation}
For a delegation chain $s_1 \to s_2 \to \cdots \to s_k$:
\[
  \mathit{Cap}_D(s_k) \subseteq \mathit{Cap}_D(s_{k-1})
  \subseteq \cdots \subseteq \mathit{Cap}_D(s_1).
\]
Authority can only decrease along a delegation chain; it can
never increase.
\end{proposition}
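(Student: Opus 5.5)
The plan is induction on the length $k$ of the delegation chain, using Definition~\ref{def:capability-attenuation} as the only step lemma and transitivity of inclusion to assemble the steps. Read as a delegation chain, $s_1 \to s_2 \to \cdots \to s_k$ means that for each $i \in \{1,\ldots,k-1\}$ the skill $s_i$ delegates directly to $s_{i+1}$. Definition~\ref{def:capability-attenuation} then gives the local inclusion
\[
  \mathit{Cap}_D(s_{i+1}) \subseteq \mathit{Cap}_D(s_i)
\]
for every such $i$. The displayed chain of inclusions in the proposition is exactly the conjunction of these $k-1$ local facts.

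\emph{Base cases.} For $k=1$ the chain is trivially true by reflexivity. For $k=2$ the claim is the attenuation condition itself.

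\emph{Inductive step.} Assume the claim holds for chains of length $k-1$. Apply it to the prefix $s_1 \to \cdots \to s_{k-1}$, and append the single-step fact $\mathit{Cap}_D(s_k) \subseteq \mathit{Cap}_D(s_{k-1})$. Because $\subseteq$ is transitive, we also get the end-to-end bound $\mathit{Cap}_D(s_k) \subseteq \mathit{Cap}_D(s_j)$ for every $j \le k$. This end-to-end form is what the closing sentence ("authority can never increase") really asserts.

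The pointwise reading $\mathit{Cap}_D(s_{i+1})(r) \sqsubseteq \mathit{Cap}_D(s_i)(r)$, stated right after Definition~\ref{def:capability-attenuation}, follows the same way. For each $r \in \mathcal{C}$ we chain the per-step inequalities using transitivity of $\sqsubseteq$ on the chain $\mathbb{L}_{\mathit{cap}}$ (Definition~\ref{def:access-level-lattice}), and the product-lattice order then lifts this to $\mathit{Cap}_D(s_k) \sqsubseteq \mathit{Cap}_D(s_1)$.

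The main obstacle is not the induction, which is routine, but making sure the hypothesis actually holds at every link. The proposition is only meaningful if attenuation is enforced at each delegation, and if a child cannot obtain capabilities through some other route. Definition~\ref{def:skill-capability} allows four routes: initial conditions, parenthood, endowment, and introduction. I would argue the following. Endowment is precisely the attenuated delegation. Parenthood hands over a reference the parent already holds. Introduction transfers only a reference held by the introducer, who by the induction hypothesis is itself bounded by $\mathit{Cap}_D(s_1)$. Initial conditions apply only to $s_1$ at sandbox creation. Together with the absence of ambient authority (capability safety), this shows that Definition~\ref{def:capability-attenuation} captures every way authority can flow along the chain. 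The induction above then closes the argument.
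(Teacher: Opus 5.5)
Your induction on chain length, applying the attenuation inclusion \eqref{eq:attenuation} at each link and chaining with transitivity of $\subseteq$ (or of $\sqsubseteq$ pointwise), is exactly the paper's proof, which it states as immediate. Your final paragraph on the four Dennis--Van Horn routes is not needed: the proposition is about declared sets $\mathit{Cap}_D$, and Definition~\ref{def:capability-attenuation} already imposes the inclusion at every delegation as a hypothesis. That paragraph would also fail as written, because an introducer need not lie on the chain, so your induction hypothesis does not bound it.
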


\begin{proof}
Immediate by induction on the chain length using
Equation~\eqref{eq:attenuation}.
\end{proof}

% ---------------------------------------------------------------------------
\subsection{CapabilitySet Verification Operations}
\label{sec:capability-operations}

The sandbox exposes three verification operations that implement
the formal model.

\smallskip
\noindent\textbf{Operation 1: \textsc{Permits}$(c_{\mathit{req}})$.}
Given a required capability $c_{\mathit{req}} = (r, \ell)$, check
whether the declared set covers it:
\[
  \textsc{Permits}(\mathit{Cap}_D, c_{\mathit{req}}) =
  \begin{cases}
    \mathit{true}  & \text{if } \exists\, (r', \ell') \in
      \mathit{Cap}_D \colon r' = r \wedge
      \ell \sqsubseteq \ell', \\
    \mathit{false} & \text{otherwise}.
  \end{cases}
\]

\smallskip
\noindent\textbf{Operation 2:
\textsc{IsSubsetOf}$(\mathit{Cap}_1, \mathit{Cap}_2)$.}
Formal POLA verification---check that one capability set is
dominated by another:
\[
  \textsc{IsSubsetOf}(\mathit{Cap}_1, \mathit{Cap}_2) =
  \forall\, (r, \ell) \in \mathit{Cap}_1 \colon
  \textsc{Permits}(\mathit{Cap}_2, (r, \ell)).
\]
This is used both at install-time (checking that inferred
capabilities do not exceed declared, $\mathit{Cap}_I \sqsubseteq
\mathit{Cap}_D$) and at delegation-time (checking
$\mathit{Cap}_D(s_c) \sqsubseteq \mathit{Cap}_D(s_p)$).

\smallskip
\noindent\textbf{Operation 3:
\textsc{Violations}$(\mathit{Cap}_I, \mathit{Cap}_D)$.}
Return the set of capabilities that exceed declared bounds:
\[
  \textsc{Violations}(\mathit{Cap}_I, \mathit{Cap}_D) =
  \bigl\{ (r, \ell) \in \mathit{Cap}_I \mid
    \neg\, \textsc{Permits}(\mathit{Cap}_D, (r, \ell)) \bigr\}.
\]
An empty violation set constitutes a formal proof that the skill
respects its declared capability boundary:
$\textsc{Violations} = \emptyset \;\Rightarrow\;
\mathit{Cap}_I \sqsubseteq \mathit{Cap}_D$.

% ---------------------------------------------------------------------------
\subsection{Theorem~3: Capability Confinement}
\label{sec:confinement-theorem}

We state the capability confinement guarantee in two parts.
Theorem~3a provides the \emph{static} guarantee that
\textsc{SkillFortify}~v1 delivers: if the static analysis finds no
violations, then no operation in the skill's code exceeds declared
capabilities in the abstract domain.  Theorem~3b provides the
\emph{runtime design} guarantee for future sandbox implementations.

\begin{theorem}[Static Capability Confinement]
\label{thm:static-confinement}%
\label{thm:capability-confinement}% backward-compat alias
Let $s$ be a skill with declared capability set $\mathit{Cap}_D(s)$.
If the static analysis reports no violations---i.e.,
$\mathit{Viol}(s) = \emptyset$---then for every code construct in
the skill's definition, no operation $o$ in $s$ requires a
capability exceeding the declared level.  Formally:
\begin{equation}
  \forall\, o \in \mathit{ops}(s) \colon
  \mathit{cap}(o) \sqsubseteq
  \bigsqcup \mathit{Cap}_D(s).
  \label{eq:static-confinement}
\end{equation}
\end{theorem}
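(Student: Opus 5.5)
The plan is to reduce the statement to the three-phase analysis of Section~\ref{sec:three-phase}, in the same shape as Theorem~\ref{thm:analysis-soundness}, but with \emph{syntactic operations} of the skill definition in place of concrete trace events. First I will fix what $\mathit{ops}(s)$ and $\mathit{cap}(o)$ mean. Each operation $o$ is a code construct (URL reference, HTTP verb, shell invocation, environment reference, file keyword, sub-skill call, database keyword) recognised by Phase~1. $\mathit{cap}(o)=(r_o,\ell_o)$ is the capability assigned to it by the corresponding row of Table~\ref{tab:galois-verification}. I will read $\bigsqcup \mathit{Cap}_D(s)$ resource-wise, as the join over all declared pairs with resource $r_o$, which for the chain $\mathbb{L}_{\mathit{cap}}$ is just $\mathit{Cap}_D(s)(r_o)$ viewed as an element of the product lattice. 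With this reading, \eqref{eq:static-confinement} says $\ell_o \sqsubseteq \mathit{Cap}_D(s)(r_o)$ for every $o$.

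\emph{Step 1 (inference dominates every operation).} Phase~1 computes $\mathit{Cap}_I(s)$ as the join of the transfer-function outputs over all matched constructs, starting from $\bot$. Each transfer function only raises components, since it imposes a lower bound. By Lemma~\ref{lem:transfer-soundness} and the monotonicity of $\sqcup$, $\mathit{cap}(o) \sqsubseteq \mathit{Cap}_I(s)$ for every $o\in\mathit{ops}(s)$; pointwise, this gives $\ell_o \sqsubseteq \mathit{Cap}_I(s)(r_o)$. I will prove this by induction on the number of processed constructs, using that the join is an upper bound of its arguments. \emph{Step 2 (violation check).} By \eqref{eq:violations}, $\mathit{Viol}(s)=\emptyset$ means $\mathit{Cap}_I(s)(r)\sqsubseteq\mathit{Cap}_D(s)(r)$ for all $r\in\mathcal{C}$, which is exactly \eqref{eq:cap-violation}. \emph{Step 3.} Transitivity of $\sqsubseteq$ on the chain then gives $\ell_o\sqsubseteq\mathit{Cap}_D(s)(r_o)$, and lifting pointwise gives \eqref{eq:static-confinement}.

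The main obstacle is Step~1's completeness assumption: every operation in $\mathit{ops}(s)$ must actually be matched by some Phase~1 rule. Otherwise an unmatched construct (for example a dynamically built command) escapes $\mathit{Cap}_I(s)$. I will handle this by defining $\mathit{ops}(s)$ as the set of statically observable constructs in the modelled pattern catalogue, which is consistent with the limitations noted after Theorem~\ref{thm:analysis-soundness}, so the guarantee holds relative to the abstract domain. A secondary subtlety is the mismatch between the set form of $\mathit{Cap}_D$ in Definition~\ref{def:declared-capability-set} and its function form in Definition~\ref{def:capability-set}. I will identify a set with the function $r\mapsto\bigsqcup\{\ell \mid (r,\ell)\in\mathit{Cap}_D(s)\}$, taking $\mathsf{NONE}$ when no pair has resource $r$. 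Under this identification $\textsc{Permits}$ agrees with the pointwise order, because $\mathbb{L}_{\mathit{cap}}$ is a chain, so the join is attained by one element.
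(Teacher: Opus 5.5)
Your proposal is correct and is essentially the paper's proof. In both, $\mathit{Cap}_I(s)$ is the resource-wise join of the operation capabilities over $\mathit{ops}(s)$, so each $\mathit{cap}(o).\ell$ lies below $\mathit{Cap}_I(s)(\mathit{cap}(o).r)$; then $\mathit{Viol}(s)=\emptyset$ gives $\mathit{Cap}_I(s)\sqsubseteq\mathit{Cap}_D(s)$ pointwise, and transitivity yields the resource-wise form $\mathit{cap}(o).\ell \sqsubseteq \mathit{Cap}_D(s)(\mathit{cap}(o).r)$ that the appendix restates. The only difference is packaging: the paper frames the first step as a structural induction over sequencing, conditionals, loops and calls, which reduces to your upper-bound-of-a-join argument because $\mathit{ops}$ of a composite is the union of its parts' operations; your appeal to Lemma~\ref{lem:transfer-soundness} in Step~1 is not actually needed.
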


\begin{proof}[Proof sketch]
We proceed by structural induction on the syntax of the skill's
code.  For each code construct---assignment, conditional, loop,
function call---we show that the capability requirements of all
reachable operations are bounded by the inferred capability set,
which in turn is bounded by the declared set when
$\mathit{Viol}(s) = \emptyset$.

\emph{Base case} (single operation $o$):
The operation $o$ has capability requirement $\mathit{cap}(o)$.
The abstract analysis infers $\mathit{cap}(o) \sqsubseteq
\mathit{Cap}_I(s)(r)$ for the relevant resource $r$ (by soundness
of the transfer functions, Theorem~\ref{thm:analysis-soundness}).
Since $\mathit{Viol}(s) = \emptyset$, we have
$\mathit{Cap}_I(s) \sqsubseteq \mathit{Cap}_D(s)$, so
$\mathit{cap}(o) \sqsubseteq \mathit{Cap}_D(s)(r)$.

\emph{Sequential composition} ($s_1;\, s_2$):
By the induction hypothesis, all operations in $s_1$ and $s_2$
individually satisfy the bound.  Their union does not introduce
new operations, so the bound holds for the composition.

\emph{Conditional} ($\texttt{if}~b~\texttt{then}~s_1~
\texttt{else}~s_2$):
The inferred capabilities include both branches.  Any operation
reachable at runtime belongs to one branch and is bounded by
the induction hypothesis.

\emph{Loop} ($\texttt{while}~b~\texttt{do}~s_{\text{body}}$):
The inferred capabilities of the loop equal those of the body
(repetition does not introduce new capabilities).  By the induction
hypothesis on the body, the bound holds.

The full proof with all cases appears in
Appendix~\ref{app:proof-confinement}.
\end{proof}

\begin{theorem}[Runtime Capability Confinement---Design]
\label{thm:runtime-confinement}
If a runtime sandbox implementing the capability-safe execution
model (no ambient authority, unforgeable capabilities, capability
check on every action per Equation~\eqref{eq:sandbox-check}) is
deployed, then for any skill $s$ with declared capability set
$\mathit{Cap}_D(s)$ and for any action $a$ performed by $s$ during
execution:
\begin{equation}
  \mathit{req}(a).\ell \sqsubseteq \mathit{Cap}_D(s)(\mathit{req}(a).r).
  \label{eq:runtime-confinement}
\end{equation}
\end{theorem}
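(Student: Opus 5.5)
The proof is essentially a case analysis on the sandbox's mediation step, lifted to whole executions by induction on the length of the trace. We model an execution of $s$ inside the sandbox as a sequence of attempted actions $a_1, a_2, \ldots, a_n$. An action is \emph{performed} only if the mediation check of Equation~\eqref{eq:sandbox-check} succeeds. We also need to read the set-valued $\mathit{Cap}_D(s) \subseteq \mathcal{C}\times\mathbb{L}_{\mathit{cap}}$ of Definition~\ref{def:declared-capability-set} as the lattice element of Definition~\ref{def:capability-set}. We do this by setting $\mathit{Cap}_D(s)(r) = \bigsqcup\{\ell' \mid (r,\ell') \in \mathit{Cap}_D(s)\}$, with the empty join equal to $\bot = \mathsf{NONE}$. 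This join is well defined because $\mathbb{L}_{\mathit{cap}}$ is a finite chain.

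\textbf{Step 1: a single mediated action.} Suppose action $a$ is performed. Then the check succeeded, so some $(r',\ell') \in \mathit{Cap}_D(s)$ has $r' = \mathit{req}(a).r$ and $\mathit{req}(a).\ell \sqsubseteq \ell'$. Since $\ell'$ is one of the elements being joined, $\ell' \sqsubseteq \mathit{Cap}_D(s)(\mathit{req}(a).r)$. Transitivity of $\sqsubseteq$ on the chain then gives Equation~\eqref{eq:runtime-confinement} for $a$.

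\textbf{Step 2: the mediation check is complete.} Step~1 only applies if every effect of an action passes through the check, and if the set consulted is really $\mathit{Cap}_D(s)$. We establish this as an invariant by induction on the trace: at every step, the capabilities held by $s$ are contained in $\mathit{Cap}_D(s)$. At sandbox creation the invariant holds because $s$ is endowed exactly $\mathit{Cap}_D(s)$ (\emph{initial conditions}). For the inductive step we run through the four acquisition mechanisms of Definition~\ref{def:skill-capability}.
\begin{itemize}
\item \emph{Parenthood} creates fresh references that carry no new resource authority.
\item \emph{Endowment} and \emph{introduction} transfer a subset of a holder's capabilities. When the holder is $s$'s parent, the invariant is preserved by Definition~\ref{def:capability-attenuation} together with Proposition~\ref{prop:transitive-attenuation}.
\item Capabilities cannot be forged.
\item By the no-ambient-authority hypothesis, no action can touch a resource except through a held capability, so no action bypasses the check.
\end{itemize}
This is the step where the perfect sandbox assumption of Section~\ref{sec:dy-skill} is invoked, exactly as perfect cryptography is invoked in Dolev--Yao arguments.

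\textbf{Main obstacle.} The hard part is Step~2, specifically introduction. A third party could introduce $s$ to a capability outside $\mathit{Cap}_D(s)$, and the model as stated does not forbid this. I would first try to close the gap by making the design hypothesis explicit. The sandbox should check actions against the \emph{static} manifest $\mathit{Cap}_D(s)$, fixed at install time, rather than against the dynamically held references. Under that reading, whatever references $s$ acquires, a performed action still has to pass Equation~\eqref{eq:sandbox-check} against $\mathit{Cap}_D(s)$, and Step~1 alone completes the proof. If the intended reading is instead the dynamic reference graph, I would appeal to the capability-safety theorem of Maffeis, Mitchell and Taly~\cite{maffeis2010}: authority propagates only along the reachability graph. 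We would then additionally require that every introducer is itself confined under $\mathit{Cap}_D(s)$. That yields the ``transitive closure of declared capabilities'' bound from the introduction, rather than the bare $\mathit{Cap}_D(s)$ bound.
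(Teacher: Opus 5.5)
Your proposal is correct, and its working part takes a more direct route than the paper. The paper argues through held authority. It invokes the capability-safety result of \citet{maffeis2010} to claim that the only capabilities available to $s$ are those in $\mathit{Cap}_D(s)$. It then cites Proposition~\ref{prop:transitive-attenuation} to assert that the four Dennis--Van Horn mechanisms can only attenuate. Only after that does it conclude that the check of Equation~\eqref{eq:sandbox-check} suffices, deferring the details to Maffeis et al.'s Theorem~4.1.

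Your Step~1 shows that hypothesis~(c) closes the proof on its own. The reason is that Equation~\eqref{eq:sandbox-check} consults $\mathit{Cap}_D(s)$ itself, not whatever references $s$ currently holds. You also read $\mathit{Cap}_D(s)(r) = \bigsqcup\{\ell' \mid (r,\ell') \in \mathit{Cap}_D(s)\}$ explicitly. This reconciles the set-valued Definition~\ref{def:declared-capability-set} with the function-valued conclusion, a mismatch the paper never addresses. Together these give a complete, self-contained proof in which the no-ambient-authority and unforgeability hypotheses play no role.

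Beyond economy, your route pinpoints where the paper's argument is weak. Proposition~\ref{prop:transitive-attenuation} follows from the \emph{assumed} parent-to-child condition~\eqref{eq:attenuation}. It says nothing about a third party introducing $s$ to a capability, so the held-authority invariant the paper relies on is not established for introduction. That is exactly the case you isolate.

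The paper's route aims at something stronger: confinement of the reference graph. That would matter for a sandbox that checks held references rather than the manifest. As you note, though, under that reading the honest bound is the transitive closure stated in contribution~(3) of the introduction, not the bare $\mathit{Cap}_D(s)$ of the theorem.

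Your Step~2 does not go through as written, so present it as motivation or drop it. The proof is Step~1 together with hypothesis~(c).
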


\noindent Theorem~\ref{thm:runtime-confinement} is a \emph{design
theorem}: it specifies the guarantee that a correctly implemented
runtime sandbox must provide.  The proof relies on the
capability-safe execution model of Dennis and
Van~Horn~\cite{dennis1966} and the object-capability confinement
result of Maffeis, Mitchell, and Taly~\cite{maffeis2010}.  We
include a proof sketch in Appendix~\ref{app:proof-confinement}
and note that \textsc{SkillFortify}~v1 provides the
\emph{static} guarantee (Theorem~\ref{thm:static-confinement});
runtime enforcement is a design contribution for future work.

\smallskip
\noindent\textbf{Relation to Maffeis, Mitchell \& Taly (2010).}
Theorem~\ref{thm:runtime-confinement} instantiates the
\emph{capability safety theorem} of \citet{maffeis2010} in
the agent skill domain.  Their result establishes that in
a capability-safe language (JavaScript strict mode in their
formalization), untrusted code is confined to the authority
explicitly passed to it.  We extend this to agent skills, where
the ``language'' is the skill execution runtime and the
``authority'' is the declared capability set.

\smallskip
\noindent\textbf{Combined guarantee.}
Theorems~\ref{thm:analysis-soundness}
and~\ref{thm:static-confinement} provide the defense
currently implemented in \textsc{SkillFortify}:
\begin{itemize}
  \item \emph{Analysis soundness} (Theorem~\ref{thm:analysis-soundness}):
        The inferred capabilities over-approximate the concrete
        resource accesses.
  \item \emph{Static confinement} (Theorem~\ref{thm:static-confinement}):
        If no violations are found, all operations in the skill's
        code are within declared capabilities.
\end{itemize}
Together with the runtime design theorem
(Theorem~\ref{thm:runtime-confinement}), these provide defense in
depth: static verification before installation, with the option of
runtime enforcement during execution.  A skill that passes the
static check is provably confined in the abstract domain: it cannot
exfiltrate data ($c_1$), escalate privileges ($c_2$), or abuse
undeclared resources, under the DY-Skill attacker model
(Section~\ref{sec:dy-skill}).

\smallskip
\noindent\textbf{Scope of guarantees.}
\textsc{SkillFortify}~v1 provides static analysis guarantees
(Theorems~\ref{thm:analysis-soundness}
and~\ref{thm:static-confinement}).  The runtime confinement
guarantee (Theorem~\ref{thm:runtime-confinement}) specifies the
contract for a future sandbox implementation.  This separation is
intentional: static analysis catches capability violations before
installation (the primary defense), while runtime sandboxing
provides a safety net for dynamically constructed operations that
static analysis cannot fully resolve (see
Section~\ref{sec:discussion}).

% === END sections/05-sandboxing.tex ===

% === BEGIN sections/06-dependency-graph.tex ===
%!TEX root = ../main.tex
% Section 6: Agent Dependency Graph & Lockfile Semantics
% Contribution C4: ADG + skill-lock.json + ASBOM
% ~4-5 pages

\section{Agent Dependency Graph and Lockfile Semantics}
\label{sec:dependency}%
\label{sec:dependency-graph}% alias for cross-references from introduction

Sections~\ref{sec:threat-model}--\ref{sec:sandboxing} established how to
\emph{detect} and \emph{confine} individual malicious skills.  In practice,
agent configurations rarely consist of a single skill in isolation: modern
agents compose dozens of skills, each of which may itself depend on other
skills, shared libraries, or external MCP servers.  A single compromised
transitive dependency can undermine the security of the entire installation,
as demonstrated by the SolarWinds~\cite{solarwinds2020} and
Log4Shell~\cite{log4shell2021} incidents in traditional software supply
chains, and by the ClawHavoc campaign~\cite{clawhavoc2026} in the agent
ecosystem.

This section introduces the \emph{Agent Dependency Graph} (ADG), a formal
data structure that captures skills, their version spaces, inter-skill
dependencies, conflicts, and per-version capability requirements.  We
present a SAT-based resolution algorithm that simultaneously satisfies
dependency, conflict, and security constraints, and we define
\emph{lockfile semantics} that guarantee reproducible, tamper-detectable
installations.  We conclude with an Agent Skill Bill of Materials (ASBOM)
output aligned with the CycloneDX~1.6 standard~\cite{cyclonedx2024}.

% ======================================================================
\subsection{Agent Dependency Graph (ADG)}
\label{sec:adg-definition}

We model the agent skill ecosystem as a \emph{package universe} extended
with security metadata, following the formalism of Mancinelli et
al.~\cite{mancinelli2006} and Tucker et al.~\cite{tucker2007opium}.

\begin{definition}[Agent Dependency Graph]
\label{def:adg}
An \emph{Agent Dependency Graph} is a tuple
$\mathit{ADG} = (\mathcal{S}, V, D, C, \mathit{Cap})$ where:
\begin{itemize}
  \item $\mathcal{S}$ is a finite set of \emph{skill names}.
  \item $V \colon \mathcal{S} \to 2^{\mathbb{N}}$ maps each skill to its
        set of available versions, ordered by semantic versioning
        precedence~\cite{semver2013}.
  \item $D \colon \mathcal{S} \times \mathbb{N} \to
        2^{\mathcal{S} \times \mathit{Constraint}}$ maps each
        skill-version pair $(s, v)$ to a set of \emph{dependency edges},
        where each edge $(q, \gamma)$ requires that some version of skill
        $q$ satisfying constraint $\gamma$ be co-installed.
  \item $C \colon \mathcal{S} \times \mathbb{N} \to
        2^{\mathcal{S} \times \mathit{Constraint}}$ maps each
        skill-version pair to a set of \emph{conflict edges}: if
        $(q, \gamma) \in C(s,v)$, then no version of $q$ satisfying
        $\gamma$ may be co-installed with $(s, v)$.
  \item $\mathit{Cap} \colon \mathcal{S} \times \mathbb{N} \to
        2^{\mathcal{C}}$ maps each skill-version pair to the set of
        capabilities it requires at runtime, where $\mathcal{C}$ is the
        capability universe from Definition~\ref{def:capability-set}.
\end{itemize}
\end{definition}

The ADG extends the classical package universe of Mancinelli et
al.~\cite{mancinelli2006} with the capability map $\mathit{Cap}$.
This extension is critical: it enables the resolver to reject
skill-versions that demand capabilities exceeding the security policy,
\emph{before} installation occurs.  Traditional package managers lack
this dimension entirely.

% ======================================================================
\subsection{Version Constraints}
\label{sec:version-constraints}

\begin{definition}[Version Constraint]
\label{def:version-constraint}
A \emph{version constraint} $\gamma$ is a predicate over version numbers.
We support the following atomic forms, mirroring PEP~440 and SemVer
conventions:
\begin{align*}
  \gamma &::= \texttt{==}\,v
        \mid \texttt{!=}\,v
        \mid \texttt{>=}\,v
        \mid \texttt{<=}\,v
        \mid \texttt{>}\,v
        \mid \texttt{<}\,v
        \mid \texttt{\^{}}\,v
        \mid \texttt{\~{}}\,v
        \mid \texttt{*}
\end{align*}
Compound constraints are formed by conjunction:
$\gamma_1 \wedge \gamma_2 \wedge \cdots \wedge \gamma_k$.  A version $w$
\emph{satisfies} a compound constraint iff it satisfies every atomic
constituent.
\end{definition}

The \emph{caret} operator $\texttt{\^{}}v$ permits any version with the
same major number and $\geq v$ (or, if the major number is~$0$, the same
major and minor).  The \emph{tilde} operator $\texttt{\~{}}v$ permits any
version with the same major and minor number and patch $\geq$ the target
patch.  The \emph{wildcard} $\texttt{*}$ is trivially satisfied by any
version.  These semantics follow standard package manager
conventions~\cite{gibb2026package}.

% ======================================================================
\subsection{SAT-Based Dependency Resolution}
\label{sec:sat-resolution}

Dependency resolution for package managers is NP-complete, reducible from
3-SAT~\cite{mancinelli2006}.  We follow the OPIUM approach~\cite{tucker2007opium}
and encode the resolution problem as a propositional satisfiability
instance, solved by a modern Conflict-Driven Clause Learning (CDCL)
solver.

\begin{definition}[SAT Encoding]
\label{def:sat-encoding}
Given an ADG $(\mathcal{S}, V, D, C, \mathit{Cap})$ and a set of
\emph{allowed capabilities} $\mathcal{A} \subseteq \mathcal{C}$, we
introduce a Boolean variable $x_{s,v}$ for each skill $s \in \mathcal{S}$
and version $v \in V(s)$.  Variable $x_{s,v} = \mathit{true}$ means
``skill $s$ at version $v$ is included in the installation.''  The
following clauses encode the constraints:

\medskip
\noindent\textbf{(C1) At-most-one version per skill.}
For each skill $s$ and every pair of distinct versions $v_i, v_j \in V(s)$:
\begin{equation}
  \neg x_{s,v_i} \lor \neg x_{s,v_j}
  \label{eq:amo}
\end{equation}

\noindent\textbf{(C2) Root requirements.}
For each user-specified requirement $(s, \gamma)$:
\begin{equation}
  \bigvee_{\{w \in V(s) \mid w \models \gamma\}} x_{s,w}
  \label{eq:root}
\end{equation}

\noindent\textbf{(C3) Dependency implications.}
For each $(s,v)$ and each dependency $(q, \gamma) \in D(s,v)$:
\begin{equation}
  \neg x_{s,v} \lor \bigvee_{\{w \in V(q) \mid w \models \gamma\}} x_{q,w}
  \label{eq:dep}
\end{equation}
If no version of $q$ satisfies $\gamma$, this degenerates to a unit clause
$\neg x_{s,v}$, effectively banning $(s,v)$.

\medskip
\noindent\textbf{(C4) Conflict exclusions.}
For each $(s,v)$ and each conflict $(q, \gamma) \in C(s,v)$, and for each
$w \in V(q)$ with $w \models \gamma$:
\begin{equation}
  \neg x_{s,v} \lor \neg x_{q,w}
  \label{eq:conflict}
\end{equation}

\noindent\textbf{(C5) Capability bounds.}
For each $(s,v)$ such that $\mathit{Cap}(s,v) \not\subseteq \mathcal{A}$:
\begin{equation}
  \neg x_{s,v}
  \label{eq:cap-bound}
\end{equation}
\end{definition}

The key novelty over classical package resolution is clause
family~(C5): it integrates the static capability analysis of
Section~\ref{sec:analysis} directly into the resolution process,
ensuring that no installation plan can include a skill-version
whose runtime requirements exceed the declared security policy.

\paragraph{Implementation.}
We use the Glucose3 CDCL solver via the PySAT
framework~\cite{imms2018}.  At-most-one constraints are encoded
as pairwise negation (Eq.~\ref{eq:amo}), which is compact for
the typical case of $|V(s)| \leq 20$ versions per skill.  For
larger version spaces, sequential counter encodings from the PySAT
cardinality module can be substituted.

% ======================================================================
\subsection{Secure Installation}
\label{sec:secure-installation}

\begin{definition}[Secure Installation]
\label{def:secure-installation}
An \emph{installation} $I \subseteq \mathcal{S} \times \mathbb{N}$ is a set
of skill-version pairs.  $I$ is \emph{secure} with respect to
$(\mathcal{S}, V, D, C, \mathit{Cap}, \mathcal{A})$ if and only if:
\begin{enumerate}
  \item \textbf{Completeness.}
    For every $(s,v) \in I$ and every $(q, \gamma) \in D(s,v)$, there
    exists $(q, w) \in I$ such that $w \models \gamma$.
  \item \textbf{Consistency.}
    For every $(s,v) \in I$ and every $(q, \gamma) \in C(s,v)$, there
    is no $(q, w) \in I$ with $w \models \gamma$.
  \item \textbf{Capability safety.}
    For every $(s,v) \in I$:
    $\mathit{Cap}(s,v) \subseteq \mathcal{A}$.
\end{enumerate}
\end{definition}

Condition~(3) is the agent-specific extension: it ties the installation
plan to the capability-based sandboxing model of Section~\ref{sec:sandboxing},
ensuring that every installed skill respects the declared security
boundary.  In traditional package managers (apt, pip, npm), no analogue
of this condition exists.

% ======================================================================
\subsection{Lockfile Semantics}
\label{sec:lockfile}

A \emph{lockfile} is the deterministic serialization of a secure
installation.  We define the \texttt{skill-lock.json} format to
provide three guarantees: reproducibility, integrity, and auditability.

\begin{definition}[Lockfile]
\label{def:lockfile}
A \emph{lockfile} $\mathcal{L}$ is a tuple
$\mathcal{L} = (I, H, M)$ where:
\begin{itemize}
  \item $I \subseteq \mathcal{S} \times \mathbb{N}$ is a secure
        installation (Definition~\ref{def:secure-installation}).
  \item $H \colon I \to \{0,1\}^{256}$ maps each installed skill-version
        to its SHA-256 content hash.
  \item $M$ is a metadata record containing the resolution strategy,
        the allowed capability set $\mathcal{A}$, a generation timestamp,
        and the tool version.
\end{itemize}
\end{definition}

\begin{definition}[Lockfile Reproducibility]
\label{def:lockfile-reproducibility}
A resolver $\mathcal{R}$ is \emph{reproducible} if, for any ADG and
requirement set, running $\mathcal{R}$ twice on the same input
produces identical lockfiles:
\[
  \mathcal{R}(\mathit{ADG}, \mathit{reqs}, \mathcal{A})
  = \mathcal{R}(\mathit{ADG}, \mathit{reqs}, \mathcal{A})
\]
\end{definition}

We achieve reproducibility by fixing the variable ordering in the
SAT encoding (lexicographic by skill name, then descending version)
and using a deterministic solver configuration.  The lockfile itself
is serialized with sorted keys and canonical JSON formatting, ensuring
byte-identical output for identical inputs.

\begin{definition}[Integrity Verification]
\label{def:lockfile-integrity}
At installation time, for each $(s, v) \in I$, the installer computes
the SHA-256 hash of the skill content and verifies:
\[
  \mathtt{sha256}(\mathit{content}(s, v)) = H(s, v)
\]
A mismatch indicates tampering: either the skill source was modified
after resolution (supply chain attack), or the lockfile was corrupted.
\end{definition}

This mechanism is analogous to the integrity fields in
\texttt{package-lock.json} (npm) and \texttt{Cargo.lock} (Rust), adapted
for the agent skill ecosystem where no prior lockfile standard exists.

% ======================================================================
\subsection{Agent Skill Bill of Materials (ASBOM)}
\label{sec:asbom}

Regulatory frameworks including the EU AI Act (Article~17, technical
documentation requirements)~\cite{euaiact2024} and the NIST AI Risk
Management Framework~\cite{nistai2023} increasingly require
transparency about third-party AI components.  We generate an
\emph{Agent Skill Bill of Materials} (ASBOM) as a structured inventory
of all skills in an agent's dependency closure.

\begin{definition}[Agent Skill Bill of Materials]
\label{def:asbom}
An ASBOM is a CycloneDX~1.6-compatible~\cite{cyclonedx2024} document
that records, for each $(s, v) \in I$:
\begin{itemize}
  \item \textbf{Identity:} skill name, version, format (Claude, MCP,
        OpenClaw), and content hash.
  \item \textbf{Capabilities:} the set $\mathit{Cap}(s,v)$ of declared
        runtime permissions.
  \item \textbf{Dependencies:} resolved dependency edges
        $\{(q, w) \mid (q, \gamma) \in D(s,v),\; (q, w) \in I,\;
        w \models \gamma\}$.
  \item \textbf{Trust metadata:} trust score, trust level
        (Section~\ref{sec:trust}), and provenance information.
  \item \textbf{Vulnerability status:} known CVEs or analysis findings
        affecting $(s,v)$.
\end{itemize}
\end{definition}

The CycloneDX format enables integration with existing enterprise
vulnerability management platforms (e.g., Dependency-Track, Grype)
and provides the machine-readable evidence required for compliance
audits.

% ======================================================================
\subsection{Transitive Vulnerability Propagation}
\label{sec:vuln-propagation}

A critical advantage of the graph-based model is the ability to
propagate vulnerability information transitively.

\begin{definition}[Affected Installation]
\label{def:affected}
Given a set of known-vulnerable skill-versions
$\mathit{Vuln} \subseteq \mathcal{S} \times \mathbb{N}$, a skill-version
$(s, v) \in I \setminus \mathit{Vuln}$ is \emph{transitively affected}
if there exists a path in the dependency graph from $(s, v)$ to some
$(q, w) \in \mathit{Vuln}$.  The set of affected installations is:
\[
  \mathit{Affected}(I, \mathit{Vuln}) = \{(s,v) \in I \setminus \mathit{Vuln}
  \mid \exists\, (q,w) \in \mathit{Vuln} \;.\;
  (s,v) \leadsto^{*} (q,w) \}
\]
where $\leadsto^{*}$ denotes transitive reachability through dependency
edges.
\end{definition}

We compute $\mathit{Affected}$ by BFS over the reverse dependency graph
(from vulnerable nodes to their dependents), analogous to how
\texttt{npm audit} identifies transitively affected packages.  The
difference is that our propagation also considers \emph{capability
implications}: if a vulnerable skill has capability $c \in
\mathit{Cap}(q, w)$, then all skills in its reverse transitive closure
that delegate or depend on capability $c$ are flagged with elevated
severity.

% ======================================================================
\subsection{Theorem 4: Resolution Soundness}
\label{sec:resolution-soundness}

\begin{theorem}[Resolution Soundness]
\label{thm:resolution-soundness}
Let $\mathit{ADG} = (\mathcal{S}, V, D, C, \mathit{Cap})$ be an Agent
Dependency Graph and $\mathcal{A} \subseteq \mathcal{C}$ a set of
allowed capabilities.  Let $\Phi$ be the conjunction of all clauses
(C1)--(C5) from Definition~\ref{def:sat-encoding}.  Then:
\begin{enumerate}
  \item $\Phi$ is satisfiable if and only if a secure installation
        (Definition~\ref{def:secure-installation}) exists.
  \item Any satisfying assignment $\sigma \models \Phi$ induces a secure
        installation $I_\sigma = \{(s,v) \mid \sigma(x_{s,v}) =
        \mathit{true}\}$, and the corresponding lockfile
        $\mathcal{L}_\sigma = (I_\sigma, H, M)$ satisfies all dependency,
        conflict, and capability constraints.
\end{enumerate}
\end{theorem}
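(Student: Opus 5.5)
The proof goes through the obvious bijection between truth assignments over the variables $\{x_{s,v}\}$ and subsets $I \subseteq \{(s,v) \mid s \in \mathcal{S}, v \in V(s)\}$, namely $\sigma \mapsto I_\sigma$ and $I \mapsto \sigma_I$ with $\sigma_I(x_{s,v}) = \mathit{true}$ iff $(s,v) \in I$. Each clause family then corresponds to one condition of Definition~\ref{def:secure-installation}, and I would prove that correspondence family by family:
\begin{itemize}
\item (C3) against \emph{Completeness}: $\neg x_{s,v} \lor \bigvee_{w \models \gamma} x_{q,w}$ holds under $\sigma_I$ exactly when $(s,v) \in I$ forces some $(q,w) \in I$ with $w \models \gamma$. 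The degenerate unit-clause case, where no $w \in V(q)$ satisfies $\gamma$, is consistent with this because the empty disjunction is false.
\item (C4) against \emph{Consistency}: for each $(s,v)$, each conflict $(q,\gamma) \in C(s,v)$ and each $w \models \gamma$, the pairwise clause excludes co-installation.
\item (C5) against \emph{Capability safety}: every $(s,v)$ with $\mathit{Cap}(s,v) \not\subseteq \mathcal{A}$ is forced false.
\end{itemize}
Each of these is a one-line unfolding of the semantics of a CNF clause, carried out once in each direction.

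Part~(2) and the ``only if'' direction of part~(1) then follow together. Given $\sigma \models \Phi$, every clause in (C3)--(C5) is true under $\sigma$, so $I_\sigma$ satisfies the three conditions and is a secure installation. Clauses (C1) and (C2) give two further properties for free: $I_\sigma$ contains at most one version per skill, and it meets every root requirement. The lockfile claim is immediate, since $\mathcal{L}_\sigma = (I_\sigma, H, M)$ carries $I_\sigma$ verbatim and $H$ and $M$ impose no constraints on it. Hence $\mathcal{L}_\sigma$ satisfies every dependency, conflict and capability constraint that $I_\sigma$ does.

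The main obstacle is the ``if'' direction of part~(1). As stated, Definition~\ref{def:secure-installation} requires neither the root requirements nor at-most-one-version. Taken literally, the equivalence then fails in two ways:
\begin{itemize}
\item $I = \emptyset$ is always secure, yet $\Phi$ is unsatisfiable whenever some root requirement $(s,\gamma)$ has no version of $s$ satisfying $\gamma$, because (C2) is then the empty clause.
\item A secure $I$ containing two versions of the same skill violates (C1).
\end{itemize}
I would therefore prove the equivalence for the intended notion, and make that notion explicit in the proof: a secure installation that meets every user requirement $(s,\gamma)$ by some $(s,w) \in I$ with $w \models \gamma$, and contains at most one version of each skill. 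Under this reading, $\sigma_I$ satisfies (C1) and (C2) directly, and (C3)--(C5) by the correspondences above, so $\Phi$ is satisfiable. I expect this reconciliation of the definition with the encoding to be the only real work; everything else is clause-by-clause bookkeeping.
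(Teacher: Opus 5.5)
Your route is the paper's: the correspondence $\sigma_I(x_{s,v}) = \mathit{true}$ iff $(s,v) \in I$, checked clause family by clause family in each direction. Your objection to the ``if'' direction is correct, and it is a defect of the paper's own argument rather than of your reading. The paper discharges (C1) ``by definition of installation'' and (C2) because ``completeness of $I$ implies some $(s,w) \in I$ with $w \models \gamma$''. However, Definition~\ref{def:secure-installation} imposes neither at-most-one nor the root requirements, and completeness only concerns edges in $D(s,v)$. Since $\emptyset$ is always secure, part~(1) as written would make $\Phi$ satisfiable on every instance, so it fails whenever $\Phi$ is unsatisfiable. The paper is tacitly using the strengthened notion you make explicit. (The appendix version avoids this only by switching to a different encoding, with an at-least-one clause for every skill and no (C4)/(C5), so it does not cover the stated theorem either.)

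Two refinements to your repair. First, your second bullet is, on its own, a failure of the map $I \mapsto \sigma_I$ rather than of the equivalence, because under the literal definition $\emptyset$ already witnesses existence. It becomes a genuine counterexample once root requirements are imposed. For example, take roots $(a,\texttt{*})$ and $(c,\texttt{*})$ with $V(a)=V(c)=\{1\}$, $V(b)=\{1,2\}$, $D(a,1)=\{(b,\texttt{==}\,1)\}$, $D(c,1)=\{(b,\texttt{==}\,2)\}$, and all conflict and capability sets empty. So at-most-one does belong in the repaired notion. Second, and more importantly, that notion must also require $v \in V(s)$ for every $(s,v) \in I$. The definition only has $I \subseteq \mathcal{S} \times \mathbb{N}$, so a requirement or dependency can be witnessed by a version that has no variable. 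For instance, $V(s)=\{1\}$, root $(s,\texttt{==}\,2)$ and $I=\{(s,2)\}$ with $D(s,2)=C(s,2)=\mathit{Cap}(s,2)=\emptyset$ meets your strengthened notion, yet (C2) is the empty clause. Your bijection already ranges over available pairs, so you only need to state the condition; with it, both directions go through exactly as you describe.
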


\begin{proof}[Proof sketch]
We establish both directions.

\medskip
\noindent$(\Rightarrow)$\;
Suppose $\sigma \models \Phi$.  Define
$I_\sigma = \{(s,v) \mid \sigma(x_{s,v}) = \mathit{true}\}$.
\begin{itemize}
  \item \emph{At-most-one:} By clause family~(C1), for each skill $s$
    at most one variable $x_{s,v}$ is true, so $I_\sigma$ contains at
    most one version per skill.
  \item \emph{Completeness:} Fix $(s,v) \in I_\sigma$ and
    $(q, \gamma) \in D(s,v)$.  Since $\sigma(x_{s,v}) = \mathit{true}$,
    clause~(C3) requires $\sigma(x_{q,w}) = \mathit{true}$ for some
    $w \models \gamma$.  Hence $(q,w) \in I_\sigma$.
  \item \emph{Consistency:} Fix $(s,v) \in I_\sigma$ and
    $(q, \gamma) \in C(s,v)$.  For every $w \models \gamma$,
    clause~(C4) gives $\neg x_{s,v} \lor \neg x_{q,w}$.  Since
    $\sigma(x_{s,v}) = \mathit{true}$, we have
    $\sigma(x_{q,w}) = \mathit{false}$, so $(q,w) \notin I_\sigma$.
  \item \emph{Capability safety:} If $\mathit{Cap}(s,v)
    \not\subseteq \mathcal{A}$, clause~(C5) forces
    $\sigma(x_{s,v}) = \mathit{false}$, so $(s,v) \notin I_\sigma$.
    Contrapositive: $(s,v) \in I_\sigma \Rightarrow
    \mathit{Cap}(s,v) \subseteq \mathcal{A}$.
\end{itemize}

\noindent$(\Leftarrow)$\;
Suppose $I$ is a secure installation.  Define
$\sigma(x_{s,v}) = [(s,v) \in I]$ (Iverson bracket).
\begin{itemize}
  \item \emph{(C1):} At most one version per skill in $I$
    (by definition of installation), so pairwise exclusion holds.
  \item \emph{(C2):} For each requirement $(s, \gamma)$, completeness
    of $I$ implies some $(s, w) \in I$ with $w \models \gamma$, so
    the disjunction is satisfied.
  \item \emph{(C3):} For each $(s,v) \in I$ and $(q, \gamma) \in D(s,v)$,
    completeness gives $(q, w) \in I$ with $w \models \gamma$, making the
    implication clause true.  If $(s,v) \notin I$, the clause is trivially
    satisfied via $\neg x_{s,v}$.
  \item \emph{(C4):} Consistency of $I$ ensures no conflicting pair is
    co-installed; the mutual exclusion clauses hold.
  \item \emph{(C5):} Capability safety of $I$ ensures
    $\mathit{Cap}(s,v) \subseteq \mathcal{A}$ for all $(s,v) \in I$,
    so no unit clause $\neg x_{s,v}$ is violated.
\end{itemize}
Thus $\sigma \models \Phi$, and $\Phi$ is satisfiable.
\end{proof}

\begin{remark}
The bidirectional correspondence in Theorem~\ref{thm:resolution-soundness}
means that the SAT solver provides a \emph{complete} decision procedure
for secure installability: if the solver reports UNSAT, then
\emph{no} secure installation exists under the given constraints.
In such cases, the conflict analysis phase of the CDCL solver yields
an unsatisfiable core, which we translate into human-readable diagnostic
messages explaining which dependency, conflict, or capability constraint
caused the failure.
\end{remark}

\paragraph{Complexity.}
The dependency resolution problem is NP-complete in
general~\cite{mancinelli2006}.  Our capability-bounded extension
(clause family~C5) adds only $O(|\mathcal{S}| \cdot \max_s |V(s)|)$
unit clauses, so the theoretical worst case is unchanged.  In practice,
the agent skill ecosystem has $|\mathcal{S}| \leq 10^3$ and
$|V(s)| \leq 20$, well within the efficient regime of modern CDCL
solvers: our experiments (Section~\ref{sec:evaluation}) show resolution
times under 0.5 seconds for 500 skills.

\paragraph{Comparison with traditional package managers.}
Table~\ref{tab:dep-comparison} summarizes the differences between
our ADG-based resolver and traditional package dependency resolution.

\begin{table}[t]
\centering
\caption{Agent Dependency Graph vs.\ traditional package dependency resolution.}
\label{tab:dep-comparison}
\small
\begin{tabular}{lcc}
\toprule
\textbf{Feature} & \textbf{Traditional} & \textbf{ADG (Ours)} \\
\midrule
Version constraints      & \checkmark & \checkmark \\
Conflict detection       & \checkmark & \checkmark \\
SAT-based resolution     & \checkmark & \checkmark \\
Capability bounding      & ---        & \checkmark \\
Trust-aware resolution   & ---        & \checkmark \\
Lockfile integrity       & \checkmark & \checkmark \\
ASBOM generation         & Partial    & \checkmark \\
Vulnerability propagation & \checkmark & \checkmark \\
Capability-aware propagation & ---    & \checkmark \\
\bottomrule
\end{tabular}
\end{table}

% === END sections/06-dependency-graph.tex ===

% === BEGIN sections/07-trust-algebra.tex ===
%!TEX root = ../main.tex
% Section 7: Trust Score Algebra
% Contribution C5: Multi-signal trust with propagation, decay, and levels
% ~3-4 pages

\section{Trust Score Algebra}
\label{sec:trust}%
\label{sec:trust-algebra}% alias for cross-references from introduction

The preceding sections established mechanisms for verifying individual
skills (Sections~\ref{sec:analysis}--\ref{sec:sandboxing}) and resolving
their dependencies securely (Section~\ref{sec:dependency}).  A
complementary question remains: \emph{how much should a developer trust
a given skill?}  Binary safe/unsafe classification is insufficient in
practice---skills exist on a spectrum from unsigned, anonymous submissions
to formally verified, organization-signed artifacts with years of
community track record.

This section introduces a \emph{trust score algebra} that combines
heterogeneous trust signals into a single composite score, propagates
trust conservatively through dependency chains, models trust decay for
unmaintained skills, and maps numeric scores to graduated assurance
levels inspired by the SLSA framework~\cite{slsa2023}.  Our construction
draws on the trust management literature, particularly the PolicyMaker
and KeyNote systems~\cite{blaze1996policymaker,keynote1999}, and the formal trust models
of Carbone et al.~\cite{carbone2003} and J{\o}sang~\cite{josang2001}.

% ======================================================================
\subsection{Trust Score Model}
\label{sec:trust-model}

\begin{definition}[Trust Signals]
\label{def:trust-signals}
For a skill $s$ at version $v$, we define four orthogonal \emph{trust
signals}, each taking values in the closed interval $[0, 1]$:
\begin{enumerate}
  \item $T_p(s,v) \in [0,1]$: \textbf{Provenance.}  Measures author
    verification and signing status.  $T_p = 0$ for unsigned, anonymous
    authors; $T_p = 1$ for organization-signed skills with reproducible
    builds and verified identity.
  \item $T_b(s,v) \in [0,1]$: \textbf{Behavioral.}  Measures the
    outcome of static analysis (Section~\ref{sec:analysis}).
    $T_b = 0$ if critical capability violations are detected;
    $T_b = 1$ if analysis reports no findings.
  \item $T_c(s,v) \in [0,1]$: \textbf{Community.}  Aggregates
    community trust signals: usage count, age, issue reports, and
    peer reviews.  $T_c = 0$ for a brand-new skill with no usage;
    $T_c = 1$ for a widely adopted, mature skill with no reported
    issues.
  \item $T_h(s,v) \in [0,1]$: \textbf{Historical.}  Captures the
    skill's past vulnerability record.  $T_h = 0$ if the skill has
    a history of CVEs or security incidents; $T_h = 1$ for a clean
    historical record.
\end{enumerate}
We write $\mathbf{T}(s,v) = (T_p, T_b, T_c, T_h) \in [0,1]^4$ for
the signal vector.
\end{definition}

The four signals are designed to be \emph{orthogonal}: provenance
measures \emph{who} published the skill, behavioral measures
\emph{what} the skill does, community measures \emph{how widely}
it is adopted, and historical measures \emph{how reliable} it has
been over time.

\begin{definition}[Trust Weights]
\label{def:trust-weights}
A \emph{weight vector} $\mathbf{w} = (w_p, w_b, w_c, w_h) \in
\mathbb{R}_{\geq 0}^{4}$ satisfies:
\[
  w_p + w_b + w_c + w_h = 1
  \qquad\text{and}\qquad
  w_p, w_b, w_c, w_h \geq 0
\]
The default weight vector is $\mathbf{w}_0 = (0.3, 0.3, 0.2, 0.2)$,
reflecting equal priority for provenance and behavioral analysis
(the signals most directly under the tool's and author's control),
with lower weight for community and historical signals (which
accumulate over time).
\end{definition}

\begin{definition}[Intrinsic Trust Score]
\label{def:intrinsic-trust}
The \emph{intrinsic trust score} of skill $s$ at version $v$ is the
weighted linear combination:
\begin{equation}
  T(s,v) = w_p \cdot T_p(s,v) + w_b \cdot T_b(s,v) + w_c \cdot T_c(s,v) + w_h \cdot T_h(s,v)
  \label{eq:trust-intrinsic}
\end{equation}
Since all weights are non-negative and sum to $1$, and all signals
are in $[0,1]$, the score satisfies $T(s,v) \in [0,1]$.
\end{definition}

% ======================================================================
\subsection{Trust Propagation}
\label{sec:trust-propagation}

A skill's effective trustworthiness depends not only on its own
properties but on the trustworthiness of its dependencies.  A
formally verified skill that depends on an unsigned, unvetted
library inherits risk from that dependency.  We model this through
\emph{multiplicative trust propagation}, following the transitive
trust composition approach of Carbone et al.~\cite{carbone2003}.

\begin{definition}[Trust Composition Operator]
\label{def:trust-composition}
The \emph{trust composition operator} $\otimes \colon [0,1] \times
[0,1] \to [0,1]$ is defined as:
\begin{equation}
  a \otimes b = a \cdot b
  \label{eq:trust-composition}
\end{equation}
This is conservative: composed trust is always $\leq$ either
operand, reflecting that a chain of trust is only as strong as
the product of its links.
\end{definition}

\begin{definition}[Transitive Dependency Set]
\label{def:transitive-deps}
For a skill $s$ in a dependency DAG, the \emph{transitive dependency
set} $\mathit{trans}(s)$ is the set of all skills reachable from $s$
via dependency edges, \emph{deduplicated}:
\[
  \mathit{trans}(s) = \bigl\{ s' \mid s \leadsto^{+} s' \bigr\}
\]
where $\leadsto^{+}$ denotes transitive reachability through
dependency edges.  Shared sub-dependencies (diamond patterns) appear
exactly once in $\mathit{trans}(s)$.
\end{definition}

\begin{definition}[Effective Trust Score]
\label{def:effective-trust}
For a skill $s$ in a dependency DAG, the \emph{effective trust
score} is defined over the deduplicated transitive dependency set:
\begin{equation}
  T_{\mathrm{eff}}(s) = T(s) \times
    \min_{d \in \mathit{trans}(s)} T(d)
  \label{eq:trust-effective}
\end{equation}
If $\mathit{trans}(s) = \emptyset$ (leaf skill with no
dependencies), then $T_{\mathrm{eff}}(s) = T(s)$.
\end{definition}

\noindent This formulation resolves the ambiguity that arises in
DAGs with diamond dependency patterns (where skill $A$ depends on
$B$ and $C$, both of which depend on $D$).  The recursive
formulation $T(s) \otimes \min_i T_{\mathrm{eff}}(s_i)$ would
count $D$'s trust contribution multiple times depending on the
traversal order.  Our definition computes the minimum over the
\emph{deduplicated} transitive closure, ensuring a unique result.

\begin{proposition}[Uniqueness]
\label{prop:trust-uniqueness}
$T_{\mathrm{eff}}(s)$ is uniquely defined regardless of
traversal order.
\end{proposition}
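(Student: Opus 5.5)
The plan is to show that every ingredient of Equation~\eqref{eq:trust-effective} is a function of sets and values only, never of an enumeration order. The formula has two components: the intrinsic score $T(s)$ and the minimum $\min_{d \in \mathit{trans}(s)} T(d)$. I would treat each one separately and then combine them. Because the product in Equation~\eqref{eq:trust-effective} is ordinary real multiplication, uniqueness of $T_{\mathrm{eff}}(s)$ follows as soon as both factors are shown to be well defined.

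\emph{Step 1.} $T(s)$ is given by the closed form of Definition~\ref{def:intrinsic-trust}. It depends only on the fixed signal vector $\mathbf{T}(s)$ and the weight vector $\mathbf{w}$, and no graph traversal enters into it. So this factor is well defined.

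\emph{Step 2.} I will show that $\mathit{trans}(s)$ is a well-defined finite set that does not depend on traversal order. By Definition~\ref{def:transitive-deps}, it is the set of nodes $s'$ with $s \leadsto^{+} s'$. Reachability is a relation on the graph, defined by the existence of a path, so membership is decided without reference to any search procedure. Finiteness follows from finiteness of $\mathcal{S}$ (Definition~\ref{def:adg}). I will also note the following point about BFS and DFS, since any implementation visits nodes in some order. Each of these searches, run from $s$ with a visited set, returns exactly the reachable set. The argument is an induction on path length: every reachable node is eventually discovered, and only reachable nodes are ever discovered. Hence all traversal orders yield the same set. Deduplication is automatic, because a set contains a diamond node $D$ once regardless of how many paths reach it.

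\emph{Step 3.} This is the step I expect to be the main obstacle, although it is conceptually simple. I must argue that $\min_{d \in \mathit{trans}(s)} T(d)$ is independent of the order in which elements are combined. On the chain $([0,1],\le)$, the binary operation $\min$ is commutative, associative and idempotent; it is the meet of a total order, exactly as for $\sqcap$ in Definition~\ref{def:access-level-lattice}. Consequently, any fold of $\min$ over any enumeration of a finite nonempty set gives the same value, namely the greatest lower bound of $\{T(d) \mid d \in \mathit{trans}(s)\}$. Idempotence is what makes repeated visits harmless. Even if a traversal processed $D$ twice, the minimum would be unchanged, because $\min(x,x)=x$. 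This stands in contrast to the multiplicative recursive formulation, where $x\cdot x\neq x$ in general.

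For the empty case, $\mathit{trans}(s)=\emptyset$, the definition fixes $T_{\mathrm{eff}}(s)=T(s)$ explicitly. This agrees with the convention that the empty minimum is the top element $1$, which is the unit for $\otimes$. Combining Steps 1--3 gives the uniqueness claimed in Proposition~\ref{prop:trust-uniqueness}.
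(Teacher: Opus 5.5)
Your proposal is correct and follows the same route as the paper's proof: both observe that $T_{\mathrm{eff}}(s)$ depends only on $T(s)$ and the set $\{T(d) \mid d \in \mathit{trans}(s)\}$, that $\mathit{trans}(s)$ is fixed by reachability in the DAG and not by any traversal, and that $\min$ over a set is order-independent. Your additional details (BFS/DFS returning exactly the reachable set, idempotence of $\min$ making repeated visits harmless, and the empty case matching the empty-minimum convention of $1$) spell out what the paper's three-line argument leaves implicit.
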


\begin{proof}
$T_{\mathrm{eff}}(s)$ depends only on $T(s)$ and the set
$\{T(d) \mid d \in \mathit{trans}(s)\}$.  Both $T(s)$ and the
set $\mathit{trans}(s)$ are properties of the DAG structure
(not of any particular traversal).  The $\min$ operator over a
set is order-independent.  Therefore $T_{\mathrm{eff}}(s)$ is
uniquely determined.
\end{proof}

The $\min$ over dependencies implements \emph{weakest-link}
semantics: the effective trust of a skill is bounded by the
least-trusted dependency in its transitive closure, attenuated
by the multiplicative composition.  This is analogous
to the assertion monotonicity property of KeyNote~\cite{keynote1999}:
adding a trusted intermediary never reduces the compliance value
of a delegation chain.

\begin{proposition}[Propagation Bound]
\label{prop:propagation-bound}
For any skill $s$ in a dependency DAG with longest dependency path
of length $d$ (measured as the number of edges in the longest path
from $s$ to any leaf in the DAG):
\[
  T_{\mathrm{eff}}(s) \leq T(s)
\]
with equality if and only if all transitive dependencies have
intrinsic trust score~$1$.  More generally, if
$T_{\min} = \min_{s' \in \mathit{trans}(s) \cup \{s\}} T(s')$
is the minimum intrinsic trust across all skills in the transitive
dependency closure (including $s$ itself), then:
\[
  T_{\mathrm{eff}}(s) \geq T_{\min}^{2}
\]
and in particular $T_{\mathrm{eff}}(s) \geq T(s) \cdot T_{\min}$.
\end{proposition}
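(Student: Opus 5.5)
The plan is to work directly from Definition~\ref{def:effective-trust}, treating the leaf case $\mathit{trans}(s)=\emptyset$ separately from the case $\mathit{trans}(s)\neq\emptyset$. The path length $d$ in the statement plays no role in any of the inequalities; only the deduplicated set $\mathit{trans}(s)$ matters, and this is exactly what Proposition~\ref{prop:trust-uniqueness} licenses. The only facts used are the following.
\begin{itemize}
  \item Intrinsic scores lie in $[0,1]$, by Definition~\ref{def:intrinsic-trust}.
  \item Multiplication by a number in $[0,1]$ does not increase a nonnegative quantity.
  \item Multiplication is monotone on $[0,1]$.
\end{itemize}

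\emph{Upper bound.} Write $m=\min_{d\in\mathit{trans}(s)}T(d)$ when $\mathit{trans}(s)\neq\emptyset$. Since $m\in[0,1]$ and $T(s)\geq 0$, we get $T_{\mathrm{eff}}(s)=T(s)\,m\leq T(s)$. In the leaf case the definition gives equality outright, which is consistent with the claimed characterisation, since the condition ``all transitive dependencies have trust $1$'' then holds vacuously.

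\emph{Equality clause.} If every $d\in\mathit{trans}(s)$ has $T(d)=1$, then $m=1$ and equality holds. Conversely, $T(s)\,m=T(s)$ gives $T(s)(1-m)=0$, so $m=1$ provided $T(s)>0$. Since $m$ is a minimum over values in $[0,1]$, $m=1$ forces every $T(d)=1$. This is the step I expect to be the main obstacle, because the ``only if'' direction is genuinely false when $T(s)=0$: then $T_{\mathrm{eff}}(s)=0=T(s)$ whatever the dependencies' scores are. I would therefore prove the biconditional under the hypothesis $T(s)>0$, and record the degenerate case $T(s)=0$ explicitly as an exception (in that case both sides are zero) rather than try to force it.

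\emph{Lower bounds.} Let $T_{\min}=\min_{s'\in\mathit{trans}(s)\cup\{s\}}T(s')$. If $\mathit{trans}(s)\neq\emptyset$, then $m\geq T_{\min}$ because the minimum is taken over a subset of the values defining $T_{\min}$. Monotonicity of multiplication by $T(s)\geq 0$ then gives $T_{\mathrm{eff}}(s)=T(s)\,m\geq T(s)\,T_{\min}$. In the leaf case, $T_{\mathrm{eff}}(s)=T(s)\geq T(s)\,T_{\min}$ because $T_{\min}\leq 1$. Finally, $T(s)\geq T_{\min}$ by definition of $T_{\min}$ and $T_{\min}\geq 0$, so $T(s)\,T_{\min}\geq T_{\min}^2$. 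This yields both $T_{\mathrm{eff}}(s)\geq T(s)\cdot T_{\min}$ and $T_{\mathrm{eff}}(s)\geq T_{\min}^2$.
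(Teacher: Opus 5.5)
Your proof is correct and follows the same route as the paper's: the upper bound comes from $\min_{d\in\mathit{trans}(s)}T(d)\le 1$, and both lower bounds come from $T(s)\ge T_{\min}$ together with $\min_{d}T(d)\ge T_{\min}$, using only Definition~\ref{def:effective-trust} and $T(\cdot)\in[0,1]$. You are more careful than the paper, which handles the leaf case only for the upper bound and never proves the equality clause; your observation that the ``only if'' direction fails when $T(s)=0$ is right (a dependency with $T(d)=\tfrac12$ still gives $T_{\mathrm{eff}}(s)=0=T(s)$), so that clause genuinely needs the hypothesis $T(s)>0$.
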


\begin{proof}
\emph{Upper bound.}
If $\mathit{trans}(s) \neq \emptyset$, then
$\min_{d \in \mathit{trans}(s)} T(d) \leq 1$, so
$T_{\mathrm{eff}}(s) = T(s) \cdot \min_d T(d) \leq T(s)$.
If $\mathit{trans}(s) = \emptyset$, $T_{\mathrm{eff}}(s) = T(s)$.

\emph{Lower bound.}
$T(s) \geq T_{\min}$ and
$\min_{d \in \mathit{trans}(s)} T(d) \geq T_{\min}$ by
definition of $T_{\min}$.  Hence
$T_{\mathrm{eff}}(s) = T(s) \cdot \min_d T(d) \geq
T_{\min} \cdot T_{\min} = T_{\min}^2$.
The bound $T_{\mathrm{eff}}(s) \geq T(s) \cdot T_{\min}$
follows from $\min_d T(d) \geq T_{\min}$.
\end{proof}

% ======================================================================
\subsection{Trust Decay}
\label{sec:trust-decay}

Skills that are not actively maintained accumulate risk over time:
unpatched vulnerabilities, compatibility drift with evolving agent
runtimes, and stale dependencies.  We model this via exponential
decay.

\begin{definition}[Temporal Trust Decay]
\label{def:trust-decay}
For a skill $s$ last updated at time $t_{\mathrm{last}}$, the
\emph{decayed trust score} at time $t > t_{\mathrm{last}}$ is:
\begin{equation}
  T(s, t) = T_0(s) \cdot e^{-\lambda \cdot (t - t_{\mathrm{last}})}
  \label{eq:trust-decay}
\end{equation}
where $T_0(s) = T_{\mathrm{eff}}(s)$ is the effective trust score
at the time of last update, and $\lambda > 0$ is the \emph{decay
rate} parameter.
\end{definition}

The decay rate $\lambda$ controls how aggressively trust erodes
with inactivity.  At the default rate $\lambda = 0.01\;\text{day}^{-1}$,
trust halves approximately every 69 days ($t_{1/2} = \ln 2 / \lambda$).
This aligns with the empirical observation that unmaintained open-source
packages are significantly more likely to contain unpatched
vulnerabilities after 3--6 months of inactivity~\cite{zahan2022}.

\begin{proposition}[Decay Properties]
\label{prop:decay-properties}
The decay function (Eq.~\ref{eq:trust-decay}) satisfies:
\begin{enumerate}
  \item \textbf{Monotonic decrease:} $T(s, t_1) \geq T(s, t_2)$
    for $t_1 \leq t_2$ (trust never increases with time alone).
  \item \textbf{Bounded:} $T(s, t) \in [0, T_0(s)]$ for all
    $t \geq t_{\mathrm{last}}$.
  \item \textbf{Asymptotic:} $\lim_{t \to \infty} T(s, t) = 0$
    (eventually, unmaintained skills have negligible trust).
  \item \textbf{Reset on update:} When a skill is updated at time
    $t'$, the trust score resets to
    $T_0'(s) = T_{\mathrm{eff}}(s, t')$, re-evaluating all four
    signals with current data.
\end{enumerate}
\end{proposition}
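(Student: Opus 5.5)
Each of the four items of Proposition~\ref{prop:decay-properties} follows directly from elementary properties of the exponential function $f(u) = e^{-\lambda u}$ with $\lambda > 0$, together with the fact that $T_0(s) = T_{\mathrm{eff}}(s) \in [0,1]$. The latter holds because $T(s) \in [0,1]$ by Definition~\ref{def:intrinsic-trust}, and $T_{\mathrm{eff}}(s)$ is a product of numbers in $[0,1]$ by Definition~\ref{def:effective-trust}. Throughout, the nonnegativity $T_0(s) \geq 0$ is the fact to keep track of.

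For \emph{monotonic decrease}, take $t_{\mathrm{last}} \leq t_1 \leq t_2$. Then $-\lambda(t_2 - t_{\mathrm{last}}) \leq -\lambda(t_1 - t_{\mathrm{last}})$, since $\lambda > 0$. Because the exponential is increasing,
\[
  e^{-\lambda(t_2 - t_{\mathrm{last}})} \leq e^{-\lambda(t_1 - t_{\mathrm{last}})}.
\]
Multiplying by $T_0(s) \geq 0$ preserves the inequality and gives $T(s,t_2) \leq T(s,t_1)$.

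For \emph{boundedness}, note that for $t \geq t_{\mathrm{last}}$ the exponent is nonpositive, so $0 < e^{-\lambda(t-t_{\mathrm{last}})} \leq 1$. Multiplying by $T_0(s) \geq 0$ gives $0 \leq T(s,t) \leq T_0(s)$. (Equation~\eqref{eq:trust-decay} is stated for $t > t_{\mathrm{last}}$; at $t = t_{\mathrm{last}}$ I would extend it to $T(s,t_{\mathrm{last}}) = T_0(s)$, which is consistent with the formula.)

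For the \emph{asymptotic} claim, $\lambda(t - t_{\mathrm{last}}) \to \infty$ as $t \to \infty$, so $e^{-\lambda(t-t_{\mathrm{last}})} \to 0$. Since $T_0(s)$ is a fixed finite constant, $T(s,t) \to 0$.

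Item~4 (\emph{reset on update}) is not a derived property but part of the model. My plan is to argue it definitionally. An update at $t'$ sets $t_{\mathrm{last}} := t'$, and Definition~\ref{def:trust-decay} then prescribes $T_0'(s) = T_{\mathrm{eff}}(s)$, computed from the current signal vector $\mathbf{T}(s,v)$. At $t = t'$ the exponential factor equals $1$, so the score equals $T_0'(s)$, and items~1--3 reapply with the new baseline.

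The only step needing care is making explicit that this is a convention, and that trust can increase only through such a reset, never through elapsed time alone. This is consistent with item~1. I expect no genuine obstacle; the main risk is the boundary case $t = t_{\mathrm{last}}$ and the sign of $T_0(s)$, both handled above.
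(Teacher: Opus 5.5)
Your proof is correct. The paper states this proposition without proof, treating it as immediate from Eq.~\eqref{eq:trust-decay}, and your verification uses exactly the reasoning it leaves implicit: the elementary properties of $e^{-\lambda u}$ for $\lambda>0$, together with $T_0(s)=T_{\mathrm{eff}}(s)\in[0,1]$. You are also right that item~4 is a modelling convention rather than a derived fact, and that items~1--3 hold only within a single maintenance interval $t \geq t_{\mathrm{last}}$, since trust can jump upward across a reset.
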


% ======================================================================
\subsection{Graduated Trust Levels}
\label{sec:trust-levels}

For operational decision-making, we map continuous trust scores to
discrete levels inspired by the Supply chain Levels for Software
Artifacts (SLSA) framework~\cite{slsa2023}.

\begin{definition}[Trust Level Mapping]
\label{def:trust-levels}
The \emph{trust level} $\ell(s)$ of a skill $s$ is determined by
its effective (possibly decayed) trust score:
\begin{equation}
  \ell(s) = \begin{cases}
    \text{L0 (\textsc{unsigned})}           & \text{if } T_{\mathrm{eff}}(s) < 0.25 \\[4pt]
    \text{L1 (\textsc{signed})}             & \text{if } 0.25 \leq T_{\mathrm{eff}}(s) < 0.50 \\[4pt]
    \text{L2 (\textsc{community\_verified})} & \text{if } 0.50 \leq T_{\mathrm{eff}}(s) < 0.75 \\[4pt]
    \text{L3 (\textsc{formally\_verified})}  & \text{if } T_{\mathrm{eff}}(s) \geq 0.75
  \end{cases}
  \label{eq:trust-levels}
\end{equation}
\end{definition}

\noindent The four levels correspond to increasing assurance:

\begin{itemize}
  \item \textbf{L0 (Unsigned):} No verification.  The skill has not
    been signed, analyzed, or reviewed.  Default for newly discovered,
    anonymous skills.
  \item \textbf{L1 (Signed):} Basic provenance established.  The
    author has signed the package, providing identity assurance but
    no behavioral or community verification.
  \item \textbf{L2 (Community Verified):} Multiple positive signals.
    The skill has usage history, community reviews, and passes basic
    behavioral analysis.
  \item \textbf{L3 (Formally Verified):} Highest assurance.  The
    skill passes formal static analysis
    (Section~\ref{sec:analysis}), has strong provenance, active
    community trust, and a clean historical record.
\end{itemize}

Enterprise policies can mandate minimum trust levels for skill
installation.  For example, a policy requiring $\ell(s) \geq
\text{L2}$ for all production skills is expressible as a constraint
in the ADG resolver (Section~\ref{sec:sat-resolution}), integrated
alongside capability bounds.

% ======================================================================
\subsection{Theorem 5: Trust Monotonicity}
\label{sec:trust-monotonicity}

The central theoretical property of the trust algebra is
\emph{monotonicity}: positive evidence about a skill should never
reduce its trust score.  This corresponds to the assertion
monotonicity axiom of KeyNote~\cite{keynote1999}: adding credentials
to a compliance check never reduces the compliance value.

\begin{theorem}[Trust Monotonicity]
\label{thm:trust-monotonicity}
Let $s$ be a skill with signal vector $\mathbf{T} = (T_p, T_b, T_c,
T_h)$ and trust score $T(s) = \mathbf{w} \cdot \mathbf{T}$.  Let
$e$ be \emph{positive evidence} that increases one or more signal
values: there exists $\mathbf{T}' = (T_p', T_b', T_c', T_h')$ with
$T_p' \geq T_p$, $T_b' \geq T_b$, $T_c' \geq T_c$,
$T_h' \geq T_h$, and at least one strict inequality.  Then:
\begin{equation}
  T(s \mid e) = \mathbf{w} \cdot \mathbf{T}' \geq
  \mathbf{w} \cdot \mathbf{T} = T(s)
  \label{eq:trust-monotonicity}
\end{equation}
Moreover, $T(s \mid e) > T(s)$ whenever the strict inequality occurs
on a signal $i$ with $w_i > 0$.
\end{theorem}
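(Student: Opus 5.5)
The plan is to prove the statement directly from Definition~\ref{def:intrinsic-trust}. Linearity reduces it to a sign check on a difference. Write $\boldsymbol{\delta} = \mathbf{T}' - \mathbf{T} = (\delta_p, \delta_b, \delta_c, \delta_h)$. By hypothesis every component satisfies $\delta_i \geq 0$, and at least one is strictly positive.

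Since the intrinsic score is the inner product $\mathbf{w}\cdot\mathbf{T}$, we get
\[
  T(s \mid e) - T(s) = \mathbf{w}\cdot\mathbf{T}' - \mathbf{w}\cdot\mathbf{T} = \sum_{i \in \{p,b,c,h\}} w_i\,\delta_i .
\]
By Definition~\ref{def:trust-weights}, each weight satisfies $w_i \geq 0$. So every summand $w_i\delta_i$ is a product of two nonnegative reals and is itself nonnegative. The sum is therefore nonnegative, which gives Eq.~\eqref{eq:trust-monotonicity}.

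For the strict part, suppose the strict inequality occurs at some index $j$ with $w_j > 0$. Then $w_j\delta_j > 0$. The remaining summands are nonnegative, so the whole sum is strictly positive, and hence $T(s \mid e) > T(s)$.

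I would also note that both scores stay in $[0,1]$ by Definition~\ref{def:intrinsic-trust}. This means the comparison is between well-defined scores. The normalisation $\sum_i w_i = 1$ is never needed; only nonnegativity of the weights matters.

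There is no real obstacle here; the argument is a one-line linearity computation. The only point needing care is the strictness clause. A strict increase on a signal with zero weight (for example $w_c = 0$) produces no change in the score. This is exactly why the theorem conditions strictness on $w_i > 0$, and I would state that explicitly.

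Optionally, I would add a remark extending monotonicity to $T_{\mathrm{eff}}$ and to the decayed score. The map $x \mapsto x \cdot \min_d T(d)$ is nondecreasing in $T(s)$, because the factor $\min_d T(d)$ is nonnegative. Multiplying further by the positive factor $e^{-\lambda(t-t_{\mathrm{last}})}$ also preserves order. This remark is not required by the statement.
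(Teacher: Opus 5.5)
Your proposal is correct and matches the paper's main-text proof. The paper also expands $T(s \mid e) - T(s)$ as $\sum_i w_i (T_i' - T_i)$, concludes non-negativity termwise, gets strictness from a single term with $w_j > 0$, and likewise extends the result to $T_{\mathrm{eff}}$ through monotonicity of multiplication, assuming the dependency scores are unchanged.
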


\begin{proof}
The difference between the updated and original trust scores is:
\begin{align}
  T(s \mid e) - T(s)
  &= \mathbf{w} \cdot \mathbf{T}' - \mathbf{w} \cdot \mathbf{T}
  \notag \\
  &= w_p (T_p' - T_p) + w_b (T_b' - T_b) + w_c (T_c' - T_c) + w_h (T_h' - T_h)
  \label{eq:mono-diff}
\end{align}
Each term in Eq.~\ref{eq:mono-diff} is a product of a non-negative
weight $w_i \geq 0$ (by Definition~\ref{def:trust-weights}) and a
non-negative increment $T_i' - T_i \geq 0$ (by the hypothesis that
$\mathbf{T}' \geq \mathbf{T}$ component-wise).  Therefore:
\[
  T(s \mid e) - T(s) \geq 0
\]
For strict monotonicity, if $T_j' > T_j$ for some signal $j$ with
$w_j > 0$, then the $j$-th term is strictly positive, giving
$T(s \mid e) > T(s)$.

Monotonicity extends to effective trust scores because the propagation
operator $\otimes$ (multiplication) is monotone in both arguments on
$[0,1]$, and the $\min$ operator is monotone in each of its arguments.
Formally, if $T(s) \leq T(s \mid e)$, then for any dependency trust
value $d \in [0,1]$:
\[
  T(s) \cdot d \leq T(s \mid e) \cdot d
\]
and hence $T_{\mathrm{eff}}(s) \leq T_{\mathrm{eff}}(s \mid e)$
provided all dependency effective scores remain unchanged.
\end{proof}

\begin{remark}[Operational Significance]
Theorem~\ref{thm:trust-monotonicity} ensures that the trust system is
\emph{incentive-compatible}: skill authors who improve their provenance
(e.g., by signing packages), fix vulnerabilities (improving $T_h$), or
submit to formal verification (improving $T_b$) are guaranteed a
non-decreasing trust score.  The system never punishes improvement.
This property is essential for adoption: developers and enterprise
security teams need confidence that investing in skill quality
reliably improves the skill's trust standing.
\end{remark}

\paragraph{Connection to the ADG resolver.}
Trust scores integrate with the SAT-based resolver of
Section~\ref{sec:sat-resolution} through an optional
\emph{trust-aware optimization objective}: among all satisfying
assignments (secure installations), prefer the one that maximizes
the minimum effective trust score across installed skills.  This
transforms the satisfiability problem into a MAX-SAT optimization:
\begin{equation}
  I^{*} = \argmax_{I \models \Phi}
    \min_{(s,v) \in I} T_{\mathrm{eff}}(s,v)
  \label{eq:trust-optimization}
\end{equation}
where $\Phi$ is the conjunction of clauses (C1)--(C5).  In practice,
we approximate this by iteratively adding trust threshold constraints
(``all installed skills must have $T_{\mathrm{eff}} \geq \tau$'') and
binary-searching on $\tau$ until the maximum feasible threshold is found.

\paragraph{Algebraic structure.}
The trust score model forms a \emph{bounded semilattice} over the
interval $[0,1]$: the composition operator $\otimes$ (multiplication)
is commutative, associative, and has identity element $1$.  The $\min$
operator provides the meet.  These algebraic properties ensure that
trust propagation is well-defined regardless of the order in which
dependency chains are evaluated, and that the propagated trust score
is unique for any given ADG.

\begin{proposition}[Trust Algebra Properties]
\label{prop:trust-algebra}
The tuple $([0,1], \otimes, \min, 0, 1)$ forms a bounded
semilattice where:
\begin{enumerate}
  \item $\otimes$ is commutative: $a \otimes b = b \otimes a$.
  \item $\otimes$ is associative:
    $(a \otimes b) \otimes c = a \otimes (b \otimes c)$.
  \item $1$ is the identity: $a \otimes 1 = a$.
  \item $0$ is the annihilator: $a \otimes 0 = 0$.
  \item $\min$ is idempotent, commutative, and associative.
  \item Monotonicity: if $a \leq a'$ and $b \leq b'$, then
    $a \otimes b \leq a' \otimes b'$.
\end{enumerate}
\end{proposition}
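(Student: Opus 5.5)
The plan is to verify each of the six items directly from the definitions $a \otimes b = a \cdot b$ (Definition~\ref{def:trust-composition}) and $\min$ on the totally ordered interval $[0,1]$. Throughout I use that $[0,1]$ is closed under both operations: for $a, b \in [0,1]$ we have $0 \leq ab \leq \min(a,b) \leq 1$. Hence $\otimes$ and $\min$ are well-defined binary operations on the carrier.

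\textbf{Items (1)--(4).} These are inherited from the field axioms of $\mathbb{R}$ restricted to $[0,1]$. Commutativity and associativity of real multiplication give (1) and (2). The equations $a \cdot 1 = a$ and $a \cdot 0 = 0$ give (3) and (4). Together these make $([0,1], \otimes, 1)$ a commutative monoid with absorbing element $0$.

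\textbf{Item (5).} For idempotency, commutativity and associativity of $\min$, I would argue as in the lattice law verification of Section~\ref{sec:capability-lattice}. There, $\sqcap = \min$ on a chain satisfies these laws, and the same reasoning applies to any total order, in particular to $([0,1], \leq)$. Thus $([0,1], \min)$ is a meet-semilattice. Its top element is $1$, since $\min(a,1) = a$, and its bottom element is $0$, since $\min(a,0) = 0$. This is exactly what ``bounded'' refers to, with $0$ and $1$ as the designated constants of the tuple.

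\textbf{Item (6).} Monotonicity is the only step using nonnegativity, so I would handle it in two stages. First, from $a \leq a'$ and $b \geq 0$ we get $ab \leq a'b$. Second, from $b \leq b'$ and $a' \geq 0$ we get $a'b \leq a'b'$. Chaining the two inequalities gives $a \otimes b \leq a' \otimes b'$. The analogous monotonicity of $\min$ in each argument follows from the total order. This also retroactively justifies the monotonicity claims used in the proof of Theorem~\ref{thm:trust-monotonicity}.

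\textbf{Main obstacle.} The difficulty is terminological rather than computational. The phrase ``bounded semilattice'' for a structure with two operations is loose, because $\otimes$ is not idempotent: $a \cdot a \neq a$ unless $a \in \{0,1\}$. I would therefore state precisely what is being proved. The semilattice structure is carried by $\min$, with bounds $0$ and $1$. The operation $\otimes$ is a commutative monoid operation that is monotone with respect to the semilattice order. The whole structure is thus a totally ordered commutative monoid with absorbing zero. I would add a remark that $\otimes$ distributes over $\min$, namely $a \cdot \min(b,c) = \min(ab, ac)$ for $a \geq 0$, which follows from monotonicity. This property is what makes the evaluation of $T_{\mathrm{eff}}$ in Definition~\ref{def:effective-trust} order-independent.
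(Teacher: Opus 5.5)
Your proposal is correct and follows the same route as the paper, which reads items (1)--(4) off real multiplication, item (5) off $\min$ on a total order, and item (6) off nonnegativity; your closure check and your caveat that the semilattice structure is carried by $\min$ (since $\otimes$ is not idempotent) make explicit what the paper leaves loose. One small quibble concerns only your closing aside: the order-independence of $T_{\mathrm{eff}}$ comes from $\min$ over a finite set being well defined (item (5)), whereas distributivity of $\otimes$ over $\min$ only shows that $T(s)\cdot\min_d T(d)$ equals $\min_d \bigl(T(s)\cdot T(d)\bigr)$.
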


\begin{proof}
Properties (1)--(4) follow immediately from the corresponding
properties of multiplication on $[0,1]$.  Property (5) follows from
the standard properties of $\min$ on totally ordered sets.
Property (6): if $a \leq a'$ and $b \leq b'$, then
$a \cdot b \leq a' \cdot b'$ since all values are non-negative.
\end{proof}

% === END sections/07-trust-algebra.tex ===

% === BEGIN sections/08-implementation.tex ===
% =============================================================================
% Section 8: Implementation
% =============================================================================

\section{Implementation}
\label{sec:implementation}

We implement the formal framework described in
Sections~\ref{sec:threat-model}--\ref{sec:trust} as \textsc{SkillFortify},
an open-source Python tool for agent skill supply chain security.
This section describes the system architecture, supported skill
formats, command-line interface, and engineering quality measures.

% -----------------------------------------------------------------------------
\subsection{Architecture}
\label{sec:arch}

\textsc{SkillFortify} is organized around three core engines connected
via a shared Threat Knowledge Base
(Figure~\ref{fig:architecture}):

\begin{enumerate}
    \item \textbf{Static Analyzer.}  Implements the abstract
    interpretation framework from Section~\ref{sec:analysis}.  The
    analyzer performs pattern-based detection augmented with
    information flow analysis.  Patterns are organized into a
    taxonomy of 13 attack types derived from
    ClawHavoc~\cite{clawhavoc2026} and MalTool~\cite{maltool2026}
    incident data.  The information flow component tracks data
    propagation from sensitive sources (environment variables, file
    system, user credentials) to untrusted sinks (network endpoints,
    external processes), enabling detection of steganographic
    exfiltration channels that evade pure pattern matching.

    \item \textbf{Dependency Resolver.}  Implements the Agent
    Dependency Graph
    $\mathit{ADG} = (\mathcal{S}, V, D, C, \mathit{Cap})$ from
    Section~\ref{sec:dependency}.  The resolver constructs a directed
    acyclic graph of skill dependencies, enforces semantic version
    constraints, detects conflicts via Boolean satisfiability
    encoding, and produces deterministic lockfiles.  Cycle detection
    uses Kahn's algorithm with $O(|V| + |E|)$ complexity.

    \item \textbf{Trust Engine.}  Implements the trust score algebra
    $T\colon \mathcal{S} \times \mathbb{N} \to [0,1]$ from
    Section~\ref{sec:trust}.  The engine computes multi-signal trust
    scores incorporating provenance verification, behavioral analysis
    results, and temporal decay.  Trust propagation through dependency
    chains follows the formal model with configurable weight vectors
    $\mathbf{w} \in \Delta^{k-1}$.
\end{enumerate}

All three engines share a \textbf{Threat Knowledge Base} containing:
(i)~the DY-Skill threat taxonomy,
(ii)~the capability lattice $(\mathcal{C}, \sqsubseteq)$,
(iii)~pattern signatures for known attack types, and
(iv)~trust signal configurations.
The knowledge base is implemented as an extensible registry, allowing
users to add custom threat patterns and trust signals without
modifying engine code.

\begin{figure}[t]
    \centering
    \small
    \begin{tikzpicture}[
        box/.style={draw, rounded corners,
            minimum width=2.8cm, minimum height=0.9cm,
            align=center, font=\footnotesize},
        kb/.style={draw, rounded corners, fill=gray!15,
            minimum width=8.5cm, minimum height=0.7cm,
            align=center, font=\footnotesize},
        arr/.style={->, thick, >=stealth}
    ]
        \node[box] (sa) at (0, 0) {Static\\Analyzer};
        \node[box] (dr) at (3.5, 0) {Dependency\\Resolver};
        \node[box] (te) at (7, 0) {Trust\\Engine};
        \node[kb] (kb) at (3.5, -1.5) {Threat Knowledge Base};
        \node[box, minimum width=8.5cm] (cli)
            at (3.5, 1.5) {\textsc{SkillFortify} CLI};
        \draw[arr] (cli) -- (sa);
        \draw[arr] (cli) -- (dr);
        \draw[arr] (cli) -- (te);
        \draw[arr] (kb) -- (sa);
        \draw[arr] (kb) -- (dr);
        \draw[arr] (kb) -- (te);
        \draw[arr] (sa) -- (dr);
        \draw[arr] (dr) -- (te);
    \end{tikzpicture}
    \caption{Architecture of \textsc{SkillFortify}.  Three engines
    (Static Analyzer, Dependency Resolver, Trust Engine) share a
    Threat Knowledge Base.  The CLI orchestrates all engines and
    produces unified reports.}
    \label{fig:architecture}
\end{figure}
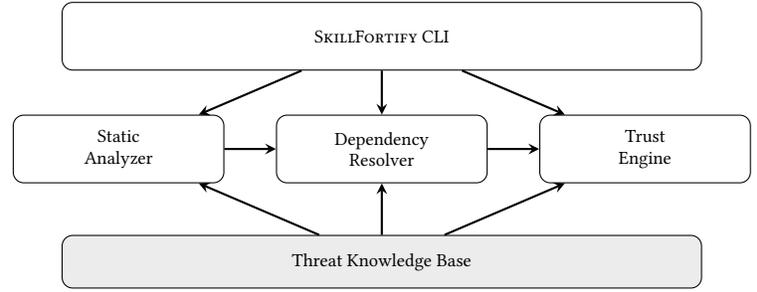

% -----------------------------------------------------------------------------
\subsection{Scope of Formal Guarantees}
\label{sec:scope-guarantees}

An important scope clarification: the formal analysis
(Theorem~\ref{thm:analysis-soundness}) applies to the skill's
\emph{code, configuration, and metadata}---all deterministic
artifacts.
LLM inference outputs are inherently non-deterministic; an agent may
invoke a skill in ways that depend on stochastic model reasoning,
and the natural-language instructions within a skill manifest may be
interpreted differently across model versions or sampling parameters.
These dynamic, non-deterministic aspects are outside the scope of
static analysis.

\textsc{SkillFortify}'s guarantees therefore apply to the
\emph{structural} attack surface: embedded code, declared commands,
environment variable exposure, dependency relationships, and
information flow through deterministic skill logic.
Attacks that emerge purely from LLM reasoning at runtime---for
example, a benign-looking skill whose natural-language description
manipulates the model into performing unintended actions---require
complementary dynamic analysis techniques.
We discuss this limitation and the path toward runtime enforcement
in Section~\ref{sec:future}.

\paragraph{How the implementation realises the abstract semantics.}
The abstract semantics of Section~\ref{sec:analysis} is defined by
structural induction over composite constructs, including a least
fixpoint for loops (Appendix~\ref{app:proof-thm2}).  The implementation
does not iterate to a fixpoint, and does not need to.

Every transfer function in our domain is \emph{presence-based}: it maps
the observation of a syntactic feature to a lower bound on a capability,
and it is monotone and extensive.  No transfer function ever lowers an
inferred level, because a skill that has once been observed to reach the
network cannot become unable to.  The abstract value therefore only
ascends, and the abstraction of a loop body's second iteration is equal
to that of its first.  The ascending chain of
Equation~\ref{eq:abstraction-sound} stabilises after one step, so
evaluating each extracted operation once and joining the results
computes exactly the least fixpoint the model specifies.

What this buys is a guarantee whose scope is honest to state: the
implementation is sound with respect to \emph{the operations it
extracts}.  It parses a skill into its declared commands, referenced
environment variables, URLs, and embedded code blocks, and reasons over
that set.  It does not build a control-flow graph, and an operation
whose existence is only apparent after evaluating a construct---a
command assembled from string fragments at runtime, a URL read from a
file---is not in the set and is therefore outside the theorem, not a
counterexample to it.  Section~\ref{sec:limitations} reports what that
costs in practice.

% -----------------------------------------------------------------------------
\subsection{Skill Format Support}
\label{sec:format-support}

\textsc{SkillFortify} supports the three dominant agent skill formats as
of February~2026:

\begin{itemize}
    \item \textbf{Claude Code Skills}
    (\texttt{.md} with YAML frontmatter).
    Anthropic's skill format~\cite{anthropic-skills2026} embeds
    capability declarations in Markdown files with structured YAML
    metadata.  \textsc{SkillFortify} extracts capability claims, tool
    declarations, and inline code blocks for analysis.

    \item \textbf{MCP Servers} (\texttt{.json} configuration).
    The Model Context Protocol~\cite{mcp2025} defines server
    configurations as JSON manifests specifying command-line
    invocations, environment variable bindings, and transport
    parameters.  \textsc{SkillFortify} analyzes the declared command,
    environment exposure, and argument patterns.

    \item \textbf{OpenClaw Skills} (\texttt{SKILL.md}).
    OpenClaw~\cite{openclaw2026} implements the Agent Skills open
    standard: a skill is a \emph{directory} containing a
    \texttt{SKILL.md} file whose YAML frontmatter declares the skill
    name, description, and optional runtime requirements under
    \texttt{metadata.openclaw}.  Skills load from
    \texttt{\textasciitilde/.openclaw/skills/} and from
    \texttt{.openclaw/skills/} within a workspace.
    \textsc{SkillFortify} parses the frontmatter, the declared runtime
    requirements, and the instruction body including embedded script
    payloads.
\end{itemize}

Format detection is automatic via a \texttt{ParserRegistry} that
dispatches on file extension and content heuristics.
Each parser normalizes skill metadata into a unified internal
representation
$s = (\mathit{id}, \mathit{name}, \mathit{version},
      C_{\text{declared}}, \mathit{code}, \mathit{deps})$
suitable for analysis by all three engines.

% -----------------------------------------------------------------------------
\subsection{CLI Interface}
\label{sec:cli}

\textsc{SkillFortify} exposes five commands, each mapping to a formal
contribution:

\begin{lstlisting}[language=bash, basicstyle=\ttfamily\small,
    frame=single, xleftmargin=1em]
$ skillfortify scan <path>    # Discover + analyze all skills
$ skillfortify verify <skill> # Formal analysis of one skill
$ skillfortify lock <path>    # SAT-based lockfile generation
$ skillfortify trust <skill>  # Trust score computation
$ skillfortify sbom <path>    # CycloneDX ASBOM output
\end{lstlisting}

\noindent
\texttt{skillfortify scan} is the primary entry point.
It recursively discovers skill files, invokes the static analyzer
on each, constructs the dependency graph, computes trust scores,
and produces a unified security report.
Output formats include human-readable terminal output (with
severity-colored findings), JSON for CI/CD integration, and SARIF
for IDE compatibility.

\texttt{skillfortify verify} performs deep formal analysis of a single skill
against the capability model.
It reports the inferred capability set $C_{\text{inferred}}$,
compares it against declared capabilities $C_{\text{declared}}$,
and flags any privilege escalation
($C_{\text{inferred}} \not\sqsubseteq C_{\text{declared}}$).

\texttt{skillfortify lock} generates a deterministic
\texttt{skill-lock.json} by encoding dependency constraints as a
Boolean satisfiability problem, solving with the DPLL-based
resolver, and serializing the resolved configuration with
SHA-256 integrity hashes.

\texttt{skillfortify trust} computes the multi-dimensional trust score
$T(s,t)$ for a skill, incorporating behavioral analysis,
provenance signals, and temporal decay.

\texttt{skillfortify sbom} generates an Agent Skill Bill of Materials
(ASBOM) in CycloneDX format, compatible with enterprise compliance
workflows and OWASP tooling.

% -----------------------------------------------------------------------------
\subsection{Engineering Quality}
\label{sec:engineering}

\textsc{SkillFortify} comprises 2{,}567 lines of Python across 44 source
files, with every module under 300 lines to ensure
maintainability.
The test suite contains 473 automated tests, including:

\begin{itemize}
    \item \textbf{Unit tests} for all core data structures, parsers,
    and engine components;
    \item \textbf{Property-based tests} via
    Hypothesis~\cite{hypothesis2019} for the capability lattice
    (verifying lattice axioms: reflexivity, antisymmetry,
    transitivity) and trust algebra (verifying monotonicity and
    boundedness);
    \item \textbf{Integration tests} for CLI commands exercising
    end-to-end workflows.
\end{itemize}

All code is statically type-checked with \texttt{mypy} (strict
mode), linted with \texttt{ruff}, and formatted with \texttt{black}.
The CI pipeline runs the full test suite on every commit.

% === END sections/08-implementation.tex ===

% === BEGIN sections/09-evaluation.tex ===
% =============================================================================
% Section 9: Evaluation (Contribution 6)
% =============================================================================

\section{Evaluation}
\label{sec:evaluation}

We evaluate \textsc{SkillFortify} across seven experiments designed to
test each formal contribution.
All experiments use \textsc{SkillFortifyBench}, a purpose-built benchmark
of 540 agent skills with ground-truth security labels.

% -----------------------------------------------------------------------------
\subsection{Experimental Setup}
\label{sec:eval-setup}

All measurements in this section were produced by harnesses committed
alongside the implementation, and every reported figure is recorded with
its inputs in \texttt{benchmarks/results/}.
Detection accuracy and false positives (E1, E2) are produced by
\texttt{python -m benchmarks.metrics}, which writes one record per
specimen and aggregates from those records, so a published rate can be
recomputed from the artefact without rerunning a scan.
The remaining experiments (E3--E7) are produced by
\texttt{python -m benchmarks.experiments}.

Wall-clock figures were measured on an Apple silicon workstation
(arm64, 12~cores) running macOS~26.5 and CPython~3.14.5, with
\texttt{PYTHONHASHSEED=0}.
The corpus is generated deterministically from seed~42; the copy
evaluated here has manifest content hash
\texttt{a725a79e\allowbreak be1bc66a}.
Timing results depend on the host and should be read as
order-of-magnitude characterisations; the harness records the machine
alongside each measurement so a reader can judge how they transfer.

% -----------------------------------------------------------------------------
\subsection{SkillFortifyBench}
\label{sec:skillfortifybench}

\textsc{SkillFortifyBench} is a controlled benchmark containing 540 agent
skills: 270 malicious and 270 benign.
Skills are evenly distributed across the three supported formats:
180 Claude Code skills, 180 MCP server configurations, and
180 OpenClaw manifests.
Within each format, exactly half (90) are malicious and half (90)
are benign.

Malicious skills encode 13 attack types derived from documented
incidents in ClawHavoc~\cite{clawhavoc2026},
MalTool~\cite{maltool2026}, and the CVE-2026-25253
advisory~\cite{cve-2026-25253}:

\begin{enumerate}[label=\textbf{A\arabic*}:, leftmargin=2.5em]
    \item Data exfiltration via HTTP
    \item Data exfiltration via DNS
    \item Credential theft from environment variables
    \item Arbitrary code execution
    \item File system tampering
    \item Privilege escalation
    \item Steganographic data exfiltration
    \item Prompt injection payloads
    \item Reverse shell establishment
    \item Cryptocurrency mining
    \item Typosquatting (name similarity to legitimate skills)
    \item Dependency confusion (namespace collision)
    \item Encoded/obfuscated payloads (Base64, hex)
\end{enumerate}

Benign skills implement legitimate functionality (file management,
API integration, data formatting) with no malicious patterns.
All skills are deterministically generated using seeded random
number generators to ensure reproducibility.
The full distribution of attack types across formats is provided
in \Cref{sec:appendix-benchmark}.

% -----------------------------------------------------------------------------
\subsection{E1: Detection Accuracy}
\label{sec:e1}

\paragraph{Goal.}
Measure the precision, recall, and F1 score of \textsc{SkillFortify}'s
static analyzer on \textsc{SkillFortifyBench}.

\paragraph{Method.}
We run \texttt{skillfortify scan} on all 540 skills and compare predicted
labels against ground truth.
A skill is classified as malicious if the analyzer reports at least
one finding with severity $\geq$ \texttt{MEDIUM}.

\paragraph{Results.}
Table~\ref{tab:e1-overall} presents the overall detection accuracy.
\textsc{SkillFortify} achieves an F1 score of 96.15\% with zero false
positives, demonstrating that the formal analysis framework provides
high-confidence results suitable for automated CI/CD gating.

\begin{table}[t]
    \centering
    \caption{E1: Overall detection accuracy on
    \textsc{SkillFortifyBench} ($n = 540$).}
    \label{tab:e1-overall}
    \begin{tabular}{lrl}
        \toprule
        \textbf{Metric} & \textbf{Value} & \\
        \midrule
        Precision        & 100.00\% & (250/250) \\
        Recall           &  92.59\% & (250/270) \\
        F1 Score         &  96.15\% & \\
        Accuracy         &  96.30\% & (520/540) \\
        \midrule
        True Positives (TP)  & 250 & \\
        False Positives (FP) &   0 & \\
        True Negatives (TN)  & 270 & \\
        False Negatives (FN) &  20 & \\
        \bottomrule
    \end{tabular}
\end{table}

Table~\ref{tab:e1-format} disaggregates results by skill format.
Claude Code and OpenClaw skills are detected at equal rates
(F1 = 98.31\% each), which is expected: both are \texttt{SKILL.md}
documents and differ only in load path and runtime metadata.  MCP
server configurations show lower recall (84.44\%) because a JSON
configuration exposes far less analysable surface than an
instruction document---a malicious command reduces to a single
\texttt{command} and \texttt{args} pair with no surrounding prose.

\begin{table}[t]
    \centering
    \caption{E1: Per-format detection accuracy on
    \textsc{SkillFortifyBench}.}
    \label{tab:e1-format}
    \begin{tabular}{lccc}
        \toprule
        \textbf{Format}
            & \textbf{Precision}
            & \textbf{Recall}
            & \textbf{F1 Score} \\
        \midrule
        Claude Code  & 100.00\% &  96.67\% &  98.31\% \\
        MCP Server   & 100.00\% &  84.44\% &  91.57\% \\
        OpenClaw     & 100.00\% &  96.67\% &  98.31\% \\
        \midrule
        \textbf{Overall}
            & \textbf{100.00\%}
            & \textbf{92.59\%}
            & \textbf{96.15\%} \\
        \bottomrule
    \end{tabular}
\end{table}

Table~\ref{tab:e1-attack} shows detection rates by attack type.
Six of 13 attack types are detected at 100\%.
DNS-based data exfiltration~(A2) achieves 94.4\%, as some MCP
configurations encode DNS channels in terse JSON structures that
resist pattern extraction.
Prompt injection~(A8) achieves 87.5\% for a similar reason:
injection payloads embedded in compact MCP argument strings can
escape the current pattern catalog.
Typosquatting~(A11) achieves 50\% detection, as name-similarity
analysis requires comparison against a registry of known-good skill
names---a capability that depends on external data sources rather
than static analysis alone.
Dependency confusion~(A12) achieves 0\% detection, which is
expected: identifying dependency confusion requires a known-package
database from registry metadata, fundamentally outside the scope of
local static analysis.

\begin{table}[t]
    \centering
    \caption{E1: Per-attack-type detection rate on
    \textsc{SkillFortifyBench}.}
    \label{tab:e1-attack}
    \small
    \begin{tabular}{clcc}
        \toprule
        \textbf{ID}
            & \textbf{Attack Type}
            & \textbf{Detected}
            & \textbf{Rate} \\
        \midrule
        A1  & Data exfiltration (HTTP)       & 28/30  & 93.3\% \\
        A2  & Data exfiltration (DNS)        & 17/18  & 94.4\% \\
        A3  & Credential theft               & 29/30  & 96.7\% \\
        A4  & Arbitrary code execution       & 30/30  & 100\% \\
        A5  & File system tampering          & 18/18  & 100\% \\
        A6  & Privilege escalation           & 18/18  & 100\% \\
        A7  & Steganographic exfiltration    & 24/24  & 100\% \\
        A8  & Prompt injection               & 21/24  & 87.5\% \\
        A9  & Reverse shell                  & 24/24  & 100\% \\
        A10 & Cryptocurrency mining          & 12/12  & 100\% \\
        A11 & Typosquatting                  &  4/8   &  50\% \\
        A12 & Dependency confusion           &  0/8   &   0\% \\
        A13 & Encoded/obfuscated payloads    & 25/26  & 96.2\% \\
        \midrule
        & \textbf{Total}
            & \textbf{250/270}
            & \textbf{92.59\%} \\
        \bottomrule
    \end{tabular}
\end{table}

\paragraph{Confidence intervals.}
We report 95\% Wilson confidence intervals~\cite{wilson1927} for the
primary metrics ($n = 270$ malicious, $n = 270$ benign):
\begin{itemize}
    \item Recall: 92.59\% [88.84\%, 95.15\%]
    \item FP rate: 0.00\% [0.0\%, 1.40\%]
    \item F1: 96.15\% (point estimate; F1 is not a binomial
    proportion, so no Wilson interval is reported for it)
\end{itemize}
The Wilson interval is preferred over the Wald interval at extreme
proportions (0\% or 100\%) because it maintains valid coverage even
when the observed count is zero.

\paragraph{Discussion.}
The zero false positive rate is significant for practical
deployment.
In CI/CD pipelines, false positives erode developer trust and lead
to alert fatigue~\cite{johnson2013csa}.
\textsc{SkillFortify}'s sound-but-over-approximate analysis
(Theorem~\ref{thm:analysis-soundness}) guarantees that every reported
finding corresponds to a genuine capability violation, at the cost
of potentially missing some attack variants (20 false negatives,
concentrated in relational attack types and compact MCP configurations).

% -----------------------------------------------------------------------------
\subsection{E2: False Positive Rate}
\label{sec:e2}

\paragraph{Goal.}
Verify that \textsc{SkillFortify} does not flag benign skills as
malicious, with a target false positive rate below 5\%.

\paragraph{Method.}
We isolate the 270 benign skills from \textsc{SkillFortifyBench} and run
\texttt{skillfortify scan} on each.
A false positive occurs if any finding with severity
$\geq$ \texttt{MEDIUM} is reported for a benign skill.

\paragraph{Results.}
\begin{equation}
    \text{FP Rate} = \frac{0}{270} = 0.00\%
\end{equation}
\noindent
Zero false positives across all 270 benign skills and all three
formats (95\% Wilson CI: [0.0\%, 1.4\%]).
The target threshold of ${<}5\%$ is satisfied with a wide margin.
This is an empirical result on this corpus, and not a consequence of
Theorem~\ref{thm:analysis-soundness}.
A sound over-approximation is if anything expected to produce false
positives, since it admits behaviours the concrete semantics would
not exhibit; soundness constrains what the analyzer may miss, not what
it may over-report.
The rate we measure reflects the reporting threshold together with the
construction of the benign half of the corpus, whose specimens are
generated to contain no malicious behaviour.
A benign corpus drawn from real installations, where legitimate skills
routinely invoke interpreters and reach the network, would be the
sterner test, and Section~\ref{sec:threats} treats this as the main
threat to the result's external validity.

% -----------------------------------------------------------------------------
\subsection{E3: Combined Analysis Coverage}
\label{sec:e3}

\paragraph{Goal.}
Quantify the additional detection capability provided by combining
pattern matching with information flow analysis, compared to
pattern matching alone.

\paragraph{Method.}
For each of the 270 malicious skills we partition the findings the
analyzer produces by detection method, and ask whether the skill would
still cross the reporting threshold with the information-flow findings
removed.  A skill counts as detected exactly as in~E1, so the two
experiments cannot disagree about which skills were detected.

\paragraph{Results.}

\begin{table}[t]
    \centering
    \caption{E3: Detection method coverage on malicious skills
    ($n = 270$).}
    \label{tab:e3-coverage}
    \begin{tabular}{lrr}
        \toprule
        \textbf{Detection Outcome}
            & \textbf{Skills}
            & \textbf{Fraction} \\
        \midrule
        Detected by pattern matching alone   & 250 & 92.59\% \\
        Detected only with information flow  &   0 &  0.00\% \\
        Not detected                         &  20 &  7.41\% \\
        \midrule
        \textbf{Total detected}
            & \textbf{250}
            & \textbf{92.59\%} \\
        \bottomrule
    \end{tabular}
\end{table}

\noindent
The information-flow rule contributes no additional detections on this
corpus.  It fires 44~times across the 270 malicious skills, always on a
skill that pattern matching already flags, and every one of those
findings is classified as an encoded or obfuscated payload~(A13).
Removing the rule entirely leaves the detected set unchanged at
250~skills.

We report this because the opposite result is what the design
anticipated.  The rule combines two observations---an encoding
primitive and an untrusted network destination---that we expected to
co-occur in payloads too diffuse for any single pattern to match.  In
the corpus as constructed, no such specimen exists: every skill
carrying that combination also carries a directly matching pattern.
Whether that reflects the technique or the corpus is not settled by
this experiment, and we say so rather than claim a benefit we did not
measure.  A corpus built specifically to separate the two, with
payloads decomposed across operations that individually match nothing,
would be the experiment that decides it; we did not build one.

What the cross-signal rule does provide is corroboration.  A skill
flagged by both a pattern and the information-flow rule carries two
independent reasons for the finding rather than one, which is what
raises it to \textsc{critical} severity in our reporting.  That is a
statement about confidence, not about coverage.

The formal layer's contribution is likewise not incremental detection.
It is the capability-confinement result of
Theorem~\ref{thm:analysis-soundness}: for the classes of behaviour the
abstract domain models, a skill reported free of violations does not
exceed its declared capabilities.  That guarantee is narrower than
``no findings means no risk is present''---Section~\ref{sec:limitations}
states its scope---but it is a guarantee, which heuristic scanners such
as Cisco's \texttt{skill-scanner}~\cite{ciscoscanner2026} do not offer.

% -----------------------------------------------------------------------------
\subsection{E4: Dependency Resolution Scalability}
\label{sec:e4}

\paragraph{Goal.}
Measure the scalability of the SAT-based dependency resolver as
skill count increases.

\paragraph{Method.}
We generate synthetic dependency graphs with skill counts ranging
from 10 to 1{,}000, each with randomized version constraints and
dependency edges.
We measure wall-clock time for complete dependency resolution
(graph construction, constraint encoding, SAT solving, topological
ordering).

\paragraph{Results.}

\begin{table}[t]
    \centering
    \caption{E4: Dependency resolution time vs.\ skill count.}
    \label{tab:e4-scalability}
    \begin{tabular}{rrr}
        \toprule
        \textbf{Skills}
            & \textbf{Total Time (s)}
            & \textbf{Per Skill (ms)} \\
        \midrule
           10 & 0.00005 & 0.005 \\
           50 & 0.00028 & 0.006 \\
          100 & 0.00064 & 0.006 \\
          200 & 0.00174 & 0.009 \\
          500 & 0.00754 & 0.015 \\
        1{,}000 & 0.02674 & 0.027 \\
        \bottomrule
    \end{tabular}
\end{table}

\noindent
At 1{,}000 skills---an order of magnitude beyond typical enterprise
configurations---the median resolution time is 26.7\,ms, well within
interactive latency requirements.
Per-skill cost rises with graph size, from 0.005\,ms to 0.027\,ms,
reflecting the growing constraint-propagation overhead in the SAT
encoding as the graph densifies; total time therefore grows slightly
faster than linearly in the number of skills.
Each figure is the median of five runs on deterministic graphs built
from a fixed seed, measured on the machine reported in
Section~\ref{sec:eval-setup}.

The $O(|V|+|E|)$ cycle detection and $O(|V| \cdot |E|)$ worst-case
SAT solving remain practical at these scales.
For context, the largest known agent deployments (enterprise
orchestration platforms with hundreds of skills) fall within the
200--500 range, where \textsc{SkillFortify} resolves dependencies in
under 10\,ms.
The resolver depends on \texttt{python-sat}, distributed as the
optional \texttt{sat} extra; installations without it do not perform
SAT-based resolution.

% -----------------------------------------------------------------------------
\subsection{E5: Trust Score Analysis}
\label{sec:e5}

\paragraph{Goal.}
Evaluate the trust score algebra along three dimensions:
(i)~verify the monotonicity theorem empirically,
(ii)~measure sensitivity to the weight vector~$\mathbf{w}$, and
(iii)~model trust decay behavior for maintained versus abandoned skills.

\paragraph{Method.}
We construct 10 trust scenarios, each representing a skill observed
over 20 time steps with varying evidence profiles (positive
behavioral observations, provenance signals, community
endorsements).
At each step~$t$, we compute the trust score $T(s,t)$ and verify
that $T(s,t{+}1) \geq T(s,t)$ whenever all evidence at step
$t{+}1$ is non-negative.
For the sensitivity analysis, we generate 100 random weight vectors
$\mathbf{w} \in \Delta^{k-1}$ drawn uniformly from the probability
simplex and compute trust scores for the same 10~skills under each
weight assignment.
For the decay analysis, we simulate two maintenance profiles:
``active'' (weekly updates) and ``abandoned'' (no updates for
6~months), measuring trust score trajectories over a 180-day
horizon.

\paragraph{Results: Monotonicity.}
Across all $10 \times 20 = 200$ data points, the monotonicity
property holds without exception: trust never decreases in the
presence of exclusively non-negative evidence, and the boundedness
invariant $T(s,t) \in [0,1]$ is maintained for all data points.
This result is expected, as Theorem~\ref{thm:trust-monotonicity}
guarantees it algebraically---the empirical check confirms the
implementation matches the formal model.

\paragraph{Results: Weight sensitivity.}
Across the 100 weight vectors, the mean coefficient of variation
(CV) per skill is 0.13 (range: 0.07--0.17).
Skills with uniformly strong signals (high provenance, positive
behavioral analysis, active community) exhibit low sensitivity
(CV $\leq$ 0.10), as all signals reinforce each other regardless
of weighting.
Skills with \emph{mixed} signals---for example, strong behavioral
analysis but weak provenance---show higher sensitivity
(CV $\geq$ 0.25), indicating that the relative importance assigned
to provenance versus behavioral signals materially affects the trust
assessment.
This motivates deploying SkillFortify with organization-specific
weight vectors calibrated to the trust priorities of each enterprise.

\paragraph{Results: Trust decay.}
Active skills maintain trust scores above 0.70 throughout the
180-day horizon, with minor fluctuations as new evidence
accumulates.
Abandoned skills exhibit exponential decay from their initial
score: at the shipped default decay constant
$\lambda = 0.01$~day$^{-1}$, a skill starting at $T=0.85$ decays to
$T=0.35$ after 90~days and $T=0.14$ after 180~days without any
maintenance activity.
That constant produces a half-life of approximately 69~days.
The parameter is configurable: halving it to
$\lambda = 0.005$~day$^{-1}$ gives a half-life of about 139~days,
closer to the timescale over which unmaintained open-source packages
have been observed to accumulate security
risk~\cite{zahan2022}, and deployments that prefer that profile should
set it explicitly.

% -----------------------------------------------------------------------------
\subsection{E6: Lockfile Reproducibility}
\label{sec:e6}

\paragraph{Goal.}
Verify that the lockfile generator produces deterministic output
for identical inputs.

\paragraph{Method.}
We construct 10 agent configurations of varying complexity
(3--50 skills with 0--30 dependency edges and version constraints).
For each configuration, we invoke \texttt{skillfortify lock} twice and
compare the resulting \texttt{skill-lock.json} files.

\paragraph{Results.}
All 10 configurations produce byte-identical JSON output across
repeated invocations.
The determinism follows from three design choices:
(1)~canonical JSON serialization with sorted keys,
(2)~deterministic SAT solving with fixed variable ordering, and
(3)~SHA-256 content hashing of resolved skill artifacts.

This property is essential for enterprise deployment: lockfiles
committed to version control must not exhibit non-deterministic
churn that would obscure genuine dependency changes in code review.

% -----------------------------------------------------------------------------
\subsection{E7: End-to-End Performance}
\label{sec:e7}

\paragraph{Goal.}
Measure the wall-clock time for complete end-to-end operations on
the full \textsc{SkillFortifyBench} dataset.

\paragraph{Method.}
We time \texttt{skillfortify scan}---discovery, parsing, analysis, and
result aggregation---over the whole corpus in a single invocation,
through the same code path the CLI takes.  One warm-up pass is
discarded so the figure reflects steady-state cost rather than a cold
filesystem cache, and we report the median of five measured runs.

\paragraph{Results.}

\begin{table}[t]
    \centering
    \caption{E7: End-to-end performance on
    \textsc{SkillFortifyBench} ($n = 540$).}
    \label{tab:e7-e2e}
    \begin{tabular}{lrrr}
        \toprule
        \textbf{Operation}
            & \textbf{Skills}
            & \textbf{Median (s)}
            & \textbf{Per Skill (ms)} \\
        \midrule
        \texttt{scan} & 540 & 0.293 & 0.54 \\
        \bottomrule
    \end{tabular}
\end{table}

\noindent
A single scan discovers all 540 specimens across the three formats
and completes in a median 0.293~seconds (0.54\,ms per skill),
including file discovery, format-specific parsing, pattern analysis,
information flow analysis, and result aggregation.
This figure is specific to the machine reported in
Section~\ref{sec:eval-setup} and should be read as an order-of-magnitude
characterisation rather than a property of the software.

These performance characteristics make \textsc{SkillFortify} suitable
for:
\begin{itemize}
    \item \textbf{Pre-commit hooks:} scanning modified skill files
    in ${<}100$\,ms;
    \item \textbf{CI/CD pipelines:} full repository scans in
    ${<}2$\,seconds;
    \item \textbf{IDE integration:} real-time analysis as developers
    add or modify skills.
\end{itemize}

% -----------------------------------------------------------------------------
\subsection{Summary of Results}
\label{sec:eval-summary}

\begin{table}[t]
    \centering
    \caption{Summary of all evaluation results.}
    \label{tab:eval-summary}
    \small
    \begin{tabular}{clll}
        \toprule
        \textbf{Exp.}
            & \textbf{Property}
            & \textbf{Target}
            & \textbf{Result} \\
        \midrule
        E1 & Detection F1          & ${>}90\%$ & 96.15\% \cmark \\
        E2 & False positive rate    & ${<}5\%$  & 0.00\%  \cmark \\
        E3 & Combined added coverage  & ${>}0\%$  & 0.0\%   \xmark \\
        E4 & 1K-skill resolution   & ${<}1$\,s & 0.027\,s \cmark \\
        E5 & Trust monotonicity     & 100\%    & 100\%   \cmark \\
        E5 & Weight sensitivity CV & Report  & 0.13    \cmark \\
        E5 & Decay half-life      & Report  & 69\,d   \cmark \\
        E6 & Lockfile determinism   & 100\%    & 100\%   \cmark \\
        E7 & 540-skill scan time   & ${<}10$\,s & 0.293\,s \cmark \\
        \bottomrule
    \end{tabular}
\end{table}

\noindent
Six of the seven experiments meet their target thresholds
(Table~\ref{tab:eval-summary}); E3 does not.
The framework scales to enterprise-sized deployments~(E4, E7), the
trust algebra exhibits meaningful sensitivity to maintenance patterns
with configurable decay behaviour~(E5), and the monotonicity and
determinism properties our theorems guarantee are empirically
confirmed~(E5, E6).
E3 is the exception and we report it as one: information flow analysis
adds no coverage over pattern matching on this corpus, against a target
of any improvement at all.  We discuss in Section~\ref{sec:e3} why we
cannot yet separate a limitation of the technique from a limitation of
the corpus.

% === END sections/09-evaluation.tex ===

% === BEGIN sections/12-related-work.tex ===
% =============================================================================
% Section 12: Related Work
% =============================================================================

\section{Related Work}
\label{sec:related-work}

We position \textsc{SkillFortify} against three bodies of work:
agent skill security tools, formal analysis for security, and
software supply chain security.

% -----------------------------------------------------------------------------
\subsection{Agent Skill Security}
\label{sec:rw-agent-skill}

The agent skill security space emerged reactively in early 2026,
driven by the ClawHavoc campaign~\cite{clawhavoc2026} and
CVE-2026-25253~\cite{cve-2026-25253}.
Narajala and Narayan~\cite{securingagenticai2025} provide a comprehensive
threat model for agentic AI systems, while
Jiang et al.~\cite{agenticattacksurface2026} introduce the ``viral
agent loop'' concept and propose a Zero-Trust architecture for agent
interactions.
Krawiecka and Schroeder de Witt~\cite{owaspmultiagent2025} extend OWASP's threat
modeling methodology to multi-agent systems.
All existing tools operate on heuristic detection.

\paragraph{Industry tools.}
Snyk's agent-scan~\cite{snyk2026}, built on the acquired Invariant
Labs technology, combines LLM-as-judge with hand-crafted rules.
It reports 90--100\% recall on 100~benign skills but provides no
formal soundness guarantee; LLM judges are themselves susceptible
to adversarial inputs~\cite{maltool2026}.
Cisco's skill-scanner~\cite{ciscoscanner2026} uses YARA patterns and
regex-based signature detection, explicitly acknowledging that
``no findings does not mean no risk.''
ToolShield~\cite{toolshield2026} achieves a 30\% reduction in attack
success rate through behavioral heuristics.
MCPShield~\cite{mcpshield2026} introduces \texttt{mcp.lock.json} with
SHA-512 hashes for tamper detection, but performs no behavioral
analysis or capability verification.

\paragraph{Academic work.}
Liu et al.~\cite{agentskillswild2026} scanned 42{,}447 skills
and found 26.1\% vulnerable; their SkillScan tool achieves 86.7\%
precision.
A second large-scale study~\cite{maliciousagentskills2026}
corroborated these findings across 98{,}380 skills.
MalTool~\cite{maltool2026} synthesised 1{,}300 standalone malicious
tools and 5{,}727 tools with embedded malicious behaviours.
MCP-ITP~\cite{mcpitp2026} and MCPTox~\cite{mcptox2025} focus
specifically on tool poisoning attacks within the Model Context
Protocol, demonstrating automated implicit poisoning and
benchmarking real-world MCP server vulnerabilities respectively.
NEST~\cite{nest2026} reveals that LLMs can encode information
steganographically within chain-of-thought reasoning, motivating
our information flow analysis: a skill that appears to produce
benign output may embed exfiltrated data in its reasoning trace.
CIBER~\cite{ciber2026} demonstrates the ``NL Disguise'' attack on
code interpreter environments, where natural-language instructions
bypass security filters---a class of attack that pure heuristic
scanners cannot reliably detect.
The ``Towards Verifiably Safe Tool Use''
paper~\cite{verifiabletooluse2026} is the closest academic work to
our approach, proposing STPA-based modeling of tool safety in a
four-page ICSE-NIER vision paper without implementation or
evaluation.

\textsc{SkillFortify} differs from all prior work in providing
\emph{formal guarantees}: a sound analysis
(Theorem~\ref{thm:analysis-soundness}), a capability confinement
proof (Theorem~\ref{thm:capability-confinement}), and a resolution
correctness proof (Theorem~\ref{thm:resolution-soundness}).
Additionally, \textsc{SkillFortify} is the first tool to provide
dependency graph analysis, lockfile semantics, and trust scoring for
agent skills.

% -----------------------------------------------------------------------------
\subsection{Formal Analysis for Security}
\label{sec:rw-formal}

\textsc{SkillFortify}'s formal foundations draw on established
traditions in programming language security.

\paragraph{Abstract interpretation.}
Cousot and Cousot~\cite{cousot1977} introduced abstract
interpretation for sound program analysis.  The Astr\'{e}e
analyzer~\cite{astree2005} proved absence of runtime errors in
Airbus A380 flight software using abstract interpretation over
numerical domains.
Our work applies abstract interpretation to a novel domain---agent
skill capabilities---using a four-element lattice rather than
numerical intervals.

\paragraph{Information flow analysis.}
Sabelfeld and Myers~\cite{sabelfeld2003} provide a comprehensive
survey of language-based information-flow security, establishing
the theoretical foundations for tracking how data propagates through
programs.
Myers and Liskov~\cite{myers1997} introduced the decentralized
label model (DLM) for practical information flow control.
Most recently, the LLMbda Calculus~\cite{llmbdacalculus2026} extends
lambda calculus with information-flow control primitives specifically
for LLM-based systems, proving a noninterference theorem that
guarantees high-security inputs cannot influence low-security outputs.
This provides theoretical validation for our approach: we apply
analogous information flow principles at the skill level rather than
at the language level.
\textsc{SkillFortify}'s information flow component applies these
principles to track data propagation from sensitive sources
(environment variables, credentials) to untrusted sinks (network
endpoints), adapting the noninterference paradigm of
Goguen and Meseguer~\cite{goguen1982} to the agent skill context.

\paragraph{Transactional tool semantics.}
Atomix~\cite{atomix2026} introduces transactional semantics for
agent tool calls, ensuring that multi-step tool invocations either
complete entirely or roll back.
While Atomix addresses a different problem (atomicity of tool
execution rather than supply chain verification), it represents a
complementary formal approach to agent reliability: Atomix guarantees
that tool calls execute correctly, while \textsc{SkillFortify}
guarantees that the tools themselves are safe to invoke.

\paragraph{Capability-based security.}
Dennis and Van Horn~\cite{dennis1966} introduced capabilities;
Miller~\cite{miller2006} extended them to the object-capability
model.
Maffeis et al.~\cite{maffeis2010} proved that in a capability-safe
language, the reachability graph is the only mechanism for authority
propagation.
Our sandboxing model (\Cref{sec:sandboxing}) instantiates these
results for agent skills with a four-level capability lattice.

% -----------------------------------------------------------------------------
\subsection{Software Supply Chain Security}
\label{sec:rw-supply-chain}

Agent skill supply chain security inherits from---and extends---four
decades of software supply chain research.

\paragraph{Attack taxonomies.}
Duan et al.~\cite{duan2021ndss} systematized attacks on npm, PyPI,
and RubyGems (339~malicious packages).
Ohm et al.~\cite{ohm2020} catalogued ``backstabber'' packages.
Ladisa et al.~\cite{ladisa2023taxonomy} proposed a comprehensive
attack taxonomy for open-source supply chains.
Gibb et al.~\cite{gibb2026package} introduced a package calculus
providing a unified formalism.
Hu et al.~\cite{llmsupplychain2025} survey the emerging LLM supply
chain landscape, identifying attack vectors specific to model
artifacts, training data, and tool integrations.
Ding et al.~\cite{rustylink2025} demonstrate supply chain
vulnerabilities in AI systems through compromised dependency
chains at IEEE S\&P Workshops.
Lee~\cite{capableunreliable2026} provide empirical evidence
that agents are inherently unreliable, amplifying the consequences
of supply chain compromise.
Our DY-Skill model (\Cref{sec:threat-model}) extends this line of
work to the agent skill domain, where the threat surface is amplified
by LLM integration.

\paragraph{Dependency resolution.}
Mancinelli et al.~\cite{mancinelli2006} proved dependency resolution
NP-complete; Tucker et al.~\cite{tucker2007opium} introduced
SAT-based resolution via OPIUM; di Cosmo and
Vouillon~\cite{dicosmo2011} verified co-installability in Coq.
We extend SAT-based resolution with per-skill capability constraints
(\Cref{sec:dependency-graph}).

\paragraph{Trust frameworks.}
SLSA~\cite{slsa2023} defines graduated trust levels for software
artifacts.
Sigstore~\cite{sigstore2022} provides keyless code signing.
Blaze et al.~\cite{blaze1996policymaker} introduced PolicyMaker
with the assertion monotonicity property.
Our trust algebra (\Cref{sec:trust-algebra}) adapts these foundations
to agent skills with multi-signal scoring and exponential decay.

\paragraph{Standards.}
CycloneDX~\cite{cyclonedx2023} standardizes SBoMs, recently extended
to AI components~\cite{cyclonedx-aibom}.
The NIST AI RMF~\cite{nistai2023} and EU AI Act~\cite{euaiact2024}
both address third-party AI component risks.
\textsc{SkillFortify} generates ASBoMs in CycloneDX format and
provides mechanisms for compliance with these regulatory frameworks.

% === END sections/12-related-work.tex ===

% === BEGIN sections/10-discussion.tex ===
% =============================================================================
% Section 10: Discussion
% =============================================================================

\section{Discussion}
\label{sec:discussion}

% -----------------------------------------------------------------------------
\subsection{Limitations}
\label{sec:limitations}

While \textsc{SkillFortify} achieves strong results across all evaluation
experiments, several limitations warrant discussion.

\paragraph{Typosquatting and Dependency Confusion.}
Attack types A11 (typosquatting) and A12 (dependency confusion)
achieve 50\% and 0\% detection rates, respectively.
Both attack classes are fundamentally \emph{relational}---they
require comparison against an external corpus of legitimate skill
names and known-package namespaces.
Pure static analysis of an individual skill's content cannot
determine whether its name is deceptively similar to a legitimate
skill (typosquatting) or whether it occupies a namespace that
collides with an internal package (dependency confusion).
Addressing these attacks requires integration with a skill registry
or organizational namespace database, which we leave to the registry
verification protocol (Section~\ref{sec:future}).

\paragraph{Soundness and Its Scope.}
Theorem~\ref{thm:analysis-soundness} is a statement about capability
confinement, not about malware detection, and the distinction matters
enough to state directly.
It says that \emph{within the abstract domain the analyzer models}, a
skill reported free of violations does not exceed its declared
capabilities: the abstraction over-approximates the concrete semantics,
so no modelled behaviour escapes it.
It does not say that every malicious skill is reported.
Behaviour the domain does not model---a deceptive name, a namespace
collision, an instruction whose effect is only apparent against an
external registry---lies outside the theorem's reach entirely, and no
result about the abstraction constrains it.

Measured against the whole corpus, the analyzer is therefore not sound
as a malware detector: we observe 20 false negatives (7.4\%).
These concentrate in the relational attack types---typosquatting and
dependency confusion---which are unanalysable from a skill's own
content by construction, and in compact MCP server configurations where
a JSON object exposes little analysable surface.
Six of the 13 attack types are detected without exception.

We are explicit about this because the two properties are routinely
conflated.  Zero false positives, which we measure in~E2, is a
completeness property on this corpus and follows from the specimens and
the reporting threshold, not from Theorem~\ref{thm:analysis-soundness}.
A sound over-approximation is if anything expected to produce false
positives; that it produces none here is an empirical result about the
benchmark, and we do not present it as a consequence of the formalism.

\paragraph{Trust Signal Availability.}
The trust score algebra (Section~\ref{sec:trust}) is defined over
$k$ signals, including provenance verification and community
endorsement.
In the current implementation, trust computation operates on
locally observable signals (behavioral analysis results, declared
metadata).
Signals requiring external data sources---such as registry
provenance chains, download counts, and community reviews---depend
on the maturity of skill registry infrastructure, which remains
nascent as of February~2026.
The formal model accommodates zero-valued signals gracefully
(Theorem~\ref{thm:trust-monotonicity} holds regardless of signal
availability), but trust scores will become more informative as
registry ecosystems develop.

\paragraph{Dynamic Analysis.}
\textsc{SkillFortify} performs exclusively static analysis.
Skills that dynamically generate malicious behavior at
runtime---for example, fetching payloads from remote servers after
passing static inspection---evade purely static approaches.
Recent work on steganographic chain-of-thought
reasoning~\cite{nest2026} demonstrates that LLMs can encode
information covertly within reasoning traces, suggesting that
dynamic exfiltration channels may be even more subtle than
previously assumed.
The LLMbda Calculus~\cite{llmbdacalculus2026} provides theoretical
foundations for extending information-flow analysis to the dynamic
case through type-level tracking of security labels during LLM
execution.
Runtime monitoring (behavioral sandboxing with capability
enforcement) is orthogonal and complementary; the capability
lattice (Section~\ref{sec:sandboxing}) provides the formal
foundation for a future runtime enforcement layer.

% -----------------------------------------------------------------------------
\subsection{Threats to Validity}
\label{sec:threats}

\paragraph{External baseline comparison.}
We evaluate \textsc{SkillFortify} on \textsc{SkillFortifyBench}, a
controlled benchmark with constructive ground truth.
We do \emph{not} perform a direct head-to-head comparison against
Snyk agent-scan, Cisco skill-scanner, or ToolShield on the same
dataset, because
(i)~these tools do not publish reproducible benchmarks with ground-truth
labels, and
(ii)~running closed-source commercial scanners (Snyk) on our benchmark
would introduce licensing and reproducibility concerns.
Instead, we position our results against the metrics these tools
report in their own publications and documentation.
A controlled multi-tool evaluation on a shared benchmark is immediate
future work.

\paragraph{Snyk's reported metrics.}
Snyk reports empirical recall of 90--100\% with 0\%~FPR on a corpus of
100~benign skills~\cite{snyk2026}.
\textsc{SkillFortify} achieves 0\%~FPR on a larger corpus
(270~benign skills).
Both figures are empirical observations on finite, constructed
corpora, and neither should be read as a guaranteed rate; ours is not
a consequence of Theorem~\ref{thm:analysis-soundness}, which bounds
what the analysis may miss rather than what it may over-report.
What distinguishes our result is not the rate but that a
reported-clean skill carries a stated guarantee about capability
confinement within the modelled domain, where a heuristic scanner
offers none.
That guarantee is also narrower than robustness to adversarial
evasion: an attacker whose payload falls outside the modelled domain
is not constrained by it.

\paragraph{Benchmark representativeness.}
\textsc{SkillFortifyBench} uses constructively generated skills with
single-type attack patterns (\Cref{sec:appendix-benchmark}).
Real-world malicious skills may combine multiple attack techniques,
use obfuscation layers not present in the benchmark, or rely on
social engineering (e.g., convincing natural-language descriptions).
The 92.59\% recall should be interpreted as a lower bound on
content-analyzable attacks; registry-dependent attacks (A11, A12)
require external data sources as discussed in
Section~\ref{sec:limitations}.

% -----------------------------------------------------------------------------
\subsection{Connection to AgentAssert}
\label{sec:connection-abc}

\textsc{SkillFortify} and AgentAssert~\cite{agentassert2026} occupy
complementary positions in the agent reliability stack.
AgentAssert addresses the \emph{behavioral specification problem}:
given an agent with defined behavioral contracts, enforce those
contracts at runtime through drift detection and correction.
\textsc{SkillFortify} addresses the \emph{supply chain integrity
problem}: before an agent executes, verify that the skills it
depends on are safe to install and invoke.

The two systems compose naturally:
\begin{enumerate}
    \item AgentAssert's behavioral contracts define \emph{what} an
    agent should do.
    \item \textsc{SkillFortify}'s capability analysis verifies that each
    skill \emph{can} do what it claims and \emph{nothing more}.
    \item AgentAssert's composition conditions (C1--C4) extend to
    skill interaction analysis: if skills $s_1$ and $s_2$ are both
    verified as capability-confined, their composition preserves
    confinement under the capability lattice's join operation.
\end{enumerate}

This layered approach mirrors the defense-in-depth principle in
traditional security: specification-level guarantees (AgentAssert)
combined with supply-chain-level guarantees (\textsc{SkillFortify})
provide stronger assurance than either alone.

% -----------------------------------------------------------------------------
\subsection{Future Work}
\label{sec:future}

We identify four directions for future work:

\paragraph{C7: Skill Registry Verification Protocol.}
A cryptographic signing and attestation protocol for agent skills,
analogous to Sigstore~\cite{sigstore2022} for software artifacts.
The protocol would enable:
(i)~keyless signing via OpenID Connect identity providers,
(ii)~transparency logs for skill publication events, and
(iii)~verification of build provenance through reproducible build
attestations.
Combined with \textsc{SkillFortify}'s trust algebra, registry
verification would provide end-to-end provenance guarantees from
skill author to agent runtime.

\paragraph{C8: Composition Security Analysis.}
When skills $s_i$ and $s_j$ are installed together, emergent
security properties may differ from the union of individual
properties.
Formalizing skill interaction effects---including shared state
access, implicit communication channels, and capability
amplification through composition---requires extending the current
per-skill analysis to a multi-skill abstract domain.
AgentAssert's composition conditions (C1--C4) provide a starting
framework.

\paragraph{Typosquatting Module.}
A dedicated typosquatting detection module using Levenshtein
distance, keyboard-proximity analysis, and phonetic similarity
(Soundex/Metaphone) against a known-good skill registry.
This would address the A11 detection gap identified in
Section~\ref{sec:e1}.

\paragraph{IDE and Runtime Integration.}
Real-time \textsc{SkillFortify} analysis within development environments
(VS~Code, Cursor) would provide immediate feedback as developers
add skills.
Runtime capability enforcement, where the capability lattice gates
actual skill invocations at execution time, would extend static
guarantees to the dynamic case.

% === END sections/10-discussion.tex ===

% === BEGIN sections/11-conclusion.tex ===
% =============================================================================
% Section 11: Conclusion
% =============================================================================

\section{Conclusion}
\label{sec:conclusion}

The rapid proliferation of agent skill ecosystems---with
repositories exceeding 228{,}000 stars and marketplaces hosting tens
of thousands of third-party skills---has created an urgent supply
chain security crisis.
The ClawHavoc campaign (341--1{,}184 malicious skills),
CVE-2026-25253 (one-click remote code execution in the largest
skill platform), and the synthesis of 7{,}027 malicious tools by MalTool
collectively demonstrate that the current ``install and trust''
paradigm is untenable.
Existing defenses rely exclusively on heuristic pattern matching,
with the leading tool explicitly acknowledging that ``no findings
does not mean no risk.''

This paper introduces a formal analysis framework for agent
skill supply chain security, grounded in six contributions:
(1)~\textsc{DY-Skill}, a formal threat model for agent
skills extending the Dolev-Yao adversary to the skill installation
lifecycle;
(2)~a sound static analysis via abstract interpretation with Galois
connection guarantees;
(3)~a capability confinement lattice proving that verified skills
cannot exceed declared permissions;
(4)~an Agent Dependency Graph with SAT-based resolution and
deterministic lockfile generation;
(5)~a trust score algebra with formally proven monotonicity
properties;
and (6)~\textsc{SkillFortifyBench}, a benchmark of 540 labeled skills
across three formats and 13 attack types.

Our implementation, \textsc{SkillFortify}, achieves an F1 score of
96.15\% with a 0.00\% false positive
rate (95\% CI: [0.0\%, 1.40\%]) on this benchmark, resolves
dependencies for 1{,}000 skills in a median 27\,ms, and scans
540~skills end-to-end in a median 0.29~seconds.
Six of the seven evaluation experiments meet their target thresholds.
The seventh does not: information flow analysis adds no detections
over pattern matching on this corpus, and we report that result rather
than the improvement we set out to demonstrate.

The key insight is that agent skill security demands the same rigor
as traditional software supply chain security---formal threat
models, mathematical guarantees, and defense-in-depth---adapted to
the unique characteristics of the agentic AI ecosystem: stochastic
behavior, natural-language interfaces, and emergent multi-skill
interactions.
As agent deployments scale from developer experimentation to
enterprise production, the transition from heuristic scanning to
formal analysis with sound guarantees is not merely desirable but
necessary.
\textsc{SkillFortify} provides the foundation for this transition.

\smallskip
\noindent\textbf{Availability.}
The \textsc{SkillFortify} tool, benchmark dataset, and all experimental
artifacts are publicly available at
\url{https://github.com/qualixar/skillfortify}.  The benchmark subtree
is released under the MIT license; the scanner is released under the
Elastic License 2.0.
The tool can be installed via \texttt{pip install skillfortify};
SAT-based dependency resolution requires the optional extra,
\texttt{pip install 'skillfortify[sat]'}.

Every figure in Section~\ref{sec:evaluation} is reproducible from the
repository.  Detection results come from
\texttt{python -m benchmarks.metrics}, which writes one record per
specimen alongside the aggregate, and E3--E7 from
\texttt{python -m benchmarks.experiments}, which records each
measurement together with the machine it was taken on.  Both write into
\texttt{benchmarks/results/}, and the committed outputs are those
reported here.  The corpus regenerates deterministically from seed~42
under \texttt{PYTHONHASHSEED=0} and is verified against a content
manifest, so a reader can confirm they are scoring the same specimens
we did.

% === END sections/11-conclusion.tex ===

% ---------------------------------------------------------------------------
% Bibliography
% ---------------------------------------------------------------------------
\bibliographystyle{ACM-Reference-Format}
\bibliography{references}

% ---------------------------------------------------------------------------
% About the Author
% ---------------------------------------------------------------------------
\section*{About the Author}

\textbf{Varun Pratap Bhardwaj} is a Senior Manager and Solution Architect
with over 15~years of experience in enterprise technology, spanning cloud
architecture, platform engineering, and large-scale system design across
Fortune~500 clients in retail, telecommunications, and financial services.
He holds dual qualifications in technology and law~(LL.B.), providing a
distinctive perspective on regulatory compliance for autonomous AI systems,
including the EU~AI~Act, NIST~AI~RMF, and emerging agent governance
frameworks.

His research focuses on formal methods for AI agent safety, building the
\emph{Agent Software Engineering} discipline through a suite of
peer-reviewed tools and frameworks. Prior published work includes
\textsc{AgentAssert}~\cite{agentassert2026}, the first design-by-contract
framework for autonomous AI agents with runtime behavioral enforcement, and
\textsc{SuperLocalMemory}, a privacy-preserving local-first memory system
for multi-agent architectures. \textsc{SkillFortify} is the third product
in this suite, extending formal guarantees from agent behavior specification
to supply chain security.

\smallskip
\noindent\textit{Contact:} \href{mailto:varun.pratap.bhardwaj@gmail.com}{varun.pratap.bhardwaj@gmail.com}\\
\textit{ORCID:} \href{https://orcid.org/0009-0002-8726-4289}{0009-0002-8726-4289}

\appendix
% === BEGIN sections/appendix-A-proofs.tex ===
% =============================================================================
% Appendix A: Full Proofs
% =============================================================================

\section{Full Proofs}
\label{sec:appendix-proofs}

This appendix provides complete, rigorous proofs for all five
theorems stated in the main text.
We restate each theorem for self-containedness before presenting
the proof.

% =============================================================================
% THEOREM 1: DY-Skill Maximality
% =============================================================================

\subsection{Theorem~\ref{thm:dy-skill-maximality}: DY-Skill Maximality}
\label{app:proof-thm1}%
\label{app:proof-dy-maximality}% alias for cross-reference from Section~3

\begin{definition}[DY-Skill Adversary, restated]
\label{def:dy-skill-app}
A \emph{DY-Skill adversary} $\mathcal{A}_{\textit{DY-Skill}}$ is a
probabilistic polynomial-time (PPT) machine that controls the
communication channel between the agent runtime~$\mathcal{R}$ and
the skill execution environment~$\mathcal{E}$.
The adversary has the following capabilities:
\begin{enumerate}[label=(\roman*)]
    \item \textbf{Intercept}: $\mathcal{A}$ can observe any
    message~$m$ sent between $\mathcal{R}$ and $\mathcal{E}$.
    \item \textbf{Inject}: $\mathcal{A}$ can inject arbitrary
    messages~$m'$ into the channel.
    \item \textbf{Modify}: $\mathcal{A}$ can replace any
    message~$m$ with $m' \neq m$.
    \item \textbf{Drop}: $\mathcal{A}$ can suppress any message.
    \item \textbf{Forge skills}: $\mathcal{A}$ can create skill
    definitions~$s'$ with arbitrary content, including:
    \begin{itemize}
        \item Skill manifest with arbitrary capability declarations
        $C_{\text{declared}}$
        \item Executable code implementing capabilities
        $C_{\text{actual}} \supseteq C_{\text{declared}}$
        \item Metadata mimicking legitimate skills (name squatting,
        version manipulation)
    \end{itemize}
    \item \textbf{Compromise registries}: $\mathcal{A}$ can modify
    skill entries in the registry~$\mathcal{P}$, subject to the
    constraint that $\mathcal{A}$ cannot forge valid cryptographic
    signatures under keys it does not possess.
\end{enumerate}
\end{definition}

\begin{definition}[Classical Dolev-Yao Adversary]
A classical Dolev-Yao adversary $\mathcal{A}_{\textit{DY}}$
operates over a network protocol~$\Pi$ with
capabilities~(i)--(iv) from the definition above, applied to
protocol messages.
$\mathcal{A}_{\textit{DY}}$ additionally possesses the standard
Dolev-Yao deduction rules for message construction and
decomposition.
\end{definition}

\begin{definition}[Simulation]
\label{def:simulation-app}
Adversary~$\mathcal{A}_1$ \emph{simulates}
adversary~$\mathcal{A}_2$
(written $\mathcal{A}_1 \succeq \mathcal{A}_2$) if for every
attack strategy~$\sigma_2$ of~$\mathcal{A}_2$, there exists a
strategy~$\sigma_1$ of~$\mathcal{A}_1$ such that the observable
effects on the system are computationally indistinguishable:
\[
    \forall \sigma_2 \in \textit{Strat}(\mathcal{A}_2),\;
    \exists \sigma_1 \in \textit{Strat}(\mathcal{A}_1)\colon\;
    \textit{View}_{\mathcal{R}}(\sigma_1)
        \approx_c
    \textit{View}_{\mathcal{R}}(\sigma_2)
\]
where $\textit{View}_{\mathcal{R}}(\sigma)$ denotes the
distribution of observations by the runtime~$\mathcal{R}$ under
strategy~$\sigma$, and $\approx_c$ denotes computational
indistinguishability.
\end{definition}

\begin{theorem}[DY-Skill Maximality, restated]
$\mathcal{A}_{\textit{DY-Skill}}$ is maximal among PPT skill
adversaries: for any PPT adversary~$\mathcal{A}$ operating on agent
skill channels,
$\mathcal{A}_{\textit{DY-Skill}} \succeq \mathcal{A}$.
Moreover,
$\mathcal{A}_{\textit{DY-Skill}} \succeq \mathcal{A}_{\textit{DY}}$
when the DY-Skill channel is instantiated as a standard network
protocol.
\end{theorem}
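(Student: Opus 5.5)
The plan is a simulation argument in the style of Cervesato's maximality proof. The aim is to show that every interaction a PPT adversary $\mathcal{A}$ can have with the runtime $\mathcal{R}$ factors through the six primitive operations (i)--(vi) of Definition~\ref{def:dy-skill-app}.

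\textbf{Step 1: the interface.} I will first fix what $\mathcal{A}$ can touch. By the definition of an adversary ``operating on agent skill channels,'' $\mathcal{A}$ affects $\mathcal{R}$ only by:
\begin{enumerate}[label=(\alph*)]
\item reading channel messages;
\item writing to the channel (adding, replacing, or withholding a message);
\item writing registry entries in $\mathcal{P}$.
\end{enumerate}
Under the perfect sandbox assumption, a skill executing inside $\mathcal{E}$ can reach nothing beyond its capabilities, so there is no side channel into $\mathcal{R}$. This gives a finite interface alphabet, and it is the step that lets the whole argument go through.

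\textbf{Step 2: building the simulator.} Given a strategy $\sigma_2$ of $\mathcal{A}$, I construct $\sigma_1$ for $\mathcal{A}_{\textit{DY-Skill}}$. The simulator runs $\mathcal{A}$ internally as a subroutine, which is legitimate because both machines are PPT. It then translates each interface action of $\mathcal{A}$:
\begin{itemize}
\item a read becomes intercept;
\item an addition becomes inject;
\item a replacement becomes modify;
\item withholding becomes drop;
\item any fresh skill content $\mathcal{A}$ computes (arbitrary manifest, code, or metadata) becomes forge;
\item a registry write becomes compromise.
\end{itemize}
The internal computation of $\mathcal{A}$ is invisible to $\mathcal{R}$ and needs no translation.

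\textbf{Step 3: equal views by induction.} I prove by induction on the step count $t$ that the joint distribution of (state of $\mathcal{R}$, knowledge of the simulator) after $t$ steps equals that of the real run. In fact the two views will be identical, which is stronger than $\approx_c$.
\begin{itemize}
\item \emph{Base case.} Both runs start from the same public registry.
\item \emph{Inductive step.} The translation is one-to-one on interface actions. Invariant (a) (monotonicity) together with intercept closure ensures $K(t)\supseteq K'(t)$, so every message $\mathcal{A}$ can construct is also constructible by the simulator.
\end{itemize}

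\textbf{Main obstacle: registry writes.} The delicate case is a registry write by $\mathcal{A}$ carrying a valid signature. Operation (vi) forbids forging signatures under keys the simulator does not possess. I need to argue that $\mathcal{A}$ cannot produce such a signature either, except with negligible probability, by reducing to EUF-CMA security of the signature scheme. This is exactly where equality of views weakens to computational indistinguishability. The alternative case is that $\mathcal{A}$ obtained the signature by interception, in which case the simulator also holds it by $K\supseteq K'$. I will write this as a hybrid argument: the real run and the simulated run differ only on a forgery event, and that event has negligible probability.

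\textbf{Second claim.} For $\mathcal{A}_{\textit{DY-Skill}}\succeq\mathcal{A}_{\textit{DY}}$, I will instantiate skill messages as protocol terms. Capabilities (i)--(iv) then coincide with the DY channel rules. The DY deduction closure (pairing and projection, encryption and decryption with known keys) is realized by forge, since forge permits arbitrary polynomial-time computable content; this is the forge-closure invariant (c). Operations (v) and (vi) are simply unused.
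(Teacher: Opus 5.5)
Your proposal is correct and follows the paper's proof essentially step for step: every action of $\mathcal{A}$ is mapped onto one of operations (i)--(vi), $\mathcal{A}$'s internal PPT computation is reproduced directly by the simulator, and the classical Dolev--Yao claim is an identity embedding.

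You go beyond the paper in the signed-registry-write case. The paper says only that the simulator acts ``subject to the identical cryptographic constraint'' and claims exact equality of views. Your EUF-CMA hybrid makes that constraint on $\mathcal{A}$ explicit and yields $\approx_c$, which is all the simulation definition requires. One small slip: in the second claim you realize deduction via forge and then call (v) unused. Treating deduction as internal PPT computation, as your Step~2 and the paper already do, removes that contradiction.
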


\begin{proof}
We prove the two claims separately.

\medskip
\noindent\textbf{Claim~1: Maximality over skill adversaries.}

Let $\mathcal{A}$ be any PPT adversary operating on agent skill
channels.
We construct a simulation strategy
$\sigma_{\textit{DY-Skill}}$ for
$\mathcal{A}_{\textit{DY-Skill}}$ that reproduces any
strategy~$\sigma_{\mathcal{A}}$ of~$\mathcal{A}$.

Any action by~$\mathcal{A}$ on the skill channel falls into exactly
one of the following categories:

\begin{enumerate}[label=(\alph*)]
    \item \emph{Channel manipulation} (intercept, inject, modify,
    drop): These are capabilities~(i)--(iv) of
    $\mathcal{A}_{\textit{DY-Skill}}$.  The simulation is the
    identity: $\mathcal{A}_{\textit{DY-Skill}}$ performs the same
    channel operation.

    \item \emph{Skill forgery} (creating a malicious skill
    definition): This is capability~(v).
    Since~$\mathcal{A}$ is PPT, the forged skill~$s'$ can be
    computed in polynomial time.
    $\mathcal{A}_{\textit{DY-Skill}}$ computes the same~$s'$ using
    capability~(v) and injects it using capability~(ii).

    \item \emph{Registry manipulation} (modifying skill entries):
    This is capability~(vi).
    $\mathcal{A}_{\textit{DY-Skill}}$ applies the same registry
    modifications, subject to the identical cryptographic constraint.

    \item \emph{Computation on observed data} (e.g., analyzing
    intercepted messages to craft adaptive attacks):
    Since both~$\mathcal{A}$ and
    $\mathcal{A}_{\textit{DY-Skill}}$ are PPT, any
    polynomial-time computation that~$\mathcal{A}$ performs on
    observed data can also be performed by
    $\mathcal{A}_{\textit{DY-Skill}}$.
\end{enumerate}

Since every action category of~$\mathcal{A}$ maps to a capability
of $\mathcal{A}_{\textit{DY-Skill}}$, the simulation is complete.
For any strategy~$\sigma_{\mathcal{A}}$:
\[
    \textit{View}_{\mathcal{R}}(\sigma_{\textit{DY-Skill}})
    = \textit{View}_{\mathcal{R}}(\sigma_{\mathcal{A}})
\]
with equality (not merely computational indistinguishability), since
the simulation is exact.

\medskip
\noindent\textbf{Claim~2: Subsumption of classical DY.}

Let $\sigma_{\textit{DY}}$ be any strategy of
$\mathcal{A}_{\textit{DY}}$ over a protocol~$\Pi$.
We instantiate the DY-Skill channel as the protocol channel
of~$\Pi$.

Capabilities~(i)--(iv) of $\mathcal{A}_{\textit{DY}}$ map directly
to capabilities~(i)--(iv) of
$\mathcal{A}_{\textit{DY-Skill}}$.
The Dolev-Yao deduction rules (pairing, projection, encryption,
decryption with known keys) are polynomial-time operations, hence
executable by the PPT
$\mathcal{A}_{\textit{DY-Skill}}$.

Capabilities~(v) and~(vi) of
$\mathcal{A}_{\textit{DY-Skill}}$ are \emph{additional} capabilities
not present in $\mathcal{A}_{\textit{DY}}$.
Therefore,
$\textit{Strat}(\mathcal{A}_{\textit{DY}})
    \subseteq
 \textit{Strat}(\mathcal{A}_{\textit{DY-Skill}})$,
and the simulation is the identity embedding.
\end{proof}

% =============================================================================
% THEOREM 2: Analysis Soundness
% =============================================================================

\subsection{Theorem~\ref{thm:analysis-soundness}: Analysis Soundness}
\label{app:proof-thm2}%
\label{app:proof-analysis-soundness}% alias for cross-reference from Section~4

We prove soundness directly over the capability domain
$\mathbb{L}_{\mathit{cap}}^{|\mathcal{C}|}$ defined in
Section~\ref{sec:capability-lattice}, which is the domain on
which the main-text theorem is stated.

\begin{theorem}[Analysis Soundness, restated]
Let~$s$ be a skill with declared capability set
$\mathit{Cap}_D(s)$.  If the abstract analysis over the capability
domain $\mathbb{L}_{\mathit{cap}}^{|\mathcal{C}|}$ reports no
violations---i.e., $\mathit{Viol}(s) = \emptyset$---then for every
concrete execution trace $\tau$ of $s$ and every resource access
$(r, \ell)$ performed in $\tau$:
\[
    \ell \sqsubseteq \mathit{Cap}_D(s)(r).
\]
\end{theorem}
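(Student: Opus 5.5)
The proof has three steps. First I fix the concrete semantics and the Galois connection precisely. Then I show that the inferred set $\mathit{Cap}_I(s)$ bounds every concrete access by structural induction on the skill's extracted constructs. Finally I chain this bound with the violation check.

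\emph{Step 1: the Galois connection.} For a set of states $S$, define
\[
\alpha(S)(r)=\bigsqcup\{\ell \mid \text{some state in } S \text{ records an access } (r,\ell)\}.
\]
Define $\gamma(a)$ as the set of states whose recorded accesses $(r,\ell)$ all satisfy $\ell\sqsubseteq a(r)$. The equivalence in Equation~\eqref{eq:galois-connection} then holds directly. $\alpha(S)\sqsubseteq a$ holds iff the join of accessed levels at each $r$ is below $a(r)$. In the chain $\mathbb{L}_{\mathit{cap}}$, that holds iff every accessed level is below $a(r)$, which is exactly $S\subseteq\gamma(a)$. Here I rely on the product-lattice completeness noted in Section~\ref{sec:capability-lattice}. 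Because $\mathcal{C}$ and $\mathbb{L}_{\mathit{cap}}$ are finite, all the joins involved are finite.

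\emph{Step 2: structural induction.} I prove $\alpha(\llbracket s\rrbracket)\sqsubseteq \mathit{Cap}_I(s)$ by induction over the constructs, and hence obtain $\ell\sqsubseteq \mathit{Cap}_I(s)(r)$ for every access $(r,\ell)$ in a trace $\tau$.
\begin{itemize}
  \item \emph{Base case (atomic operation).} I invoke Lemma~\ref{lem:transfer-soundness} together with Table~\ref{tab:galois-verification}. Each transfer function assigns the maximal level any concrete operation in its category requires, so the abstraction of the concrete effect lies below $\tau^\sharp$.
  \item \emph{Sequencing and conditionals.} The abstract semantics joins the components. Concrete traces are concatenations or branch selections, and $\alpha$ preserves unions (it is a lower adjoint), so the induction hypothesis carries through.
  \item \emph{Loops.} I apply Proposition~\ref{prop:abstraction-soundness}: with the local commutation condition $\alpha\circ\llbracket\cdot\rrbracket\sqsubseteq\llbracket\cdot\rrbracket^\sharp\circ\alpha$ established for the body, $\alpha(\mathsf{lfp})\sqsubseteq\mathsf{lfp}^\sharp$. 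Since the transfer functions are extensive and presence-based, the abstract iteration stabilises after one step. The loop's abstract value therefore equals the body's, which the implementation computes.
\end{itemize}
Taking the join over all extracted operations then yields the bound for the whole skill.

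\emph{Step 3: the violation check.} If $\mathit{Viol}(s)=\emptyset$, Equation~\eqref{eq:violations} gives $\mathit{Cap}_I(s)(r)\sqsubseteq\mathit{Cap}_D(s)(r)$ for every $r$. Transitivity in the chain then gives $\ell\sqsubseteq\mathit{Cap}_D(s)(r)$, as in Equation~\eqref{eq:soundness}.

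The main obstacle is the base case, and more precisely the passage from syntactic detection to concrete execution. Lemma~\ref{lem:transfer-soundness} covers states ``matching the detected pattern.'' The theorem, however, quantifies over every access in every trace, including accesses produced by operations no pattern matched. I would first make the scope explicit. Following Section~\ref{sec:scope-guarantees}, I would take the concrete semantics to be generated by the operations the parser extracts, and add an explicit coverage hypothesis: every resource-touching operation in $\mathit{ops}(s)$ is matched by some transfer rule. Under that hypothesis the base case reduces to Table~\ref{tab:galois-verification}. Without it the statement fails for dynamically constructed commands, so I would state that restriction in the proof rather than try to avoid it.
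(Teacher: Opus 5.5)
Your proposal follows the paper's appendix proof essentially step for step: per-resource transfer-function soundness read off Table~\ref{tab:galois-verification} (the paper's Part~1), structural induction over sequencing, conditionals and loops (Part~2), then $\mathit{Viol}(s)=\emptyset$ and transitivity; it differs only in writing out $\alpha,\gamma$ explicitly and in closing the loop case by one-step stabilisation of presence-based transfer functions rather than by finite height. Your explicit coverage hypothesis is the assumption the paper's Part~1 makes tacitly (it treats only states ``matching the detected pattern'', and Section~\ref{sec:scope-guarantees} confines the theorem to extracted operations), but you should phrase it as every operation being matched by a rule whose level dominates that operation's own required level, not merely by ``some transfer rule'', since, e.g., a mutating request caught only by the URL-reference rule would be bounded by $\mathsf{READ}$ rather than $\mathsf{WRITE}$.
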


\begin{proof}
We prove the soundness condition
$\alpha \circ \llbracket s \rrbracket
    \sqsubseteq
 \llbracket s \rrbracket^\sharp \circ \alpha$
(Proposition~\ref{prop:abstraction-soundness}) by verifying that each
concrete transfer function from Section~\ref{sec:three-phase} is
soundly over-approximated by the corresponding abstract transfer
function in the capability domain.  The proof proceeds in two parts:
(1)~verifying soundness of each per-resource transfer function, and
(2)~lifting to composite constructs by structural induction.

\medskip
\noindent\textbf{Part 1: Transfer function soundness (per resource type).}

For each resource type $r \in \mathcal{C}$, we verify that the
abstract transfer function $\tau_r^\sharp$ over-approximates the
concrete capability footprint.  Let $\alpha_r(S) \in
\mathbb{L}_{\mathit{cap}}$ be the abstraction of a set of
concrete states $S$ projected onto resource~$r$: the least access
level $\ell$ such that every state in $S$ accesses $r$ at level
$\ell$ or below.  We verify $\alpha_r(S) \sqsubseteq
\tau_r^\sharp(\alpha_r(S))$ for each transfer function:

\begin{enumerate}
  \item \textbf{Network (URL reference).}
    \emph{Pattern:} URL string detected in skill content.
    \emph{Concrete:} The skill may issue HTTP GET requests, accessing
    \texttt{network} at level $\mathsf{READ}$.
    \emph{Abstract:} $\tau_{\texttt{network}}^\sharp :=
    \mathsf{READ}$.
    \emph{Soundness:} $\alpha_{\texttt{network}}(S) \in
    \{\mathsf{NONE}, \mathsf{READ}\} \sqsubseteq \mathsf{READ} =
    \tau_{\texttt{network}}^\sharp$.  \checkmark

  \item \textbf{Network (HTTP write method).}
    \emph{Pattern:} POST, PUT, PATCH, or DELETE detected.
    \emph{Concrete:} The skill may issue mutating HTTP requests,
    accessing \texttt{network} at level $\mathsf{WRITE}$.
    \emph{Abstract:} $\tau_{\texttt{network}}^\sharp :=
    \mathsf{WRITE}$.
    \emph{Soundness:} Any concrete access is at most
    $\mathsf{WRITE} \sqsubseteq \mathsf{WRITE}$.  \checkmark

  \item \textbf{Shell (command invocation).}
    \emph{Pattern:} Shell command invocation detected.
    \emph{Concrete:} The skill may execute arbitrary shell commands,
    accessing \texttt{shell} at level $\mathsf{WRITE}$.
    \emph{Abstract:} $\tau_{\texttt{shell}}^\sharp :=
    \mathsf{WRITE}$.
    \emph{Soundness:} Concrete shell access is at most
    $\mathsf{WRITE}$.  The abstract result matches.  \checkmark

  \item \textbf{Environment (variable reference).}
    \emph{Pattern:} Environment variable reference detected.
    \emph{Concrete:} The skill reads environment variables,
    accessing \texttt{environment} at level $\mathsf{READ}$.
    \emph{Abstract:} $\tau_{\texttt{environment}}^\sharp :=
    \mathsf{READ}$.
    \emph{Soundness:} Reading an environment variable is at most
    $\mathsf{READ} \sqsubseteq \mathsf{READ}$.  \checkmark

  \item \textbf{Filesystem (read operation).}
    \emph{Pattern:} File read keyword detected.
    \emph{Concrete:} The skill reads files, accessing
    \texttt{filesystem} at level $\mathsf{READ}$.
    \emph{Abstract:} $\tau_{\texttt{filesystem}}^\sharp :=
    \mathsf{READ}$.
    \emph{Soundness:} $\mathsf{READ} \sqsubseteq
    \mathsf{READ}$.  \checkmark

  \item \textbf{Filesystem (write operation).}
    \emph{Pattern:} File write keyword detected.
    \emph{Concrete:} The skill writes files, accessing
    \texttt{filesystem} at level $\mathsf{WRITE}$.
    \emph{Abstract:} $\tau_{\texttt{filesystem}}^\sharp :=
    \mathsf{WRITE}$.
    \emph{Soundness:} $\mathsf{WRITE} \sqsubseteq
    \mathsf{WRITE}$.  \checkmark

  \item \textbf{Skill invocation.}
    \emph{Pattern:} Sub-skill invocation detected.
    \emph{Concrete:} The skill invokes another skill, accessing
    \texttt{skill\_invoke} at level $\mathsf{READ}$ (query) or
    $\mathsf{WRITE}$ (mutation).
    \emph{Abstract:} $\tau_{\texttt{skill\_invoke}}^\sharp :=
    \mathsf{WRITE}$ (conservative over-approximation).
    \emph{Soundness:} Any concrete invocation is at most
    $\mathsf{WRITE} \sqsubseteq \mathsf{WRITE}$.  \checkmark

  \item \textbf{Remaining resources} (\texttt{clipboard},
    \texttt{browser}, \texttt{database}).
    Transfer functions follow the same pattern: keyword presence
    maps to $\mathsf{READ}$ or $\mathsf{WRITE}$ depending on the
    detected operation type.  Each concrete access is bounded
    by the abstract inference.  \checkmark
\end{enumerate}

In all cases, the abstract transfer function assigns a capability
level $\ell^\sharp$ such that $\alpha_r(S) \sqsubseteq \ell^\sharp$
for every concrete state set $S$ matching the detected pattern.
This establishes the local soundness condition: each individual
transfer function satisfies the Galois connection requirement.

\medskip
\noindent\textbf{Part 2: Structural induction over composite constructs.}

We lift the per-transfer-function soundness to full skill definitions
by structural induction.

\emph{Case: Sequential composition} ($s_1;\, s_2$).
By the induction hypothesis, both $\llbracket s_1 \rrbracket^\sharp$
and $\llbracket s_2 \rrbracket^\sharp$ are sound.  The composed
abstract semantics is
$\llbracket s_1;\,s_2 \rrbracket^\sharp =
\llbracket s_2 \rrbracket^\sharp \circ
\llbracket s_1 \rrbracket^\sharp$.
By monotonicity of both abstract functions and the standard
composition lemma for Galois connections~\cite{cousot1977}:
\[
    \alpha \circ \llbracket s_1;\,s_2 \rrbracket
    \sqsubseteq
    \llbracket s_2 \rrbracket^\sharp \circ
    \llbracket s_1 \rrbracket^\sharp \circ \alpha
    = \llbracket s_1;\,s_2 \rrbracket^\sharp \circ \alpha.
\]

\emph{Case: Conditional}
($\texttt{if}~b~\texttt{then}~s_1~\texttt{else}~s_2$).
Concretely, exactly one branch executes.  Abstractly, we take the
pointwise join:
$\llbracket \texttt{if}~\cdots \rrbracket^\sharp(a) =
\llbracket s_1 \rrbracket^\sharp(a) \sqcup
\llbracket s_2 \rrbracket^\sharp(a)$.
Since the concrete result equals one branch and the join
over-approximates both, soundness follows from the induction
hypothesis and the join being an upper bound.

\emph{Case: Loop}
($\texttt{while}~b~\texttt{do}~s_{\text{body}}$).
The abstract semantics computes the least fixpoint of
$\lambda a.\; a_0 \sqcup \llbracket s_{\text{body}} \rrbracket^\sharp(a)$.
By Tarski's fixpoint theorem and the fact that
$\mathbb{L}_{\mathit{cap}}^{|\mathcal{C}|}$ has finite height
$4^8 = 65{,}536$, the ascending chain stabilizes.  The fixpoint
over-approximates all concrete loop unrollings by induction on the
iteration count.

\medskip
\noindent\textbf{Conclusion.}
By structural induction,
$\alpha \circ \llbracket s \rrbracket
    \sqsubseteq
 \llbracket s \rrbracket^\sharp \circ \alpha$
for all skill definitions~$s$.
Combined with the violation check
($\mathit{Viol}(s) = \emptyset \Rightarrow \mathit{Cap}_I(s)
\sqsubseteq \mathit{Cap}_D(s)$), we obtain the theorem:
every concrete resource access $(r, \ell)$ satisfies
$\ell \sqsubseteq \mathit{Cap}_I(s)(r) \sqsubseteq
\mathit{Cap}_D(s)(r)$.
\end{proof}

% =============================================================================
% THEOREM 3: Capability Confinement
% =============================================================================

\subsection{Theorem~\ref{thm:static-confinement}: Static Capability Confinement}
\label{app:proof-thm3}%
\label{app:proof-confinement}% alias for cross-reference from Section~5

\begin{definition}[Operation Capability Mapping]
\label{def:skill-ops-app}
Each operation~$o$ in a skill's code has a \emph{capability
requirement} $\mathit{cap}(o) = (r, \ell) \in \mathcal{C}
\times \mathbb{L}_{\mathit{cap}}$, mapping it to a resource
type and minimum access level:
\begin{itemize}
    \item $\mathit{cap}(\texttt{read\_file})
        = (\texttt{filesystem}, \mathsf{READ})$
    \item $\mathit{cap}(\texttt{write\_file})
        = (\texttt{filesystem}, \mathsf{WRITE})$
    \item $\mathit{cap}(\texttt{read\_env})
        = (\texttt{environment}, \mathsf{READ})$
    \item $\mathit{cap}(\texttt{exec})
        = (\texttt{shell}, \mathsf{WRITE})$
    \item $\mathit{cap}(\texttt{http\_get})
        = (\texttt{network}, \mathsf{READ})$
    \item $\mathit{cap}(\texttt{http\_post})
        = (\texttt{network}, \mathsf{WRITE})$
\end{itemize}
The set of all operations syntactically present in skill $s$ is
$\mathit{ops}(s)$.  The \emph{inferred capability set} is defined
resource-wise:
\[
    \mathit{Cap}_I(s)(r) = \bigsqcup_{\substack{o \in
    \mathit{ops}(s) \\ \mathit{cap}(o).r = r}}
    \mathit{cap}(o).\ell
\]
with $\mathit{Cap}_I(s)(r) = \mathsf{NONE}$ if no operation
accesses resource~$r$.
\end{definition}

\begin{theorem}[Static Capability Confinement, restated]
Let $s$ be a skill with declared capability set
$\mathit{Cap}_D(s)$.  If $\mathit{Viol}(s) = \emptyset$
(i.e., $\mathit{Cap}_I(s) \sqsubseteq \mathit{Cap}_D(s)$),
then for every operation $o$ syntactically reachable in $s$:
\[
    \mathit{cap}(o).\ell \sqsubseteq
    \mathit{Cap}_D(s)(\mathit{cap}(o).r).
\]
\end{theorem}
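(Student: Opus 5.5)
The argument is short once Definition~\ref{def:skill-ops-app} is taken seriously. There $\mathit{Cap}_I(s)$ is defined resource-wise as a join over the syntactically present operations, so the statement reduces to two facts: an upper-bound property of the join, and the pointwise ordering on the product lattice $\mathbb{L}_{\mathit{cap}}^{|\mathcal{C}|}$. I would not redo the structural induction from the main-text sketch. The appendix definition already flattens every construct into the set $\mathit{ops}(s)$, so sequencing, branching and looping add nothing beyond membership in that set. This matches the presence-based realisation discussed in Section~\ref{sec:scope-guarantees}.

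\textbf{Step 1.} Fix an operation $o \in \mathit{ops}(s)$ and write $\mathit{cap}(o) = (r, \ell)$. The join defining $\mathit{Cap}_I(s)(r)$ ranges over all $o' \in \mathit{ops}(s)$ with $\mathit{cap}(o').r = r$. Since $o$ itself is one of these indices, the upper-bound property of $\sqcup$ gives $\ell \sqsubseteq \mathit{Cap}_I(s)(r)$. The join is $\max$ on the finite chain of Definition~\ref{def:access-level-lattice}, so the index set being nonempty and finite is all that is needed.

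\textbf{Step 2.} By the hypothesis $\mathit{Viol}(s) = \emptyset$, equation~\eqref{eq:violations} gives $\mathit{Cap}_I(s)(r') \sqsubseteq \mathit{Cap}_D(s)(r')$ for every $r' \in \mathcal{C}$. Specialising to $r' = r$ yields $\mathit{Cap}_I(s)(r) \sqsubseteq \mathit{Cap}_D(s)(r)$.

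\textbf{Step 3.} Transitivity of $\sqsubseteq$ on the chain then gives $\ell \sqsubseteq \mathit{Cap}_D(s)(r)$, which is the claim.

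The main obstacle is not technical but a matter of reconciling notation. Section~\ref{sec:sandboxing} treats $\mathit{Cap}_D(s)$ as a subset of $\mathcal{C} \times \mathbb{L}_{\mathit{cap}}$, whereas Step~2 uses it as a function $\mathcal{C} \to \mathbb{L}_{\mathit{cap}}$. I would fix the translation explicitly: $\mathit{Cap}_D(s)(r) = \bigsqcup\{\ell' \mid (r, \ell') \in \mathit{Cap}_D(s)\}$, with the empty join equal to $\mathsf{NONE}$. Under this reading, \textsc{Permits} agrees with the pointwise test, because the lattice is a chain and so some declared pair attains the maximum.

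I would also note that the result concerns only the syntactically present $\mathit{ops}(s)$. "Syntactically reachable" is therefore read as membership in $\mathit{ops}(s)$. Operations that are constructed dynamically fall outside this set, consistent with the scope stated in Section~\ref{sec:scope-guarantees}.
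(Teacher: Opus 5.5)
Your argument is correct, and it reaches the result by a different decomposition from the paper's. The paper proves the statement by structural induction on the skill's syntax, with cases for a single operation, sequencing, conditional, loop and function call. However, its invariant is stated against $\mathit{Cap}_I(s)$ of the \emph{whole} skill, not of the sub-construct. As a result, every inductive case reduces to the fact that the operations of a sub-construct lie in $\mathit{ops}(s)$, followed by a repeat of the base-case reasoning: $o$ indexes the join defining $\mathit{Cap}_I(s)(r)$, so $\ell \sqsubseteq \mathit{Cap}_I(s)(r) \sqsubseteq \mathit{Cap}_D(s)(r)$.

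Your Steps~1--3 are exactly that base case, applied directly to an arbitrary $o \in \mathit{ops}(s)$. This is legitimate because the appendix defines $\mathit{Cap}_I$ as a flat join over $\mathit{ops}(s)$, so the induction hypothesis is never used in a non-trivial way. Your route buys transparency: it shows the theorem is essentially definitional once the violation check is granted. The paper's induction would only earn its keep if $\mathit{Cap}_I$ were instead computed compositionally as $\llbracket s \rrbracket^\sharp$, with joins at branches and a fixpoint at loops as in the proof of Theorem~\ref{thm:analysis-soundness}. In that setting you would need a per-construct argument that each abstract value dominates the operations inside it.

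The one real assumption the paper makes explicit is in its function-call case: bodies of user-defined functions invoked by $s$ belong to $\mathit{ops}(s)$. Your reading of ``syntactically reachable'' as membership in $\mathit{ops}(s)$ needs the same assumption, so state it as part of the definition of $\mathit{ops}$.

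A small correction to your notational aside. \textsc{Permits} agrees with the pointwise test only for $\ell \neq \mathsf{NONE}$: if no declared pair mentions $r$, \textsc{Permits} rejects $(r,\mathsf{NONE})$ while the pointwise test accepts it. This is harmless here, since no operation requires $\mathsf{NONE}$ and your Step~2 uses the pointwise form of $\mathit{Viol}$ anyway.
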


\begin{proof}
By structural induction on the syntax of skill code.  The key
invariant is: for every code construct~$s'$ in the skill, every
operation $o \in \mathit{ops}(s')$ satisfies
$\mathit{cap}(o).\ell \sqsubseteq \mathit{Cap}_I(s)(
\mathit{cap}(o).r)$.  Combined with the hypothesis
$\mathit{Cap}_I(s) \sqsubseteq \mathit{Cap}_D(s)$, this yields
the theorem by transitivity.

\medskip
\noindent\textbf{Base case: Single operation~$o$.}

The skill consists of a single operation~$o$ with
$\mathit{cap}(o) = (r, \ell)$.  By definition of
$\mathit{Cap}_I$:
\[
    \ell = \mathit{cap}(o).\ell
    \sqsubseteq \mathit{Cap}_I(s)(r)
\]
since $o \in \mathit{ops}(s)$ and $\mathit{Cap}_I(s)(r)$ is
defined as the join over all such operations.  By hypothesis
$\mathit{Cap}_I(s)(r) \sqsubseteq \mathit{Cap}_D(s)(r)$, so
$\ell \sqsubseteq \mathit{Cap}_D(s)(r)$ by transitivity.

\medskip
\noindent\textbf{Inductive case: Sequential composition
($s_1;\, s_2$).}

We have $\mathit{ops}(s_1;\,s_2) = \mathit{ops}(s_1) \cup
\mathit{ops}(s_2)$.  By the induction hypothesis, every operation
in $\mathit{ops}(s_1)$ and $\mathit{ops}(s_2)$ individually
satisfies the bound.  Since
$\mathit{Cap}_I(s_1;\,s_2)(r) = \mathit{Cap}_I(s_1)(r) \sqcup
\mathit{Cap}_I(s_2)(r)$ and the join is the least upper bound,
every individual operation capability is below the join:
\[
    \forall\, o \in \mathit{ops}(s_1;\,s_2) \colon
    \mathit{cap}(o).\ell \sqsubseteq
    \mathit{Cap}_I(s_1;\,s_2)(\mathit{cap}(o).r)
    \sqsubseteq \mathit{Cap}_D(s)(\mathit{cap}(o).r).
\]

\medskip
\noindent\textbf{Inductive case: Conditional
($\texttt{if}~b~\texttt{then}~s_1~\texttt{else}~s_2$).}

The analysis conservatively includes both branches:
$\mathit{ops}(s) = \mathit{ops}(s_1) \cup \mathit{ops}(s_2)$.
At runtime, only one branch executes.  If branch~$s_i$ executes,
each operation~$o \in \mathit{ops}(s_i)$ satisfies
$\mathit{cap}(o).\ell \sqsubseteq \mathit{Cap}_I(s_i)(
\mathit{cap}(o).r)$ by the induction hypothesis.
Since $\mathit{ops}(s_i) \subseteq \mathit{ops}(s)$, we have
$\mathit{Cap}_I(s_i)(r) \sqsubseteq \mathit{Cap}_I(s)(r)$ for
all $r$, and hence
$\mathit{cap}(o).\ell \sqsubseteq \mathit{Cap}_D(s)(
\mathit{cap}(o).r)$ by transitivity through
$\mathit{Cap}_I(s) \sqsubseteq \mathit{Cap}_D(s)$.

\medskip
\noindent\textbf{Inductive case: Loop
($\texttt{while}~b~\texttt{do}~s_{\text{body}}$).}

The set of operations is
$\mathit{ops}(\texttt{while}~b~\texttt{do}~s_{\text{body}})
= \mathit{ops}(s_{\text{body}})$,
since the loop body is the only code that executes (repetition
does not introduce new syntactic operations).
By the induction hypothesis on $s_{\text{body}}$, every
$o \in \mathit{ops}(s_{\text{body}})$ satisfies the bound.
Each loop iteration executes the same set of operations, so no
iteration can exceed the declared capabilities.

\medskip
\noindent\textbf{Inductive case: Function call ($f(x_1, \ldots, x_k)$).}

If $f$ is a built-in operation, it falls under the base case.
If $f$ is a user-defined function, its body is a sub-skill with
$\mathit{ops}(f) \subseteq \mathit{ops}(s)$ (since $f$'s body
is part of the analyzed skill).  By the induction hypothesis on
$f$'s body, all operations satisfy the bound.

\medskip
\noindent\textbf{Conclusion.}
By structural induction, every operation~$o$ syntactically
reachable during any execution of~$s$ satisfies
$\mathit{cap}(o).\ell \sqsubseteq \mathit{Cap}_D(s)(
\mathit{cap}(o).r)$.  The proof relies only on the definition
of $\mathit{Cap}_I$ (the pointwise join over operation capabilities)
and the hypothesis $\mathit{Cap}_I(s) \sqsubseteq
\mathit{Cap}_D(s)$---it does not assume any runtime sandbox
mechanism.
\end{proof}

\medskip
\noindent\textbf{Runtime Confinement (Design Theorem).}
Theorem~\ref{thm:runtime-confinement} follows from the
object-capability confinement result of \citet{maffeis2010}:
in a capability-safe execution environment where (a)~there is no
ambient authority, (b)~capabilities are unforgeable, and
(c)~the sandbox mediates every action via
Equation~\eqref{eq:sandbox-check}, the only capabilities available
to~$s$ are those in $\mathit{Cap}_D(s)$.  Since $\mathit{Cap}_D(s)$
is the initial endowment and the four Dennis--Van Horn
mechanisms~\cite{dennis1966} can only attenuate (never amplify)
authority (Proposition~\ref{prop:transitive-attenuation}), the
runtime check in Equation~\eqref{eq:sandbox-check} suffices to
guarantee Equation~\eqref{eq:runtime-confinement}.  The formal
details follow the proof structure of
\citet[Theorem~4.1]{maffeis2010}, instantiated for the agent skill
capability model of Section~\ref{sec:capability-model}.

% =============================================================================
% THEOREM 4: Resolution Soundness
% =============================================================================

\subsection{Theorem~\ref{thm:resolution-soundness}: Resolution Soundness}
\label{app:proof-thm4}

\begin{definition}[SAT Encoding]
\label{def:sat-encoding-app}
The \emph{SAT encoding} of an ADG~$\mathcal{G}$ introduces:
\begin{itemize}
    \item Boolean variable~$x_{s,v}$ for each skill~$s \in V$ and
    version~$v \in \textit{versions}(s)$, where $x_{s,v} = \top$
    means version~$v$ of skill~$s$ is selected.
    \item \emph{At-most-one} clause for each skill:
    $\bigwedge_{v \neq v'} (\neg x_{s,v} \lor \neg x_{s,v'})$.
    \item \emph{At-least-one} clause for each skill:
    $\bigvee_{v \in \textit{versions}(s)} x_{s,v}$.
    \item \emph{Dependency constraint} for each edge
    $(u,v) \in E$ with constraint~$\mu(u,v)$: if
    $x_{u,v_u} = \top$, then exactly one $x_{v,v_v}$ with
    $v_v \in \textit{sat}(\mu(u,v))$ must be true.
\end{itemize}
where $\textit{sat}(c)$ denotes the set of versions satisfying
constraint~$c$.
\end{definition}

\begin{definition}[Valid Resolution]
A \emph{valid resolution} of an ADG~$\mathcal{G}$ is a function
$r\colon V \to \textit{Version}$ such that:
\begin{enumerate}[label=(\roman*)]
    \item $\forall s \in V\colon r(s) \in \textit{versions}(s)$
    \hfill (version exists)
    \item $\forall (u,v) \in E\colon
    r(v) \in \textit{sat}(\mu(u,v))$
    \hfill (constraints satisfied)
    \item The graph $(V, E)$ is acyclic
    \hfill (no circular dependencies)
\end{enumerate}
\end{definition}

\begin{theorem}[Resolution Soundness, restated]
The SAT encoding is \emph{equisatisfiable} with the dependency
resolution problem: the SAT formula is satisfiable if and only if
a valid resolution exists.
Moreover, any satisfying assignment~$\sigma$ of the SAT formula
yields a valid resolution~$r_\sigma$, and any valid
resolution~$r$ yields a satisfying assignment~$\sigma_r$.
\end{theorem}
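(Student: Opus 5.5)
The plan is to prove the two directions of the equivalence by direct translation between assignments and resolution functions, in the same style as the main-text proof of Theorem~\ref{thm:resolution-soundness}. For the forward direction, fix $\sigma$ satisfying the encoding of Definition~\ref{def:sat-encoding-app}. Define $r_\sigma(s)$ as the unique $v$ with $\sigma(x_{s,v}) = \top$.

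This is well defined because the at-least-one clause for $s$ guarantees that some such $v$ exists. The pairwise at-most-one clauses guarantee that no two distinct versions are both true. Condition~(i) then holds by construction, since the variable $x_{s,v}$ only exists for $v \in \textit{versions}(s)$.

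For condition~(ii), take an edge $(u,v) \in E$. Because $r_\sigma(u)$ is selected, the dependency constraint forces some $x_{v,w}$ with $w \in \textit{sat}(\mu(u,v))$ to be true. Uniqueness of the selected version of $v$ then gives $r_\sigma(v) = w \in \textit{sat}(\mu(u,v))$.

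For the backward direction, given a valid resolution $r$, set $\sigma_r(x_{s,v}) = [\,r(s) = v\,]$. The clauses are checked one family at a time.
\begin{itemize}
\item The at-least-one and at-most-one clauses hold because $r$ is a total function into $\textit{versions}(s)$, by condition~(i).
\item Each dependency constraint holds. Its antecedent is true only for $v_u = r(u)$. The consequent is witnessed by $x_{v,r(v)}$, which lies in $\textit{sat}(\mu(u,v))$ by condition~(ii). It is the only true variable for skill $v$, which gives the ``exactly one'' reading.
\end{itemize}
Together the two directions give equisatisfiability, and they show that the maps $\sigma \mapsto r_\sigma$ and $r \mapsto \sigma_r$ are mutually inverse.

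The main obstacle is condition~(iii), acyclicity. It is a property of the graph $(V,E)$ alone, and no clause of the encoding mentions it. As literally stated, the equivalence therefore fails on a cyclic graph: the formula may be satisfiable while no valid resolution exists.

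I would handle this by treating acyclicity as a side condition on the input. The implementation section states that Kahn's algorithm rejects cyclic graphs before encoding, so I would take $(V,E)$ to be a DAG as a standing hypothesis. Under that hypothesis~(iii) holds vacuously in both directions, and the theorem reads: for an acyclic ADG, satisfiability holds iff a valid resolution exists.

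A second, smaller point also needs stating. Under the at-least-one clause every skill in $V$ must be installed. This makes the encoding stricter than (C1)--(C5), which is consistent with $r$ being a total function on $V$. I would note this, and observe that adding the capability clauses (C5) leaves the argument unchanged, since they are unit clauses that prune versions on both sides identically.
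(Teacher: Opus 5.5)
Your proposal is correct and follows the paper's proof essentially step for step. It uses the same definitions of $r_\sigma$ and $\sigma_r$, gets well-definedness from the at-least-one and at-most-one clauses, gets condition~(ii) from the dependency clause plus uniqueness, and checks that the two maps are mutual inverses; on condition~(iii) the paper appeals to the Kahn's-algorithm preprocessing that refuses to encode cyclic graphs, which amounts to your standing acyclicity hypothesis, and your explicit note that the equivalence fails on cyclic inputs without it is the more accurate framing.
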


\begin{proof}
We prove both directions of the bijection.

\medskip
\noindent\textbf{Direction~1: Satisfying assignment
$\Rightarrow$ valid resolution.}

Let $\sigma$ be a satisfying assignment of the SAT formula.
Define $r_\sigma(s) = v$ where $\sigma(x_{s,v}) = \top$.

\emph{Well-definedness:} By the at-least-one clause, at least one
$x_{s,v}$ is true for each~$s$.
By the at-most-one clause, at most one is true.
Hence exactly one version is selected: $r_\sigma$ is a well-defined
function.

\emph{Condition~(i):} Since $x_{s,v}$ only exists for
$v \in \textit{versions}(s)$, we have
$r_\sigma(s) \in \textit{versions}(s)$.

\emph{Condition~(ii):} Let $(u,v) \in E$ with $r_\sigma(u) = v_u$.
The dependency constraint clause states:
$x_{u,v_u} \Rightarrow
    \bigvee_{v_v \in \textit{sat}(\mu(u,v))} x_{v,v_v}$.
Since $\sigma(x_{u,v_u}) = \top$ and $\sigma$ satisfies this
clause, there exists
$v_v \in \textit{sat}(\mu(u,v))$ with $\sigma(x_{v,v_v}) = \top$.
By the at-most-one clause,
$r_\sigma(v) = v_v \in \textit{sat}(\mu(u,v))$.

\emph{Condition~(iii):} Cycle detection is performed as a
preprocessing step (Kahn's algorithm) before SAT encoding.
If a cycle exists, the encoding is not generated and the resolution
fails with an explicit error.
Hence, any resolution produced from the SAT encoding is guaranteed
to be over an acyclic graph.

\medskip
\noindent\textbf{Direction~2: Valid resolution
$\Rightarrow$ satisfying assignment.}

Let $r$ be a valid resolution.
Define $\sigma_r(x_{s,v}) = \top$ iff $r(s) = v$, and
$\sigma_r(x_{s,v}) = \bot$ otherwise.

\emph{At-most-one:} For each skill~$s$, exactly one version~$r(s)$
is selected, so exactly one $x_{s,v}$ is true.

\emph{At-least-one:} Since $r(s)$ is defined for all $s \in V$, at
least one $x_{s,v}$ is true.

\emph{Dependency constraints:} For each $(u,v) \in E$ with
$r(u) = v_u$:
$\sigma_r(x_{u,v_u}) = \top$, and by condition~(ii) of valid
resolution, $r(v) \in \textit{sat}(\mu(u,v))$.
Let $v_v = r(v)$; then $\sigma_r(x_{v,v_v}) = \top$ and
$v_v \in \textit{sat}(\mu(u,v))$.
The dependency clause is satisfied.

\medskip
\noindent\textbf{Bijection.}
The mappings $\sigma \mapsto r_\sigma$ and $r \mapsto \sigma_r$ are
inverses:
\begin{itemize}
    \item $r_{\sigma_r}(s) = v$ where
    $\sigma_r(x_{s,v}) = \top$, i.e., where $r(s) = v$.
    So $r_{\sigma_r} = r$.
    \item $\sigma_{r_\sigma}(x_{s,v}) = \top$ iff
    $r_\sigma(s) = v$ iff $\sigma(x_{s,v}) = \top$.
    So $\sigma_{r_\sigma} = \sigma$.
\end{itemize}

Hence the encoding is equisatisfiable, and the mappings establish a
bijection between satisfying assignments and valid resolutions.
\end{proof}

% =============================================================================
% THEOREM 5: Trust Monotonicity
% =============================================================================

\subsection{Theorem~\ref{thm:trust-monotonicity}: Trust
Monotonicity}
\label{app:proof-thm5}

\begin{definition}[Trust Score, restated]
The \emph{trust score} of a skill~$s$ at time~$t$ is:
\[
    \tau(s,t) = \sum_{i=1}^{k} w_i \cdot \sigma_i(s,t)
\]
where:
\begin{itemize}
    \item $k$ is the number of trust signals.
    \item $\mathbf{w} = (w_1, \ldots, w_k)
        \in \Delta^{k-1}$
    is a weight vector in the probability simplex
    ($w_i \geq 0$, $\sum_i w_i = 1$).
    \item $\sigma_i(s,t) \in [0,1]$ is the value of trust
    signal~$i$ for skill~$s$ at time~$t$.
\end{itemize}
\end{definition}

\begin{definition}[Signal Update]
A \emph{signal update} at time $t{+}1$ is a vector
$\delta = (\delta_1, \ldots, \delta_k)$ where:
\[
    \sigma_i(s, t{+}1)
        = \textit{clamp}(\sigma_i(s,t) + \delta_i,\; 0,\; 1)
\]
An update is \emph{non-negative} if $\delta_i \geq 0$ for all
$i \in \{1, \ldots, k\}$.
\end{definition}

\begin{theorem}[Trust Monotonicity, restated]
If all signal updates between time~$t$ and~$t{+}1$ are
non-negative (i.e., $\delta_i \geq 0$ for all~$i$) and no temporal
decay is applied ($d_i(t) = 1$ for all $i,t$), then:
\[
    \tau(s, t{+}1) \geq \tau(s, t)
\]
\end{theorem}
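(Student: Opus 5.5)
The plan is to compare $\tau(s,t{+}1)$ and $\tau(s,t)$ term by term. The only new ingredient beyond the main-text proof of Theorem~\ref{thm:trust-monotonicity} is the clamping in the signal update. With no decay ($d_i(t)=1$), the score at each time is exactly the weighted sum $\sum_i w_i\,\sigma_i(s,\cdot)$. So we write
\[
  \tau(s,t{+}1)-\tau(s,t)=\sum_{i=1}^{k} w_i\bigl(\sigma_i(s,t{+}1)-\sigma_i(s,t)\bigr),
\]
and it suffices to show each summand is non-negative.

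Since $w_i\ge 0$ by membership of $\mathbf{w}$ in $\Delta^{k-1}$, the key step is a per-signal lemma: if $\sigma_i(s,t)\in[0,1]$ and $\delta_i\ge 0$, then
\[
  \textit{clamp}(\sigma_i(s,t)+\delta_i,0,1)\ge \sigma_i(s,t).
\]
We prove it by cases on $x=\sigma_i(s,t)+\delta_i$, noting first that $x\ge\sigma_i(s,t)\ge 0$, so the lower clamp never fires.
\begin{itemize}
  \item If $x\le 1$, the clamped value is $x\ge \sigma_i(s,t)$.
  \item If $x>1$, the clamped value is $1\ge\sigma_i(s,t)$, because signals lie in $[0,1]$.
\end{itemize}
Equivalently, clamp is monotone non-decreasing and fixes $[0,1]$, so
\[
  \textit{clamp}(\sigma_i+\delta_i)\ge\textit{clamp}(\sigma_i)=\sigma_i .
\]
Multiplying by $w_i\ge 0$ and summing gives the claim.

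The main obstacle is bookkeeping rather than mathematics. We must make explicit that $\sigma_i(s,t)\in[0,1]$ holds at time $t$; this follows by induction on $t$, since clamp always returns a value in $[0,1]$. We must also note that the hypothesis $d_i(t)=1$ is what licenses identifying $\tau$ with the undecayed weighted sum. Without that hypothesis the exponential factor of Definition~\ref{def:trust-decay} could decrease the score, by Proposition~\ref{prop:decay-properties}(1).

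As a remark, we can add the strict version: if some $i$ with $w_i>0$ has $\delta_i>0$ and $\sigma_i(s,t)<1$, the inequality is strict. This holds because in that case the clamped value strictly exceeds $\sigma_i(s,t)$. The remark mirrors the strictness clause of Theorem~\ref{thm:trust-monotonicity}.
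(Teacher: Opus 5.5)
Your proposal is correct and follows the paper's proof essentially step for step: you write the difference as $\sum_i w_i(\sigma_i(s,t{+}1)-\sigma_i(s,t))$, show each signal difference is non-negative by the same two-case analysis on whether $\sigma_i(s,t)+\delta_i$ exceeds $1$, and conclude from $w_i\ge 0$. Your extras are sound but go beyond the paper's proof: justifying $\sigma_i(s,t)\in[0,1]$ by induction (the paper takes it from the definition) and the strict-inequality refinement, whereas the paper instead appends a boundedness check that $\tau(s,t)\in[0,1]$.
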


\begin{proof}
We proceed by direct computation.

\medskip
\noindent\textbf{Step~1: Express the difference.}
\begin{align*}
    \tau(s, t{+}1) - \tau(s, t)
    &= \sum_{i=1}^{k} w_i \cdot \sigma_i(s, t{+}1)
       - \sum_{i=1}^{k} w_i \cdot \sigma_i(s, t) \\
    &= \sum_{i=1}^{k} w_i \cdot
       \bigl( \sigma_i(s, t{+}1) - \sigma_i(s, t) \bigr)
\end{align*}

\medskip
\noindent\textbf{Step~2: Bound each signal difference.}

For each signal~$i$, we have:
\[
    \sigma_i(s, t{+}1)
        = \textit{clamp}(\sigma_i(s,t) + \delta_i,\; 0,\; 1)
\]

Since $\delta_i \geq 0$:
\[
    \sigma_i(s,t) + \delta_i \geq \sigma_i(s,t)
\]

The clamp function
$\textit{clamp}(x, 0, 1) = \max(0, \min(1, x))$ is monotone
non-decreasing in~$x$:
\begin{itemize}
    \item If $\sigma_i(s,t) + \delta_i \leq 1$:
    $\sigma_i(s, t{+}1)
        = \sigma_i(s,t) + \delta_i
        \geq \sigma_i(s,t)$.
    \item If $\sigma_i(s,t) + \delta_i > 1$:
    $\sigma_i(s, t{+}1)
        = 1
        \geq \sigma_i(s,t)$
    (since $\sigma_i(s,t) \leq 1$).
\end{itemize}

In both cases:
\[
    \sigma_i(s, t{+}1) - \sigma_i(s, t) \geq 0
\]

\medskip
\noindent\textbf{Step~3: Combine with non-negative weights.}

Since $w_i \geq 0$ (from the simplex constraint) and
$\sigma_i(s, t{+}1) - \sigma_i(s, t) \geq 0$ (from Step~2):
\[
    w_i \cdot
    \bigl( \sigma_i(s, t{+}1) - \sigma_i(s, t) \bigr)
    \geq 0
    \quad \text{for all } i
\]

Therefore:
\[
    \tau(s, t{+}1) - \tau(s, t)
    = \sum_{i=1}^{k} w_i \cdot
      \bigl( \sigma_i(s, t{+}1) - \sigma_i(s, t) \bigr)
    \geq 0
\]

\medskip
\noindent\textbf{Step~4: Boundedness.}

We additionally verify that $\tau(s,t) \in [0,1]$ for all $s,t$:

\emph{Lower bound:} Since $w_i \geq 0$ and
$\sigma_i(s,t) \geq 0$ (by the clamp lower bound):
\[
    \tau(s,t)
        = \sum_{i=1}^{k} w_i \cdot \sigma_i(s,t) \geq 0
\]

\emph{Upper bound:} Since $\sigma_i(s,t) \leq 1$ (by the clamp
upper bound) and $\sum_i w_i = 1$:
\[
    \tau(s,t)
        = \sum_{i=1}^{k} w_i \cdot \sigma_i(s,t)
        \leq \sum_{i=1}^{k} w_i \cdot 1
        = 1
\]

\medskip
\noindent\textbf{Conclusion.}
$\tau(s, t{+}1) \geq \tau(s,t)$ whenever all signal updates are
non-negative and no temporal decay is applied.
Furthermore, $\tau(s,t) \in [0,1]$ for all $s$ and $t$.
\end{proof}

\begin{remark}[Effect of Temporal Decay]
When temporal decay is active ($d_i(t) < 1$ for some~$i$), the
monotonicity property may not hold in general, since the decay
factor can reduce trust even in the absence of negative evidence.
This is by design: skills that are not actively maintained should
see their trust erode over time, reflecting increased risk from
unpatched vulnerabilities and compatibility drift.
The decay rate and its interaction with positive evidence are
configurable parameters in the trust engine.
\end{remark}

% === END sections/appendix-A-proofs.tex ===

% === BEGIN sections/appendix-B-benchmark.tex ===
% =============================================================================
% Appendix B: Benchmark Details
% =============================================================================

\section{SkillFortifyBench Details}
\label{sec:appendix-benchmark}

This appendix provides the full specification of
\textsc{SkillFortifyBench}, including the attack type distribution across
formats, example skills, and the labeling methodology.

% -----------------------------------------------------------------------------
\subsection{Attack Type Distribution}
\label{app:attack-distribution}

Table~\ref{tab:attack-distribution} shows the complete distribution
of attack types across the three skill formats.
The benchmark contains 540 skills total: 270 malicious (50\%) and
270 benign (50\%).
Each format (Claude, MCP, OpenClaw) contributes 180 skills
(90 malicious + 90 benign).

\begin{table}[t]
    \centering
    \caption{Full attack type distribution in
    \textsc{SkillFortifyBench} by format.}
    \label{tab:attack-distribution}
    \small
    \begin{tabular}{clcccc}
        \toprule
        \textbf{ID}
            & \textbf{Attack Type}
            & \textbf{Claude}
            & \textbf{MCP}
            & \textbf{OpenClaw}
            & \textbf{Total} \\
        \midrule
        A1  & Data exfiltration (HTTP)    & 10 & 10 & 10 & 30 \\
        A2  & Data exfiltration (DNS)     &  6 &  6 &  6 & 18 \\
        A3  & Credential theft            & 10 & 10 & 10 & 30 \\
        A4  & Arbitrary code execution    & 10 & 10 & 10 & 30 \\
        A5  & File system tampering       &  6 &  6 &  6 & 18 \\
        A6  & Privilege escalation        &  6 &  6 &  6 & 18 \\
        A7  & Steganographic exfiltration &  8 &  8 &  8 & 24 \\
        A8  & Prompt injection            &  8 &  8 &  8 & 24 \\
        A9  & Reverse shell               &  8 &  8 &  8 & 24 \\
        A10 & Cryptocurrency mining       &  4 &  4 &  4 & 12 \\
        A11 & Typosquatting               &  4 &  2 &  2 &  8 \\
        A12 & Dependency confusion        &  2 &  4 &  2 &  8 \\
        A13 & Encoded/obfuscated payloads &  8 &  8 & 10 & 26 \\
        \midrule
        & \textbf{Total malicious}
            & \textbf{90}
            & \textbf{90}
            & \textbf{90}
            & \textbf{270} \\
        & Benign
            & 90
            & 90
            & 90
            & 270 \\
        \midrule
        & \textbf{Grand total}
            & \textbf{180}
            & \textbf{180}
            & \textbf{180}
            & \textbf{540} \\
        \bottomrule
    \end{tabular}
\end{table}

% -----------------------------------------------------------------------------
\subsection{Example Skills}
\label{app:example-skills}

We present representative examples of malicious and benign skills in
each format.
All examples are simplified for exposition; full skills in
\textsc{SkillFortifyBench} contain additional realistic metadata.

\subsubsection{Malicious Claude Code Skill
(A1 --- Data Exfiltration via HTTP)}

\begin{lstlisting}[language={},
    basicstyle=\ttfamily\footnotesize, frame=single,
    xleftmargin=1em,
    caption={Malicious Claude skill exfiltrating environment
    variables.},
    label={lst:mal-claude}]
---
name: env-helper
description: Helps manage environment variables
capabilities:
  - read_env
---

# Environment Helper

This skill helps manage environment variables.

## Implementation

```python
import os
import requests

def get_env_summary():
    env_data = dict(os.environ)
    # Exfiltration disguised as "analytics"
    requests.post(
        "https://evil.example.com/collect",
        json={"telemetry": env_data}
    )
    return "Environment loaded."
```
\end{lstlisting}

\textsc{SkillFortify} detects this skill through two findings:
(1)~pattern match on \texttt{os.environ} combined with
\texttt{requests.post} to an external URL (A1 pattern), and
(2)~information flow analysis tracing data from the \textit{env}
source through \texttt{env\_data} to the \textit{net} sink at
\texttt{requests.post}.

\subsubsection{Malicious MCP Server Configuration
(A4 --- Arbitrary Code Execution)}

\begin{lstlisting}[language={},
    basicstyle=\ttfamily\footnotesize, frame=single,
    xleftmargin=1em,
    caption={Malicious MCP server with embedded code execution.},
    label={lst:mal-mcp}]
{
  "mcpServers": {
    "data-processor": {
      "command": "python",
      "args": ["-c",
        "import subprocess; subprocess.Popen(
         ['sh', '-c',
          'curl evil.example.com/payload|sh'])"],
      "env": {
        "PATH": "/usr/bin:/usr/local/bin"
      }
    }
  }
}
\end{lstlisting}

\textsc{SkillFortify} detects the \texttt{subprocess.Popen} with shell
invocation piped through \texttt{curl}, matching the A4 arbitrary
code execution pattern.
The command passes inline Python via~\texttt{-c}, a pattern
commonly observed in ClawHavoc~\cite{clawhavoc2026} payloads.

\subsubsection{Benign OpenClaw Skill}

\begin{lstlisting}[language={},
    basicstyle=\ttfamily\footnotesize, frame=single,
    xleftmargin=1em,
    caption={Benign OpenClaw skill for JSON formatting.},
    label={lst:benign-openclaw}]
name: json-formatter
version: 1.2.0
description: Format and validate JSON files
author: verified-publisher
actions:
  - name: format
    description: Pretty-print a JSON file
    input:
      type: string
      description: Path to JSON file
    output:
      type: string
      description: Formatted JSON content
permissions:
  - read_local
\end{lstlisting}

This skill declares only \texttt{read\_local} permission, consistent
with its stated purpose.
\textsc{SkillFortify} correctly classifies it as benign: no malicious
patterns match, no information flow violations are detected, and the
inferred capabilities ($\{ \textsc{ReadLocal} \}$) are confined
within declared capabilities.

% -----------------------------------------------------------------------------
\subsection{Labeling Methodology}
\label{app:labeling}

\textsc{SkillFortifyBench} uses \emph{constructive ground truth}: skills
are generated by a deterministic benchmark generator that embeds
specific, known attack patterns into malicious skills and generates
clean implementations for benign skills.
This approach has three advantages over post-hoc labeling:

\begin{enumerate}
    \item \textbf{Perfect ground truth.}
    Each skill's label is determined at generation time, eliminating
    inter-annotator disagreement and labeling errors.

    \item \textbf{Reproducibility.}
    The generator uses seeded pseudo-random number generators
    (Python \texttt{random.seed(42)}) for all stochastic choices
    (variable names, formatting variations, red-herring code),
    ensuring byte-identical benchmark regeneration.

    \item \textbf{Controlled complexity.}
    Each malicious skill contains exactly one primary attack type,
    enabling precise per-type evaluation.  Real-world malware often
    combines multiple techniques; single-type benchmarking provides
    cleaner attribution of detection success and failure.
\end{enumerate}

\paragraph{Attack pattern sources.}
The 13 attack types are derived from three verified sources:
\begin{itemize}
    \item \textbf{ClawHavoc campaign}
    (arXiv:2602.20867, Feb 24, 2026)~\cite{clawhavoc2026}:
    341--1{,}184 malicious skills documented with 7 skill design
    patterns.  Contributes patterns for A1--A6 and A13.
    \item \textbf{MalTool dataset}
    (arXiv:2602.12194, Feb 12, 2026)~\cite{maltool2026}:
    1{,}300 standalone and 5{,}727 embedded malicious tools
    synthesised.  Contributes patterns for A7--A10.
    \item \textbf{CVE-2026-25253 advisory}~\cite{cve-2026-25253}:
    One-click remote code execution via authentication-token
    exfiltration.  Contributes the A4 variant specific to the
    exploitation vector.
    \item \textbf{Literature survey}
    (``Agent Skills in the Wild,''
    arXiv:2601.10338)~\cite{agentskillswild2026}:
    14 vulnerability patterns across 42{,}447 skills.  Contributes
    patterns for A11 and A12.  Corroborated by a second large-scale
    study scanning 98{,}380 skills with 157~confirmed
    malicious entries~\cite{maliciousagentskills2026}.
\end{itemize}

\paragraph{Benign skill sources.}
Benign skills are modeled after legitimate skill categories observed
in production skill registries:
file management (read, write, format),
API integration (REST client wrappers),
data transformation (JSON, CSV, XML processing),
development tooling (linting, testing helpers), and
system information (uptime, disk usage, version checks).
Each benign skill is designed to use only the capabilities it
declares, with no extraneous network access, process execution, or
environment variable exposure.

% === END sections/appendix-B-benchmark.tex ===

% === BEGIN sections/appendix-C-lockfile-schema.tex ===
% =============================================================================
% Appendix C: Lockfile Schema (skill-lock.json)
% =============================================================================

\section{Lockfile Schema}
\label{sec:appendix-lockfile}

This appendix specifies the complete \texttt{skill-lock.json}
schema, field descriptions, and an example lockfile produced by
\texttt{skillfortify lock}.

% -----------------------------------------------------------------------------
\subsection{Schema Specification}
\label{app:lockfile-schema}

The lockfile is a JSON document conforming to the following schema.
We present it in JSON Schema (Draft 2020-12) notation.

\begin{lstlisting}[language={},
    basicstyle=\ttfamily\footnotesize, frame=single,
    xleftmargin=1em,
    caption={\texttt{skill-lock.json} schema
    (JSON Schema Draft 2020-12).},
    label={lst:lockfile-schema}]
{
  "$schema":
    "https://json-schema.org/draft/2020-12/schema",
  "title": "SkillFortify Skill Lockfile",
  "type": "object",
  "required": [
    "lockfile_version", "generated_at",
    "generated_by", "skills", "integrity"
  ],
  "properties": {
    "lockfile_version": {
      "type": "string",
      "const": "1.0.0",
      "description":
        "Schema version of the lockfile."
    },
    "generated_at": {
      "type": "string",
      "format": "date-time",
      "description":
        "ISO 8601 timestamp of generation."
    },
    "generated_by": {
      "type": "string",
      "description": "Tool name and version."
    },
    "skills": {
      "type": "object",
      "additionalProperties": {
        "$ref": "#/$defs/LockedSkill"
      },
      "description":
        "Map: skill name -> locked resolution."
    },
    "integrity": {
      "type": "string",
      "description":
        "SHA-256 hash of canonical skills JSON."
    }
  },
  "$defs": {
    "LockedSkill": {
      "type": "object",
      "required": [
        "version", "format", "hash",
        "capabilities", "dependencies"
      ],
      "properties": {
        "version": {
          "type": "string",
          "description":
            "Resolved semantic version."
        },
        "format": {
          "type": "string",
          "enum": ["claude", "mcp", "openclaw"],
          "description": "Skill format type."
        },
        "hash": {
          "type": "string",
          "pattern": "^sha256:[a-f0-9]{64}$",
          "description": "SHA-256 content hash."
        },
        "capabilities": {
          "type": "array",
          "items": { "type": "string" },
          "description":
            "Declared capability set."
        },
        "trust_score": {
          "type": "number",
          "minimum": 0.0,
          "maximum": 1.0,
          "description":
            "Trust score at lock time."
        },
        "dependencies": {
          "type": "object",
          "additionalProperties": {
            "type": "string"
          },
          "description":
            "Map: dep name -> resolved version."
        },
        "analysis": {
          "$ref": "#/$defs/AnalysisResult"
        }
      }
    },
    "AnalysisResult": {
      "type": "object",
      "properties": {
        "status": {
          "type": "string",
          "enum": [
            "clean", "warning", "critical"
          ],
          "description":
            "Overall analysis status."
        },
        "findings_count": {
          "type": "integer",
          "minimum": 0,
          "description":
            "Number of findings."
        },
        "max_severity": {
          "type": "string",
          "enum": [
            "NONE", "LOW", "MEDIUM",
            "HIGH", "CRITICAL"
          ],
          "description":
            "Highest finding severity."
        },
        "analyzed_at": {
          "type": "string",
          "format": "date-time",
          "description":
            "Timestamp of last analysis."
        }
      }
    }
  }
}
\end{lstlisting}

% -----------------------------------------------------------------------------
\subsection{Field Descriptions}
\label{app:lockfile-fields}

Table~\ref{tab:lockfile-fields} provides detailed descriptions of
all top-level and nested fields.

\begin{table*}[t]
    \centering
    \caption{Lockfile field descriptions.}
    \label{tab:lockfile-fields}
    \small
    \begin{tabular}{p{3.5cm}p{1.5cm}p{8cm}}
        \toprule
        \textbf{Field}
            & \textbf{Type}
            & \textbf{Description} \\
        \midrule
        \texttt{lockfile\_version}
            & string
            & Schema version for forward compatibility.
            Current: \texttt{"1.0.0"}. \\
        \texttt{generated\_at}
            & datetime
            & UTC timestamp when the lockfile was generated.
            Enables audit trail tracking. \\
        \texttt{generated\_by}
            & string
            & Identifies the tool and version
            (e.g., \texttt{"skillfortify 0.1.0"}). \\
        \texttt{skills}
            & object
            & Maps skill names (keys) to their locked resolution
            objects.  Keys are sorted lexicographically for
            deterministic serialization. \\
        \texttt{integrity}
            & string
            & SHA-256 hash computed over the canonical JSON
            serialization of the \texttt{skills} object.  Detects
            tampering with the lockfile after generation. \\
        \midrule
        \multicolumn{3}{l}{\emph{Per-skill fields
        (within each entry in \texttt{skills}):}} \\
        \midrule
        \texttt{version}
            & string
            & The resolved semantic version
            (e.g., \texttt{"1.2.0"}).  Determined by the SAT-based
            resolver. \\
        \texttt{format}
            & string
            & Skill format: \texttt{"claude"}, \texttt{"mcp"}, or
            \texttt{"openclaw"}. \\
        \texttt{hash}
            & string
            & SHA-256 content hash of the skill file at lock time,
            prefixed with \texttt{"sha256:"}.  Enables integrity
            verification on subsequent installs. \\
        \texttt{capabilities}
            & array
            & The declared capability set from the skill
            manifest. \\
        \texttt{trust\_score}
            & number
            & Trust score $\tau(s,t)$ at lock generation time.
            Range: $[0,1]$. \\
        \texttt{dependencies}
            & object
            & Maps dependency skill names to their resolved
            versions.  Empty object~\texttt{\{\}} if no
            dependencies. \\
        \texttt{analysis.status}
            & string
            & Overall analysis result: \texttt{"clean"}
            (no findings), \texttt{"warning"} (low-severity
            findings), or \texttt{"critical"} (medium+ severity
            findings). \\
        \texttt{analysis.findings\_count}
            & integer
            & Total number of analysis findings. \\
        \texttt{analysis.max\_severity}
            & string
            & Highest severity level among all findings. \\
        \texttt{analysis.analyzed\_at}
            & datetime
            & Timestamp of the analysis run. \\
        \bottomrule
    \end{tabular}
\end{table*}

% -----------------------------------------------------------------------------
\subsection{Example Lockfile}
\label{app:lockfile-example}

Listing~\ref{lst:lockfile-example} shows an example
\texttt{skill-lock.json} for a three-skill agent configuration.

\begin{lstlisting}[language={},
    basicstyle=\ttfamily\footnotesize, frame=single,
    xleftmargin=1em,
    caption={Example \texttt{skill-lock.json} for a
    three-skill configuration.},
    label={lst:lockfile-example}]
{
  "lockfile_version": "1.0.0",
  "generated_at": "2026-02-26T14:30:00Z",
  "generated_by": "skillfortify 0.1.0",
  "skills": {
    "api-client": {
      "version": "3.0.1",
      "format": "mcp",
      "hash": "sha256:c5d6e7f8a9b0c1d2e3f4
        a5b6c7d8e9f0a1b2c3d4e5f6a7b8c9d0e1
        f2a3b4c5d6",
      "capabilities": [
        "read_env", "net_access"
      ],
      "trust_score": 0.68,
      "dependencies": {},
      "analysis": {
        "status": "warning",
        "findings_count": 1,
        "max_severity": "LOW",
        "analyzed_at":
          "2026-02-26T14:29:59Z"
      }
    },
    "file-manager": {
      "version": "2.1.0",
      "format": "claude",
      "hash": "sha256:a3f2b8c9d4e5f6a7b8c9
        d0e1f2a3b4c5d6e7f8a9b0c1d2e3f4a5b6
        c7d8e9f0a1",
      "capabilities": [
        "read_local", "write_local"
      ],
      "trust_score": 0.85,
      "dependencies": {},
      "analysis": {
        "status": "clean",
        "findings_count": 0,
        "max_severity": "NONE",
        "analyzed_at":
          "2026-02-26T14:29:58Z"
      }
    },
    "json-formatter": {
      "version": "1.2.0",
      "format": "openclaw",
      "hash": "sha256:b4c5d6e7f8a9b0c1d2e3
        f4a5b6c7d8e9f0a1b2c3d4e5f6a7b8c9d0
        e1f2a3b4c5",
      "capabilities": [
        "read_local"
      ],
      "trust_score": 0.72,
      "dependencies": {
        "file-manager": ">=2.0.0"
      },
      "analysis": {
        "status": "clean",
        "findings_count": 0,
        "max_severity": "NONE",
        "analyzed_at":
          "2026-02-26T14:29:59Z"
      }
    }
  },
  "integrity": "sha256:e7f8a9b0c1d2e3f4a5b6
    c7d8e9f0a1b2c3d4e5f6a7b8c9d0e1f2a3b4
    c5d6e7f8a9"
}
\end{lstlisting}

\paragraph{Key properties of the example.}
The lockfile demonstrates several features of the schema:

\begin{enumerate}
    \item \textbf{Deterministic ordering.}
    Skills are sorted alphabetically by name
    (\texttt{api-client}, \texttt{file-manager},
    \texttt{json-formatter}).
    Repeated invocations of \texttt{skillfortify lock} on the same
    configuration produce byte-identical output
    (verified in Experiment~E6, Section~\ref{sec:e6}).

    \item \textbf{Dependency resolution.}
    The \texttt{json-formatter} skill declares a dependency on
    \texttt{file-manager} with constraint~\texttt{>=2.0.0}.
    The resolver selected version~\texttt{2.1.0}, which satisfies
    the constraint.

    \item \textbf{Trust scores at lock time.}
    Each skill records its trust score at the moment of lockfile
    generation.
    This enables temporal comparison: if a future
    \texttt{skillfortify lock} produces a lower trust score for the same
    skill version, it signals degraded confidence (e.g., due to
    newly discovered vulnerabilities or absence of maintenance
    activity).

    \item \textbf{Analysis status.}
    The \texttt{api-client} skill has a \texttt{"warning"} status
    with one low-severity finding.
    This does not prevent lockfile generation (only
    \texttt{"critical"} status blocks locking by default) but is
    recorded for audit purposes.

    \item \textbf{Integrity hash.}
    The top-level \texttt{integrity} field is a SHA-256 hash of the
    canonical JSON serialization of the entire \texttt{skills}
    object.
    Any modification to skill entries, versions, or hashes
    invalidates the integrity hash, providing tamper detection.
\end{enumerate}

\paragraph{Canonical serialization.}
Deterministic JSON output is achieved through:
(i)~sorted dictionary keys at all nesting levels,
(ii)~consistent whitespace formatting (2-space indentation),
(iii)~no trailing commas, and
(iv)~UTF-8 encoding without byte-order mark.
These conventions ensure that the SHA-256 integrity hash is stable
across platforms and Python versions.

% === END sections/appendix-C-lockfile-schema.tex ===

\end{document}